\newtheorem{thm}{Theorem}
\newtheorem{lem}{Lemma}
\newtheorem{rem}{Remark}
\newtheorem{cor}{Corollary}
\newcommand*{\di}{\, \mathrm{d} }
\title{Pathwise approximation of Feynman path integrals using simple random walks}
\author{Tam\'as Szabados\footnote{Address:
Department of Mathematics, Budapest University of Technology and Economics, Muegyetem rkp. 3, H
ep, 5 em, Budapest, 1521, Hungary, e-mail: szabados@math.bme.hu, telephone: +36 1 463 1111/ext. 5905, fax: +36 1 463 1677} \\
Budapest University of Technology and Economics}
\begin{document}

\maketitle

\begin{abstract}
The aim of the presented research is to give a rigorous mathematical approach to Feynman path integrals based on strong (pathwise) approximations based on simple random walks.
\end{abstract}


\section{Introduction} \label{se:Intro}

The \emph{Schr\"odinger equation} describing the non-relativistic motion of a single particle with
mass $m=1$ (and $\hbar = 1$) is
\begin{equation}\label{eq:Sch}
\frac{1}{i} \frac{\partial \psi}{\partial t} = \frac12 \Delta \psi - V(x) \psi, \qquad \psi(0,x) =
g(x),
\end{equation}
where $x \in \mathbb{R}^d$, $t \ge 0$, and $\psi(t, \cdot) \in L^2(\mathbb{R}^d)$ is the \emph{complex probability amplitude} of the particle. The probability density of finding the particle at the point $x$ at time $t$ is the squared modulus of the complex amplitude, $|\psi(t,x)|^2$. The potential $V$ and the initial condition $g$ should fulfill suitable assumptions for the existence and uniqueness of a solution.

Based on physics intuitions, Richard Feynman \cite{FEY48} suggested that the solution of this
equation can be given in the form of a path integral :
\begin{equation}\label{eq:pathintc}
\psi(t,x)= \frac{1}{Z} \int_{\Omega^{x}[0,t]} \exp\left\{i \int_{0}^{t} \left( \frac12
\left(\frac{d\omega }{dt}\right)^2 - V(\omega (s)) \right) \di s  \right\} g(\omega (t)) \di \omega ,
\end{equation}
where $\Omega^{x}[0,t]$ is the set of all possible trajectories $\omega $ of the particle, starting
from $\omega (0)=x$, over the time interval $[0,t]$. The expression after the second integral is the Lagrangian: the kinetic energy minus the potential energy of the particle; its integral is the action integral, which should be extremal along the path of a particle in classical physics. The symbol $\di \omega $ is a  mathematically non-existing Lebesgue-type product measure $\prod_{0 \le s \le t} \di \omega (s)$ over the infinite dimensional vector space of trajectories. The normalizing constant $Z$ cannot have a well-defined finite value either. However, the starting point of Feynman's seminal paper \cite{FEY48} is a discrete time approximation that makes perfect sense. This paper tries to follow a similar approach.

Mark Kac \cite{KAC49,KAC51} realized that one can give Feynman's idea a rigorous mathematical meaning
for the Schr\"odinger-type real-valued differential equation
\begin{equation}\label{eq:Schr}
\frac{\partial \psi}{\partial t} = \frac12 \Delta \psi - V(x) \psi, \qquad \psi(0,x) = g(x),
\end{equation}
when the unknown function $\psi$ is real valued. From the physics point of view, here $\psi$ can be thought of as a function of an imaginary time $it$. In the path integral the exponential of the kinetic energy $\exp(-\frac12 (d\omega /dt)^2)$ can be moved into the measure that would become the Wiener measure $\mathbb{P}^x$ over the space of continuous trajectories $C[0,t]$ starting from the point $x$. This way one arrives at a rigorous path integral, the celebrated \emph{Feynman--Kac formula},
\begin{equation}\label{eq:pathintr}
\psi(t,x)= \int_{C[0,t]} \exp\left(- \int_{0}^{t} V(\omega (s)) \di s  \right) g(\omega (t)) \di
\mathbb{P}^x(\omega) ,
\end{equation}
which then really gives the unique solution of equation (\ref{eq:Schr}) when, for example, $V$ is bounded below, piecewise continuous, and $g$ is integrable.

There exists a huge literature that intend to give solid mathematical foundation to Feynman's original path integral (\ref{eq:pathintc}); here is a sample of some very significant ones:
\cite{GJ56,ITO60,CAM60,NEL64,CS83,ALB97,KOL05}. A good description of different rigorous approaches can be found in \cite{KOL05}.

The aim of the present paper is to give a rigorous approach to complex-valued path integrals based on a strong (pathwise) approximation by simple, symmetric random walks. In the real-valued case, weakly convergent (that is, convergent in distribution) approximations based on simple, symmetric random walks were given, for example, in \cite{KAC49,CSA93,NSZ16}. For simplicity, the paper is restricted to one spatial dimension, $d=1$, but could be extended to any finite dimension $d$.

The present work was helped by many numerical experiments and computations, using Wolfram Mathematica and Maple.


\section{Complex measure walk} \label{se:Complex}

Our first intention is to find a nearest neighbor, symmetric random walk approximation to the complex-valued case (\ref{eq:Sch}) and (\ref{eq:pathintc}). Fix a positive integer $n$ and take the measurable space $(\mathbb{R}^n, \mathcal{B}^n)$, where $\mathcal{B}^n$ denotes the Borel $\sigma$-field in $\mathbb{R}^n$. Take a sequence $(X_r)_{r=1}^n$, the steps of a symmetric nearest neighbor random walk; each $X_r$ has the set of possible values $\{-1, 0, 1\}$. Define partial sums by
\[
S^x_0 = x \in \mathbb{Z}, \qquad S^x_{k} = x+\sum_{r=1}^k X_r \qquad (1 \le k \le n) .
\]
When $x=0$, we simply write $S_n$.  Now the distribution of a step $X_r$ on $\{-1, 0, 1\}$ will be given by \emph{a complex measure} $\mu$ concentrated on $\{-1, 0, 1\}$:
\begin{equation}\label{eq:compl_ampl}
\mu(X_r = 1)= \mu(X_r = -1) = p \in \mathbb{C}, \qquad \mu(X_r = 0)= q \in \mathbb{C}.
\end{equation}
Presently, $p$ and $q$ are unknowns that we intend to determine below.

Since we want to have independent and identically distributed steps, the complex measure on $(\mathbb{R}^n, \mathcal{B}^n)$ corresponding to the sequence $(X_1, \dots, X_n)$ will be the $n$th power $\mu^n$. If it causes no ambiguity, this product measure will also be denoted by $\mu$.

Then $(S^x_k)_{k = 0}^{n}$ will be called a \emph{complex measure walk}. Notice that the existence of an infinite product measure is not claimed; consequently, our complex measure walks will have finite lengths. If $f : \mathbb{R}^n \to \mathbb{R}$ is an arbitrary Borel-measurable function, then the complex distribution of the random variable $Y = f(X_1, \dots , X_n)$ is determined by the standard rule
\[
\mu(Y \in A) := \mu^n\{(x_1, \dots , x_n) \in \mathbb{R}^n : f(x_1, \dots , x_n) \in A \} = \mu^n(f^{-1}(A)) .
\]
for any Borel set $A \in \mathcal{B}$. This rule determines the complex distribution of the random walk $S^x_k$ as well.

Denote expectations with respect to $\mu$ by $\mathbb{E}_{\mu}$. For example,
\begin{equation}\label{eq:compl_expect}
\mathbb{E}_{\mu} Y := \int_{\mathbb{R}^n} f(x_1, \dots, x_n) \di \mu ,
\end{equation}
if $Y = f(X_1, \dots, X_n)$ as above.

Now take two arbitrary $\mathbb{Z} \to \mathbb{R}$ functions $V$ and $g$. Then the complex difference equation corresponding to (\ref{eq:Sch}) is going to be
\begin{multline}\label{eq:differ_compl}
\frac{1}{i} \left(\psi(k+1,x) - \psi(k,x)\right)\\
 = \frac12 e^{-iV(x)} \left( \psi(k,x+1) - 2 \psi(k,x) + \psi(k,x-1)\right) \\
 + \frac{1}{i} \left(e^{-iV(x)}-1\right) \psi(k,x).
\end{multline}
The tentative solution of this difference equation is obtained by using the idea of Mark Kac to move the  exponential of the kinetic energy factor $\exp(i \frac12 (d\omega/dt)^2)$ (or its discrete time approximation) into the complex measure $\mu$:
\begin{equation}\label{eq:psik_compl}
 \psi(k,x) = \mathbb{E}_{\mu}\left\{\exp\left(- i \sum_{r=0}^{k-1} V(S^x_r)\right) g(S^x_k)\right\} \qquad (0 \le k \le n, x \in \mathbb{Z}).
\end{equation}

\begin{lem} \label{le:compl}
Under the previous conditions, (\ref{eq:psik_compl}) is a solution of the difference equation (\ref{eq:differ_compl}) if and only if $p=i/2$ and $q=1-i$.

Then (\ref{eq:psik_compl}) is the unique solution of the above difference equation under the initial condition
$\psi(0, x) = g(x)$  $(x \in \mathbb{Z})$ .
\end{lem}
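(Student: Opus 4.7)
The plan is to derive a recurrence for $\psi(k+1,x)$ directly from the definition (\ref{eq:psik_compl}) and then match it term-by-term with the right-hand side of the difference equation (\ref{eq:differ_compl}), reading off the conditions on $p$ and $q$. First I would isolate the $r=0$ term, noting that $\exp(-iV(S^x_0)) = e^{-iV(x)}$ is deterministic, so
\[
\psi(k+1,x) = e^{-iV(x)}\, \mathbb{E}_\mu\!\left\{\exp\!\left(-i\sum_{r=1}^{k} V(S^x_r)\right) g(S^x_{k+1})\right\}.
\]
Then I would condition on the first step $X_1 \in \{-1,0,1\}$. Because $\mu$ is the product measure $\mu^n$, the steps $X_2, \dots, X_{k+1}$ are independent of $X_1$ and have the same joint complex distribution as $X_1, \dots, X_k$; the time-shifted walk $(X_{j+1})$ starting from $x+X_1$ behaves exactly like the original walk starting from $x+X_1$. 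This is a straightforward but slightly delicate Markov-type splitting, and it will be the one point where I want to state carefully that the definition of $\mathbb{E}_\mu$ via (\ref{eq:compl_expect}) together with the product structure of $\mu^n$ allows the usual conditioning identity to go through unchanged in the complex setting.

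After this splitting the recursion collapses to
\[
\psi(k+1,x) = e^{-iV(x)}\bigl[p\,\psi(k,x+1) + q\,\psi(k,x) + p\,\psi(k,x-1)\bigr].
\]
In parallel I would algebraically rearrange (\ref{eq:differ_compl}): multiplying by $i$ and collecting the $\psi(k,x\pm 1)$ and $\psi(k,x)$ terms yields
\[
\psi(k+1,x) = e^{-iV(x)}\!\left[\tfrac{i}{2}\bigl(\psi(k,x+1)+\psi(k,x-1)\bigr) + (1-i)\,\psi(k,x)\right].
\]
Comparing coefficients gives $p=i/2$ and $q=1-i$, which handles the ``if'' direction.

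For the ``only if'' direction I would argue that the identity of the two recurrences must hold for arbitrary $V$ and $g$. Choosing, say, $V \equiv 0$ together with three initial data $g$ supported on single points $x-1$, $x$, $x+1$ (taking $k=0$) makes the three values $\psi(0,x-1)$, $\psi(0,x)$, $\psi(0,x+1)$ independently assignable, and the two expressions for $\psi(1,x)$ can coincide for all such choices only if $p=i/2$ and $q=1-i$. Finally, the uniqueness statement is immediate: once $p,q$ are fixed, the difference equation (\ref{eq:differ_compl}) determines $\psi(k+1,x)$ explicitly in terms of $\psi(k,x-1)$, $\psi(k,x)$, $\psi(k,x+1)$, so a trivial induction on $k$ starting from $\psi(0,\cdot)=g$ pins down the solution on $\{0,\dots,n\}\times\mathbb{Z}$. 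The only non-routine point in the whole argument is justifying the conditioning step for a complex measure, so I would flag that explicitly rather than treat it as self-evident.
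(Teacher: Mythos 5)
Your proposal is correct and follows essentially the same route as the paper: separate the first step of the walk using the product structure of $\mu$, obtain the recursion $\psi(k+1,x)=e^{-iV(x)}[p\,\psi(k,x+1)+q\,\psi(k,x)+p\,\psi(k,x-1)]$, match coefficients with the rearranged difference equation to get $p=i/2$, $q=1-i$, and prove uniqueness by induction on $k$. Your extra care on the ``only if'' direction (test functions $g$ supported at single points) and on justifying the conditioning step for the complex product measure is a welcome sharpening of details the paper leaves implicit, but it is not a different argument.
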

\begin{proof}
Let us separate the first step of the complex measure walk:
\begin{multline}\label{eq:cdiscrecur}
\psi(k+1,x) = \mathbb{E}_{\mu}\left\{\exp\left(- i \sum_{r=0}^{k} V(S^x_r)\right) g(S^x_{k+1})\right\} \\
= e^{-iV(x)} \sum_{j\in\{-1,0,1\}} \mathbb{E}_{\mu}\left\{\exp\left(- i \sum_{r=0}^{k-1} V(S^{x+j}_r)\right) g(S^{x+j}_k) \right\} \mu(X_1=j) \\
= e^{-iV(x)} \left\{\psi(k, x-1) p + \psi(k, x) q + \psi(k, x+1) p \right\}.
\end{multline}

That is,
\begin{multline*}
\frac{1}{i} \left(\psi(k+1, x) - \psi(k,x) \right)
 = \frac{p}{i} e^{-iV(x)} \left\{\psi(k, x+1) -2 \psi(k,x) + \psi(k, x-1) \right\} \\
 + \frac{1}{i} \psi(k, x) \left((2p+q) e^{-iV(x)} - 1\right),
\end{multline*}
which is identical with (\ref{eq:differ_compl}) if and only if $p=i/2$ and $q=1-i$.

The uniqueness of this solution under the initial condition follows by induction over $k$ using the recursion (\ref{eq:cdiscrecur}) and starting with the initial condition.
\end{proof}

While this deduction of the complex distribution of the steps of the walk has been very simple, still its result has a huge impact on everything that follows afterwards. From now on, we always assume that $p=i/2$ and $q=1-i$. Observe the interesting fact that while the total variation of the complex measure $\mu$ for a single step is $1+\sqrt{2} > 1$ and so the total variation for the product measure with $n$ steps goes to $\infty$ as $n \to \infty$, still we have
\[
\mu(\mathbb{R}^n) = \mu(\{-1,0,1\}^n)=1 \qquad  \text{for any} \qquad n \ge 1 .
\]
The latter follows from the fact that $\mu(\mathbb{R}) = \mu(\{-1,0,1\}) = 1$ for a single step and $\mu$ was defined as its $n$th power for $n \ge 1$ steps.

Now let us determine the resulting complex law of the partial sums $(S_{\ell})_{\ell=0}^n$, $n \ge 1$, when the initial point is $S_0=0$. This deduction is based on the standard observation that a path from the origin to a point $j \in\mathbb{Z}$ in $\ell \ge 0$ steps is the result of a number (say, $r$) horizontal steps, and the difference of the remaining up and down steps must be $j$ ($|j| \le \ell$):
\begin{multline}\label{eq:phi_S_ell}
\mu(S_\ell=j) = \sum_{r=0}^{\ell-|j|} \binom{\ell}{r} \binom{\ell-r}{\frac{\ell-r+j}{2}} \left(\frac{i}{2}\right)^{\ell-r} (1-i)^r \\
= i^\ell 2^{-\ell} \ell! \sum_{r=0}^{\ell-|j|} \frac{(-1)^r 2^r (1+i)^r}{r! \, \frac{\ell-r-j}{2}! \, \frac{\ell-r+j}{2}!} .
\end{multline}
Here we used the convention that a term is $0$ whenever $\frac{\ell-r+j}{2}$ is not an integer. It follows from the above argument about the product measure that
\begin{equation}\label{eq:sumphi}
\sum_{j=-\ell}^{\ell} \mu(S_{\ell} = j) = 1 .
\end{equation}

Not surprisingly, formula (\ref{eq:phi_S_ell}) for $\mu(S_{\ell}=j)$ can be expressed in terms of (terminating) hypergeometric functions as well. Recall (see e.g. \cite{EMOT53}) that a hypergeometric function is defined by a series
\begin{equation}\label{eq:hypgeo}
{}_2 F_1(a,b;c;z) = \sum_{r=0}^{\infty} \frac{(a)_r (b)_r}{(c)_r} \frac{z^r}{r!},
\end{equation}
where we used the Pochhammer symbol $(x)_r:=x(x+1) \cdots (x+r-1)$ if $r > 0$; $(x)_r:=1$ if $r=0$.
\begin{lem} \label{le:hypgeo}
For $n \ge 0$  and $|m| \le n$ we have
\begin{equation*}
\mu(S_{2n}=2m)=  (-1)^n \binom{2n}{n+m} 2^{-2n} \; {}_2 F_1(-n-m,-n+m;\frac12;2i),
\end{equation*}
\begin{multline*}
\mu(S_{2n}=2m+1) \\
= (-1)^n (-2 -2i) (n-m) \binom{2n}{n+m} 2^{-2n} \; {}_2 F_1(-n-m,-n+m+1;\frac32;2i),
\end{multline*}
\begin{multline*}
\mu(S_{2n+1}=2m) \\
= i (-1)^n (-2 -2i) (n-m+1) \binom{2n+1}{n+m} 2^{-2n-1}  \; {}_2 F_1(-n-m,-n+m+1;\frac32;2i);
\end{multline*}
and for $-n-1 \le m \le n$,
\begin{multline*}
\mu(S_{2n+1}=2m+1) \\
= i (-1)^n \binom{2n+1}{n+m+1} 2^{-2n-1} \; {}_2 F_1(-n-m-1,-n+m;\frac12;2i).
\end{multline*}
\end{lem}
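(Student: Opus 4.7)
The plan is to rewrite the finite sum in (\ref{eq:phi_S_ell}) as a terminating ${}_2F_1$ series case by case according to the parities of $\ell$ and $j$. The decisive algebraic identity is $(1+i)^2 = 2i$, which collapses $(-1)^r 2^r (1+i)^r$ into $(8i)^s$ when $r = 2s$ is even and into $-2(1+i)(8i)^s$ when $r = 2s+1$ is odd. Together with the Pochhammer evaluations
\[
(-N)_s = \frac{(-1)^s N!}{(N-s)!}, \qquad \left(\tfrac12\right)_s = \frac{(2s)!}{4^s s!}, \qquad \left(\tfrac32\right)_s = \frac{(2s+1)!}{4^s s!},
\]
these are all the algebraic ingredients needed: the first identity converts a ratio of factorials into a Pochhammer symbol, and the latter two convert $(8i)^s/(2s)!$ into $(2i)^s / ((\tfrac12)_s s!)$ and $(8i)^s/(2s+1)!$ into $(2i)^s / ((\tfrac32)_s s!)$, exactly the shape of the hypergeometric summand (\ref{eq:hypgeo}).

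In each of the four cases the summand of (\ref{eq:phi_S_ell}) vanishes unless $\tfrac{\ell - r \pm j}{2}$ are non-negative integers, forcing $r$ to have the same parity as $\ell + j$. For the two cases $(2n, 2m)$ and $(2n+1, 2m+1)$ with $\ell \equiv j \pmod 2$, I would substitute $r = 2s$; the $(2s)!$ that then appears in the denominator pairs with $(\tfrac12)_s$, yielding the ${}_2F_1(\cdot,\cdot;\tfrac12;2i)$ shape of the first and fourth formulas. For the two cases $(2n, 2m+1)$ and $(2n+1, 2m)$ with $\ell \not\equiv j \pmod 2$, I would substitute $r = 2s+1$; the extracted $-2(1+i)$ contributes the common prefactor $(-2-2i)$, and the $(2s+1)!$ pairs with $(\tfrac32)_s$, yielding the ${}_2F_1(\cdot,\cdot;\tfrac32;2i)$ shape of the second and third formulas. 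The factor $i^\ell$ from (\ref{eq:phi_S_ell}) simultaneously contributes $(-1)^n$ when $\ell = 2n$ and $i(-1)^n$ when $\ell = 2n+1$.

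After these substitutions each sum has the form $\sum_s (8i)^s / [D_s (N_1 - s)! (N_2 - s)!]$ with $D_s \in \{(2s)!, (2s+1)!\}$ and explicit non-negative integers $N_1, N_2$ obtained as the values of $\tfrac{\ell - r \pm j}{2}$ at $s = 0$. Pulling out $1/(N_1! N_2!)$ and applying the Pochhammer identities rewrites $N_i!/(N_i - s)! = (-1)^s (-N_i)_s$, so that what remains under the sum is exactly the terminating ${}_2F_1$ claimed. The pre-extracted constant $i^\ell 2^{-\ell}\ell!/(N_1! N_2!)$ then reorganises into the binomial coefficient $\binom{2n}{n+m}$ or $\binom{2n+1}{n+m+1}$ of the statement, possibly multiplied by an extra linear factor $(n-m)$ or $(n-m+1)$ in the mixed-parity cases, which simply encodes the mismatch between, say, $\tfrac{(2n)!}{(n+m)!(n-m-1)!}$ and $\binom{2n}{n+m}$. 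The main obstacle is thus purely the bookkeeping of signs, $i$-factors and powers of $2$; no identity deeper than $(1+i)^2 = 2i$ and the three elementary Pochhammer evaluations above is required.
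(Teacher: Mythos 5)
Your approach is correct and is precisely the ``simple algebraic computation with finite sums'' that the paper declares and omits: split the sum (\ref{eq:phi_S_ell}) according to the parity of $r$ forced by the integrality of $\frac{\ell-r\pm j}{2}$, use $(-2-2i)^{2s}=(8i)^s$ and $(-2-2i)^{2s+1}=(-2-2i)(8i)^s$, and convert factorial ratios into Pochhammer symbols via the three identities you list, all of which are right. One caveat: if you actually execute your plan for the case $\ell=2n+1$, $j=2m$, the two factorials in the denominator at $s=0$ are $(n-m)!$ and $(n+m)!$, so you obtain $i(-1)^n(-2-2i)\,(n-m+1)\binom{2n+1}{n+m}2^{-2n-1}\,{}_2F_1(-n-m,\,-n+m;\frac32;2i)$, i.e.\ with second parameter $-n+m$ rather than the printed $-n+m+1$; the check $n=1$, $m=0$, where $\mu(S_3=0)=-\frac72-\frac12 i$ while the printed formula gives $-\frac32+\frac32 i$, confirms that the statement's third formula contains a typo and that your computation produces the correct version, so you should not expect to reproduce the third formula exactly as displayed.
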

\begin{proof}
The proof is a simple algebraic computation with finite sums, so omitted.
\end{proof}

We have already seen that these complex measures are related to binomial probabilities. Next we want to find recursive formulas for $\mu(S_{\ell}=j)$. The method called Zeilberger's algorithm, see \cite[Chapter 6]{PWZ97}, can do exactly that for hypergeometric series with proper hypergeometric terms, like the one we have in formula (\ref{eq:phi_S_ell}). The second order linear recursions obtained with the Maple program EKHAD, included in the above book, are given next.

\begin{lem}\label{le:recurs}
For $n \ge 0$, with $|j| \le 2n$ fixed, we have the recursion
\begin{multline} \label{eq:even}
\mu(S_{2n+4}=j) \\
= (3+4i) \frac{(n+1)(n+2)(n+\frac12)(n+\frac32)(n+\frac74)}{(n-\frac{j}{2}+2)(n+\frac{j}{2}+2)(n-\frac{j}{2}+\frac32)
(n+\frac{j}{2}+\frac32)(n+\frac34)}
\mu(S_{2n}=j) \\
- (2+4i) \frac{(n+2)(n+\frac32)(n+\frac54)\left(n^2 + \frac{3}{20} j^2 + \frac52 n +\frac{51}{40} + i(\frac{4}{20} j^2 - \frac{1}{20})\right)} {(n-\frac{j}{2}+2)(n+\frac{j}{2}+2)(n-\frac{j}{2}+\frac32)(n+\frac{j}{2}+\frac32)(n+\frac34)} \\
\times \mu(S_{2n+2}=j).
\end{multline}
Similarly, for $n \ge 0$, with $|j| \le 2n+1$ fixed, we have the recursion
\begin{multline} \label{eq:odd}
\mu(S_{2n+5}=j) \\
= (3+4i) \frac{(n+1)(n+2)(n+\frac32)(n+\frac52)(n+\frac94)}{(n-\frac{j}{2}+2)(n+\frac{j}{2}+2)(n-\frac{j}{2}+\frac52)
(n+\frac{j}{2}+\frac52)(n+\frac54)} \\
\times \mu(S_{2n+1}=j) \\
- (2+4i) \frac{(n+2)(n+\frac52)(n+\frac74)\left(n^2 + \frac{3}{20} j^2 + \frac72 n +\frac{111}{40} + i(\frac{4}{20} j^2 - \frac{1}{20})\right)} {(n-\frac{j}{2}+2)(n+\frac{j}{2}+2)(n-\frac{j}{2}+\frac52)(n+\frac{j}{2}+\frac52)(n+\frac54)} \\
\times \mu(S_{2n+3}=j).
\end{multline}
\end{lem}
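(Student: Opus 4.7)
My plan is to derive both recursions as a direct application of Zeilberger's creative telescoping algorithm to the explicit finite sum in (\ref{eq:phi_S_ell}). Separating by the parity of $\ell$ so that the factorial arguments $\frac{\ell-r\pm j}{2}$ are integers, I would write
\[
\mu(S_\ell = j) = \sum_{r \ge 0} F(\ell, r, j), \qquad F(\ell, r, j) := \frac{i^\ell \, 2^{-\ell} \, \ell! \,(-1)^r 2^r (1+i)^r}{r! \left(\frac{\ell-r-j}{2}\right)! \left(\frac{\ell-r+j}{2}\right)!},
\]
treating $j$ as a fixed parameter and letting $\ell$ step by $2$. The summand is a proper hypergeometric term in $r$, so Zeilberger's algorithm produces rational coefficients $a_0(\ell, j), a_1(\ell, j), a_2(\ell, j)$ together with a rational certificate $R(\ell, r, j)$ such that $G := R \cdot F$ satisfies
\[
a_0 F(\ell, r, j) + a_1 F(\ell+2, r, j) + a_2 F(\ell+4, r, j) = G(\ell, r+1, j) - G(\ell, r, j).
\]
Summing over $r$ (a finite sum, so the right-hand side telescopes to $0$) yields a second-order recursion in $\ell$ at fixed $j$.

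Running EKHAD separately on the even and odd subsequences of $\ell$ then produces the specific coefficients appearing in (\ref{eq:even}) and (\ref{eq:odd}) respectively. The denominator factors $(n - j/2 + 2)$, $(n + j/2 + 2)$, and so on come from the shifts in the factorials of the summand under $\ell \mapsto \ell + 4$, while the constants $3+4i$ and $2+4i$ arise from the accumulated powers of $(1+i)$ combined with simplification of the resulting polynomial identity in $r$. The polynomial inside the second-order coefficient (quadratic in $n$ and $j$ with complex constants) is precisely what is needed to force the telescoping to succeed.

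If one wishes to avoid treating the algorithm as a black box, the verification step is purely algebraic: given $R$, dividing the telescoping identity through by $F(\ell, r, j)$ leaves a rational identity in $r$, $\ell$, $j$, which after clearing denominators becomes the vanishing of an explicit polynomial that can be checked directly. I expect the hard part to be purely bureaucratic: the certificate output by EKHAD is bulky and the resulting polynomial identity has many terms, which is why the author defers to the symbolic implementation. No analytical content beyond standard WZ theory is required; once the telescoping identity is verified as a rational-function identity in $r$, summing over the (finite) support of $F$ completes the argument.
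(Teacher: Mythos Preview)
Your approach is exactly what the paper does: it states the recursions without proof, noting only that they were obtained by running Zeilberger's algorithm (via the Maple program \textsc{EKHAD}) on the proper hypergeometric summand in (\ref{eq:phi_S_ell}), with $\ell$ stepping by $2$ so that the parities are handled separately. Your description of the telescoping certificate and the purely algebraic verification is the standard WZ justification and matches the paper's intent.
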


Since $\mu(S_k=j)=\mu(S_k=-j)$, it is enough to consider the cases when $j\ge 0$. For any $j\ge 0$ fixed, the above recursion formulas give the value of $\mu(S_k=j)$ for any $k$ if we start, depending on parities, with $\mu(S_j=j)$ and $\mu(S_{j+2}=j)$ or $\mu(S_{j+1}=j)$ and $\mu(S_{j+3}=j)$, respectively. In turn, the latter values can be determined by formula (\ref{eq:phi_S_ell}):
\begin{eqnarray}\label{eq:init_cond}
\mu(S_j=j) &=& \left(\frac{i}{2}\right)^j, \\
\mu(S_{j+1}=j) &=& (1-i) (j+1) \left(\frac{i}{2}\right)^j, \nonumber \\
\mu(S_{j+2}=j) &=& (j+2) \left(-\frac14 - (j+1)i\right) \left(\frac{i}{2}\right)^j , \nonumber \\
\mu(S_{j+3}=j) &=& (1-i) (j+2) (j+3) \left(-\frac14 -\frac{j+1}{3}i\right) \left(\frac{i}{2}\right)^j. \nonumber
\end{eqnarray}

Numerical simulations showed the important conjecture that for any $n\ge0$,
\begin{equation}\label{eq:coupl}
\frac{|\mu(S_{2n}=j)|^2}{\sum_{r=-2n}^{2n} |\mu(S_{2n}=r)|^2} \approx \binom{2n}{n+j} 2^{-2n} = \mathbb{P}(S_{2n}^*=2j) \qquad (|j| \le n),
\end{equation}
where $(S_{n}^*)_{n=0}^{\infty}$ is an ordinary simple, symmetric random walk. Interestingly, the approximation is good even for small values of $n$, see Table \ref{ta:bin_appr} below. Also, the left hand side of (\ref{eq:coupl}) is very close to 0 when $n < |j| \le 2n$. There is a similar fit between the normalized $|\mu(S_{2n+1}=j)|^2$ and $\mathbb{P}(S_{2n+1}^*=2j-1)$ ($-n \le j \le n+1$). Thus we define the non-negative quantities
\begin{equation}\label{eq:coupl_prob}
\mathbb{P}(S_{\ell}=j) := \frac{|\mu(S_{\ell}=j)|^2}{Z_{\ell}} \qquad (|j| \le \ell)
\end{equation}
as \emph{the coupled probabilities of the complex measure walk} $(S_{\ell})_{\ell \ge 0}$, where
\begin{equation}\label{eq:norm_fact}
Z_{\ell} := \sum_{j=-\ell}^{\ell} |\mu(S_{\ell}=j)|^2
\end{equation}
is the \emph{normalizing factor}. It is clear that (\ref{eq:coupl_prob}) defines a probability distribution for any $\ell \ge 0$.

Obviously, the magnitude of the normalizing factor $Z_{\ell}$ plays an important role in the following. To have an idea about it, let us consider the middle term $\mu(S_{2n}=0)$, or, rather, the approximate recursion for $a_n \approx \mu(S_{2n}=0)$ obtained from (\ref{eq:even}) when $j=0$ and $n$ is very large:
\begin{equation}\label{eq:midterm}
a_{n+2} = (3+4i) a_n - (2+4i) a_{n+1} .
\end{equation}
This is a second order linear difference equation with constant coefficients. By a standard method, one can search for the solution in the form $a_n = c q^n$; $c,q \in \mathbb{C}$. Then one gets the characteristic equation
$q^2 + (2+4i)q -(3+4i) = 0$ and the characteristic roots $q_1=1$ and $q_2=-3-4i$. Thus the general solution of (\ref{eq:midterm}) is
\[
a_n = c_1 + c_2 (-3-4i)^n \qquad (c_1, c_2 \in \mathbb{C}).
\]
The next section gives a more precise result.

We will need a recursion of the complex amplitudes $\mu(S_{\ell} = j)$ in the variable $j$ as well. For simplicity, we restrict ourselves to the case $\ell = 2n$ $(n \ge 1)$ fixed and $j = 2m$ ($m \ge 0$, integer). The other cases are similar. Define $a_{n,m} := \mu(S_{2n} = 2m)$.
\begin{lem}\label{le:recur_m}
If $n \ge 1$ fixed and $m$ goes from $n-1$ to $0$, one gets the reverse recursion
\begin{multline*}
a_{n,m} = \frac{\left(2 + \frac{1}{m + \frac12}\right) \left[\left(1 - \frac{m+\frac12}{n}\right) \left(1 + \frac{m+\frac32}{n}\right) - \frac{1}{2n}\right] + i8 \frac{m+1}{n} \frac{m+\frac32}{n}}{\left(1 + \frac{1}{m + \frac12}\right) \left(1 - \frac{m}{n}\right) \left(1 - \frac{m+\frac12}{n}\right)} \, a_{n,m+1} \\
- \frac{\left(1 + \frac{m+\frac32}{n}\right) \left(1 + \frac{m+2}{n}\right)}{\left(1 + \frac{1}{m + \frac12}\right) \left(1 - \frac{m}{n}\right) \left(1 - \frac{m+\frac12}{n}\right)} \, a_{n,m+2} ,
\end{multline*}
with the initial condition
\[
a_{n,n} = (-1)^n 2^{-2n}, \qquad a_{n,n+1} = 0 .
\]
\end{lem}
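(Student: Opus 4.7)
The plan is to derive the recursion as a contiguous-type relation on the hypergeometric representation provided by Lemma \ref{le:hypgeo}. Writing
$$a_{n,m} = (-1)^n \binom{2n}{n+m} 2^{-2n} F_m, \qquad F_m := {}_2F_1\!\left(-n-m,-n+m;\tfrac12;2i\right),$$
the shift $m \to m+1$ sends the parameter pair $(a,b) = (-n-m,-n+m)$ to $(a-1,b+1)$ while $c=\tfrac12$ and $z=2i$ are held fixed. The target is therefore a three-term linear relation among $F_m$, $F_{m+1}$, $F_{m+2}$.

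The mechanical route is to feed the explicit finite sum coming from (\ref{eq:phi_S_ell}) into Zeilberger's algorithm, with $m$ as the free variable and $n$ held as a parameter. Only even summation indices $r = 2s$ contribute, since $(1+i)^{2s} = (2i)^s$, so one has a genuine hypergeometric summand in $s$. The algorithm outputs a two-step recurrence in $m$ with polynomial coefficients in $n$ and $m$, together with a rational creative-telescoping certificate that constitutes the proof. Equivalently, the same recurrence can be assembled by hand by chaining two of Gauss's contiguous relations for ${}_2 F_1$: first shift $a \to a-1$, then $b \to b+1$, and take the linear combination that eliminates the intermediate $F(a-1,b)$.

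The remaining step is to translate the $F_m$-recursion into the stated $a_{n,m}$-recursion. Substituting $F_m = (-1)^n 2^{2n} a_{n,m} / \binom{2n}{n+m}$ introduces the binomial ratios
$$\frac{\binom{2n}{n+m+1}}{\binom{2n}{n+m}} = \frac{n-m}{n+m+1}, \qquad \frac{\binom{2n}{n+m+2}}{\binom{2n}{n+m}} = \frac{(n-m)(n-m-1)}{(n+m+1)(n+m+2)},$$
which, once combined with the polynomial coefficients of the $F_m$-recursion and normalised by dividing numerators and denominators by appropriate powers of $n$, should reproduce the stated combinations with their characteristic factors $m+\tfrac12$, $m+\tfrac32$ and $1 - m/n$, $1 \pm (m+k)/n$. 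This final rewriting is the main obstacle: not conceptually, but bookkeeping-wise, as matching the particular normalisation chosen in the statement is tedious.

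The initial conditions are immediate from (\ref{eq:phi_S_ell}). For $m=n$ the summation upper limit is $0$, so only the $r=0$ term survives and
$$a_{n,n} = (-1)^n 2^{-2n} (2n)! \cdot \frac{1}{0! \cdot 0! \cdot (2n)!} = (-1)^n 2^{-2n};$$
for $m=n+1$ the summation range is empty, so $a_{n,n+1} = 0$. Combined with the two-step recurrence run backwards from $m=n-1$, this determines $a_{n,m}$ for every $0 \le m \le n$.
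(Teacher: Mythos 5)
Your proposal is correct and follows essentially the same route as the paper's own (sketched) proof: start from the hypergeometric representation of Lemma \ref{le:hypgeo}, use the binomial ratio $\binom{2n}{n+m+1}/\binom{2n}{n+m}=(n-m)/(n+m+1)$, and obtain the three-term relation in $m$ from Gauss's contiguous relations (or, equivalently, creative telescoping), leaving the final algebraic simplification as routine bookkeeping exactly as the paper does. Your explicit verification of the initial conditions from (\ref{eq:phi_S_ell}) is correct and slightly more detailed than the paper's.
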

\begin{proof}
By Lemma \ref{le:hypgeo},
\[
a_{n,m} = (-1)^n \binom{2n}{n+m} 2^{-2n} \; {}_2 F_1(-n-m,-n+m;\frac12;2i) .
\]
Then, first,
\[
\frac{\binom{2n}{n+m+1}}{\binom{2n}{n+m}} = \frac{1 - \frac{m}{n}}{1 + \frac{m+1}{n}} .
\]
Next, a second order linear recursion can be obtained for ${}_2 F_1(-n-m,-n+m;\frac12;2i)$ in the variable $m$ by Gauss' contiguous relations. After simplification, we obtain the second order linear reverse recursion for $a_{n,m}$ in the variable $m \ge 0$, as stated in the lemma.
\end{proof}

The second order recursion of Lemma \ref{le:recur_m} can be reduced to a first order recursion by introducing the ratio $\rho_{n,m} := a_{n,m+1}/a_{n,m}$ when $m$ goes from $n-1$ to $0$:
\begin{multline}\label{eq:rho}
\rho_{n,m} = \frac{\left(1 + \frac{1}{m + \frac12}\right) \left(1 - \frac{m}{n}\right) \left(1 - \frac{m+\frac12}{n}\right)}{\mu_{n,m} - \left(1 + \frac{m+\frac32}{n}\right) \left(1 + \frac{m+2}{n}\right) \rho_{n,m+1}} , \\
\mu_{n,m} := \left(2 + \frac{1}{m + \frac12}\right) \left\{\left(1 - \frac{m+\frac12}{n}\right) \left(1 + \frac{m+\frac32}{n}\right) - \frac{1}{2n}\right\} \\
+ i8 \frac{m+1}{n} \frac{m+\frac32}{n} ,
\end{multline}
with the initial condition $\rho_{n,n} = 0$.


\section{An asymptotic solution of the recursion} \label{se:Asymptotic}

Now we would like to find de Moivre--Laplace-type approximations to the ``binomial formula'' (\ref{eq:phi_S_ell}) of $\mu(S_{\ell} = j)$ when $\ell \to \infty$. If we try to use the standard technique based on Stirling's formula, we run into serious difficulties. The reason is that here the complex terms are wildly oscillating as $\ell \to \infty$. Consequently, there are cancellations of bigger terms, and it is necessary to precisely handle essentially the whole range of terms $|j| \le \ell$ and not just the terms around the center, say $|j| \le \ell^{\frac23}$. That is why instead of the standard approach now we want to find an asymptotic solution of the recursions obtained in the previous section.

\begin{thm}\label{th:asympt_soln}
An asymptotic solution of the second order linear recursions (\ref{eq:even}), (\ref{eq:odd}) with initial conditions (\ref{eq:init_cond}) is
\begin{equation}\label{eq:approx_solution}
\mu(S_{\ell} = j) = \sqrt{\frac{2}{\ell}} \left(\widetilde{c}_{1,j} e^{i \frac{j^2}{2\ell}} + c_2 e^{i \pi j - (2+i) \frac{j^2}{2\ell}} (-3-4i)^{\frac{\ell}{2}} \right) + \widetilde{\epsilon}_{\ell,j} ,
\end{equation}
where $\widetilde{c}_{1,j}, c_2 \in \mathbb{C}$ are suitable constants $(\ell \ge 1, |j| \le \ell)$.
For the error term $\widetilde{\epsilon}_{\ell,j}$, we have
\begin{equation}\label{eq:error_term1}
\widetilde{\epsilon}_{\ell,j} = c_2 \sqrt{\frac{2}{\ell}} e^{i \pi j - (2+i) \frac{j^2}{2\ell}} (-3-4i)^{\frac{\ell}{2}} \left(z_0 \frac{j^4}{2 \ell^3} + z_1 \frac{2}{\ell} + \widetilde{h}_{\ell,j} \right),
\end{equation}
where $|\widetilde{c}_{1,j}| \le \widetilde{k}_1$, $|\widetilde{h}_{\ell,j}| \le \widetilde{k}_2 \ell^{-2/3}$ for any $|j| \le \frac13 \ell^{\frac23}$ and $\ell \ge 1$, with suitable constants $z_0, z_1 \in \mathbb{C}$ and $\widetilde{k}_1, \widetilde{k}_2 \in (0, \infty)$.
\end{thm}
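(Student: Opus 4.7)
The plan is to treat (\ref{eq:even}) and (\ref{eq:odd}) as perturbations of the constant-coefficient recursion (\ref{eq:midterm}). As $n \to \infty$ with $|j| \le \ell^{2/3}/3$, the coefficients in (\ref{eq:even}) expand as $3+4i + O(1/n) + O(j^2/n^2)$ and $-(2+4i) + O(1/n) + O(j^2/n^2)$. The limiting characteristic equation $q^2+(2+4i)q - (3+4i) = 0$ has roots $q_1=1$ and $q_2 = -3-4i$, and the two terms of (\ref{eq:approx_solution}) are precisely the discrete WKB modes emanating from these roots. The factor $e^{i\pi j}$ attached to the second mode accounts for the fact that $(-3-4i) = -1\cdot(3+4i)$, i.e.\ the sign-alternating part absorbs into a shift in $j$.

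First I would substitute the ansatz $f_\ell(j) := \sqrt{2/\ell}\bigl(\widetilde{c}_{1,j} e^{ij^2/(2\ell)} + c_2 e^{i\pi j - (2+i)j^2/(2\ell)} q_2^{\ell/2}\bigr)$ into (\ref{eq:even}), expand $\sqrt{2/(\ell+2)}$ and $\sqrt{2/(\ell+4)}$ in powers of $1/\ell$, and expand the phase factors via $e^{ij^2/(2(\ell+2))} = e^{ij^2/(2\ell)}\bigl(1 - ij^2/\ell^2 + \tfrac12(ij^2/\ell^2)^2 + \cdots\bigr)$, and similarly for the second mode. Matching order-by-order in $1/\ell$: the $O(1)$ terms cancel by the characteristic equation applied to $q_1$ and $q_2$ separately; the $O(1/\ell)$ and $O(j^2/\ell^2)$ terms cancel because the phases $ij^2/(2\ell)$ and $-(2+i)j^2/(2\ell)$ were chosen to solve the transport equation for each mode. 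The residual begins at order $j^4/\ell^3$ and $1/\ell^2$, yielding the shape of $\widetilde{\epsilon}_{\ell,j}$ displayed in (\ref{eq:error_term1}).

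Next, using (\ref{eq:init_cond}), I would determine $\widetilde{c}_{1,j}$ and $c_2$ by inverting the $2\times 2$ linear system that expresses $\mu(S_j=j)$ and $\mu(S_{j+2}=j)$ in terms of the two modes at $\ell = j, j+2$. Since $\mu(S_j=j) = (i/2)^j$ and $\mu(S_{j+2}=j) = (j+2)(-\tfrac14 - (j+1)i)(i/2)^j$ have polynomial-in-$j$ prefactors times $(i/2)^j$, while the second mode evaluated near $\ell = j$ is of a very different magnitude, one sees that $c_2$ becomes $j$-independent (up to the analyzed error) whereas $\widetilde{c}_{1,j}$ absorbs the $j$-dependence of the initial data. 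A direct computation then yields $|\widetilde{c}_{1,j}| \le \widetilde{k}_1$ uniformly.

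The main obstacle is the error analysis. Because $|q_2|^{1/2} = 5^{1/2} > 1$, naive Gronwall iteration through the recursion would blow up the error exponentially. The remedy is a discrete Levinson-type diagonalization: write the second-order recursion as a $2\times 2$ first-order system, change variables by the limiting eigenvector basis to decouple the modes, and reduce to a variation-of-constants integral equation whose kernel is summable in $n$ thanks to the $O(1/n)$ and $O(j^2/n^2)$ decay of the perturbation. The restriction $|j| \le \ell^{2/3}/3$ is precisely what makes the cubic remainder $j^3/\ell^2$ in the phase expansion summable and keeps $j^4/\ell^3 = O(\ell^{-1/3})$; combined with the $1/\ell$ correction this yields the uniform bound $|\widetilde{h}_{\ell,j}| \le \widetilde{k}_2 \ell^{-2/3}$. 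The odd case (\ref{eq:odd}) is handled by an identical argument with the obvious parity adjustments.
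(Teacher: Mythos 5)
Your overall strategy -- a two-mode WKB ansatz built on the characteristic roots $q_1=1$, $q_2=-3-4i$ of the limiting recursion (\ref{eq:midterm}), exact cancellation of the $O(1/\ell)$ and $O(j^2/\ell^2)$ terms by the choice of amplitude $\ell^{-1/2}$ and phases, and a variation-of-constants treatment of the residual -- is the same as the paper's (the paper's identity (\ref{eq:aprpr}) is precisely your order-by-order cancellation, and its error recursion (\ref{eq:hnmrec1}) solved by (\ref{eq:hnminh}) is your decoupled variation-of-constants equation). However, two steps as you describe them would fail. First, the claim that the kernel is summable ``thanks to the $O(1/n)$ and $O(j^2/n^2)$ decay of the perturbation'' is false as stated, since $\sum 1/n$ diverges; summability is obtained only \emph{after} the WKB amplitude and phases absorb those terms, leaving a residual of order $O(1/n^2)+O(j^2/n^3)+O(j^4/n^4)$, which is what actually feeds the inhomogeneity $g_{n,m}$ in (\ref{eq:hn2mcoef}). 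You acknowledge the transport-equation cancellation elsewhere, so this may be a slip, but the Levinson step must be run on the post-cancellation residual, not on the raw $O(1/n)$ perturbation of the coefficients.

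Second, and more seriously, determining both $\widetilde{c}_{1,j}$ and $c_2$ by inverting a $2\times 2$ system at $\ell=j$ and $\ell=j+2$ cannot work: those matching points lie outside the validity range $|j|\le\frac13\ell^{2/3}$ (at $\ell=j$ one needs $j\le\frac13 j^{2/3}$, i.e.\ $j=0$), it would make $c_2$ depend on $j$ contrary to the statement, and -- decisively -- the error recursion has one \emph{neutral} mode ($q=1$, the mode $u_{1,m}(n)$ in (\ref{eq:hnmhom})) alongside the decaying mode $(-3-4i)^{-n}$. A pure initial-value determination leaves the neutral mode's coefficient uncontrolled, so $\widetilde{h}_{\ell,j}$ would generically tend to a nonzero constant rather than satisfy $|\widetilde{h}_{\ell,j}|\le\widetilde{k}_2\ell^{-2/3}$. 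The paper avoids this by a mixed boundary-value prescription: $c_2$ is a single global constant fixed by the $j=0$ asymptotics, one condition at $\ell=j$ fixes $\widetilde{c}_{1,j}$, and the neutral error mode is killed by the condition $d_{1,m}=-h^{\text{inh}}_{\infty,m}$ at $\ell=\infty$ (together with Lemma \ref{le:hinf} to quantify the convergence). Your plan needs an analogous condition at infinity -- i.e.\ the standard Levinson dichotomy of summing the non-decaying component backward from infinity -- stated explicitly, or the claimed bound on $\widetilde{h}_{\ell,j}$ does not follow.
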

\begin{proof}
The proof is long, but elementary.
We concentrate on the case of (\ref{eq:even}), where $\ell=2n$, even; but at some essential points we show to the differences that the case $\ell=2n+1$, (\ref{eq:odd}) causes.

Define $a_{n,m} := \mu(S_{2n} = 2m)$, $m:=\frac{j}{2} \in \frac12 \mathbb{Z}$, $\epsilon_{n,m} := \widetilde{\epsilon}_{2n, 2m}$, $c_{1,m} := \widetilde{c}_{1, 2m}$ and $h_{n,m} := \widetilde{h}_{2n, 2m}$. Since $a_{n,m}$ is an even function of $m$, it is enough to consider the case $0 \le m \le \frac13 n^{\frac23}$, that we assume from now on.

Dividing the numerators and denominators of (\ref{eq:even}) by $n^5$ when $n \ge 1$, we get
\begin{multline} \label{eq:exact_recursn}
a_{n+2,m} = (3+4i) \beta_{n,m} \, a_{n,m} - (2+4i) \gamma_{n,m} \, a_{n+1, m} \\
:= (3 + 4i) \frac{\left(1+\frac{1}{n}\right) \left(1+\frac{2}{n}\right) \left(1+\frac{1}{2n}\right) \left(1+\frac{3}{2n}\right) \left(1+\frac{7}{4n}\right)}{\left(1+\frac{2+m}{n}\right) \left(1+\frac{2-m}{n}\right) \left(1+\frac{\frac32+m}{n}\right) \left(1+\frac{\frac32-m}{n}\right) \left(1+\frac{3}{4n}\right)} a_{n,m} \\
- (2 + 4i) \frac{\left(1+\frac{2}{n}\right) \left(1+\frac{3}{2n}\right) \left(1+\frac{5}{4n}\right) \left(1 + \frac{3m^2}{5n^2} + \frac{5}{2n} + \frac{51}{40 n^2} + i\left(\frac{4m^2}{5n^2} - \frac{1}{20 n^2} \right) \right)}{\left(1+\frac{2+m}{n}\right) \left(1+\frac{2-m}{n}\right) \left(1+\frac{\frac32+m}{n}\right) \left(1+\frac{\frac32-m}{n}\right) \left(1+\frac{3}{4n}\right)} \\
\times a_{n+1,m} .
\end{multline}

A tentative asymptotic general solution of this second order linear recursion according to (\ref{eq:approx_solution}) is
\begin{equation}\label{eq:recurs_sol}
a'_{n,m} := \frac{1}{\sqrt{n}} \left(c_{1,m} e^{i \frac{m^2}{n}} + c_2 e^{i2 \pi m - (2+i) \frac{m^2}{n}} (-3 - 4i)^n \right) \qquad (c_{1,m}, c_2 \in \mathbb{C}) .
\end{equation}
By (\ref{eq:error_term1}) it is enough to show that
\begin{equation}\label{eq:eps_nm}
\epsilon_{n,m} =  a_{n,m} - a'_{n,m} = \frac{c_2}{\sqrt{n}} e^{i2 \pi m - (2+i) \frac{m^2}{n}} (-3-4i)^{n} \left(z_0 \frac{m^4}{n^3} + \frac{z_1}{n} + h_{n,m} \right) ,
\end{equation}
where $|c_{1,m}| \le k_1$ and $|h_{n,m}| \le k_2 n^{-\frac23}$ for any $0 \le m \le \frac13 n^{\frac23}$ and $n \ge 3$.

If we define $\bar{a}_{n,m} := \mu(S_{2n+1} = 2m)$, $m:=\frac{j}{2} \in \frac12 \mathbb{Z}$, then by (\ref{eq:odd}) we similarly obtain
\begin{multline} \label{eq:exact_recursn_odd}
\bar{a}_{n+2,m} = (3+4i) \bar{\beta}_{n,m} \, \bar{a}_{n,m} - (2+4i) \bar{\gamma}_{n,m} \, \bar{a}_{n+1, m} \\
:= (3 + 4i) \frac{\left(1+\frac{1}{n}\right) \left(1+\frac{2}{n}\right) \left(1+\frac{3}{2n}\right) \left(1+\frac{5}{2n}\right) \left(1+\frac{9}{4n}\right)}{\left(1+\frac{2+m}{n}\right) \left(1+\frac{2-m}{n}\right) \left(1+\frac{\frac52+m}{n}\right) \left(1+\frac{\frac52-m}{n}\right) \left(1+\frac{5}{4n}\right)} \bar{a}_{n,m} \\
- (2 + 4i) \frac{\left(1+\frac{2}{n}\right) \left(1+\frac{5}{2n}\right) \left(1+\frac{7}{4n}\right) \left(1 + \frac{3m^2}{5n^2} + \frac{7}{2n} + \frac{111}{40 n^2} + i\left(\frac{4m^2}{5n^2} - \frac{1}{20 n^2} \right) \right)}{\left(1+\frac{2+m}{n}\right) \left(1+\frac{2-m}{n}\right) \left(1+\frac{\frac52+m}{n}\right) \left(1+\frac{\frac52-m}{n}\right) \left(1+\frac{5}{4n}\right)} \\
\times \bar{a}_{n+1,m} .
\end{multline}

In this case too, a tentative asymptotic general solution of this second order linear recursion by (\ref{eq:approx_solution}) is
\begin{equation}\label{eq:recurs_sol_odd}
\bar{a}'_{n,m} := \frac{1}{\sqrt{n+\frac12}} \left(c_{1,m} e^{i \frac{m^2}{n+\frac12}} + c_2 e^{i2 \pi m - (2+i) \frac{m^2}{n+\frac12}} (-3 - 4i)^{n+\frac12} \right) .
\end{equation}

With $r=2$ and $\frac32$, use a Taylor expansion for the first four factors in the denominator of (\ref{eq:exact_recursn}) when $n \ge 3$:
\[
\left(1 + \frac{r \pm m}{n}\right)^{-1} = \sum_{k=0}^{\infty} (-1)^k \left(\frac{r \pm m}{n}\right)^k .
\]
Then after simplification, the product of the first four factors of the denominator is
\begin{multline*}
\left(\left(1 + \frac{2 + m}{n}\right) \left(1 + \frac{2 - m}{n}\right) \left(1 + \frac{\frac32 + m}{n}\right) \left(1 + \frac{\frac32 - m}{n}\right)\right)^{-1} \\
= 1 - \frac{7}{n} + 2 \frac{m^2}{n^2} + 3 \frac{m^4}{n^4} - 21 \frac{m^2}{n^3} + \frac{123}{4} \frac{1}{n^2} - \frac{217}{2} \frac{1}{n^3} + O\left(\frac{m^6}{n^6}\right) + O\left(\frac{1}{n^4}\right).
\end{multline*}

Expanding the product of the numerator of $\beta_{n,m}$ and $(1 + 3/(4n))^{-1}$ in (\ref{eq:exact_recursn}) and then simplifying, the result is
\[
1 + 6 n^{-1} + 13 n^{-2} + \frac{189}{16} n^{-3} + O(n^{-4}) .
\]
Consequently,
\begin{equation}\label{eq:beta_nm}
\beta_{n,m} = 1 - \frac{1}{n} + 2 \frac{m^2}{n^2} + 3 \frac{m^4}{n^4} - 9 \frac{m^2}{n^3} + \frac{7}{4 n^2} - \frac{51}{16 n^3} + O\left(\frac{m^6}{n^6}\right) + O\left(\frac{1}{n^4}\right).
\end{equation}

Similarly, the product of the numerator of $\gamma_{n,m}$ and $(1 + 3/(4n))^{-1}$ in (\ref{eq:exact_recursn}) is
\begin{multline*}
1 + \frac{13}{2n} + \left(\frac35 + i \frac45\right) \frac{m^2}{n^2} + \left(\frac{12}{5} + i \frac{16}{5}\right) \frac{m^2}{n^3} + \left(\frac{313}{20} - i \frac{1}{20}\right) \frac{1}{n^2} \\
+ \left(\frac{2641}{160} - i \frac{1}{5}\right) \frac{1}{n^3} + O\left(\frac{m^6}{n^6}\right) + O\left(\frac{1}{n^4}\right),
\end{multline*}
so we obtain
\begin{multline}\label{eq:gamma_nm}
\gamma_{n,m} = 1 - \frac{1}{2n} + \left(\frac{13}{5}+ i \frac45\right) \frac{m^2}{n^2} + \left(\frac{21}{5} + i \frac85\right) \frac{m^4}{n^4} - \left(\frac{49}{5} + i \frac{12}{5}\right) \frac{m^2}{n^3} \\
+ \left(\frac{9}{10} - i \frac{1}{20}\right) \frac{1}{n^2} + \left(-\frac{107}{160} + i \frac{23}{20}\right) \frac{1}{n^3} + O\left(\frac{m^6}{n^6}\right) + O\left(\frac{1}{n^4}\right).
\end{multline}
(Here the error terms $O(\cdot)$ are complex valued. Then the $O(\cdot)$ notation refers to both the real and imaginary parts.)

Combining (\ref{eq:exact_recursn}), (\ref{eq:beta_nm}) and (\ref{eq:gamma_nm}), the result is
\begin{equation}\label{eq:asympt_recurs}
a_{n+2,m} = (3+4i) \left(\beta'_{n,m} + \Delta \beta_{n,m}\right) a_{n,m} - (2+4i) \left(\gamma'_{n,m} + \Delta \gamma_{n,m}\right) a_{n+1, m} ,
\end{equation}
where
\begin{equation} \label{eq:beta_gamma_pr}
\beta'_{n,m}:= 1 - \frac{1}{n} + 2 \frac{m^2}{n^2} , \qquad
\gamma'_{n,m} := 1 - \frac{1}{2n} + \left(\frac{13}{5} + i \frac{4}{5} \right) \frac{m^2}{n^2} ,
\end{equation}
and for $0 \le m \le \frac13 n^{\frac23}$ and $n \ge 3$,
\begin{equation}\label{eq:Delta_beta}
\Delta \beta_{n,m} =  3 \frac{m^4}{n^4} - 9 \frac{m^2}{n^3} + \frac{7}{4 n^2} -\frac{51}{16 n^3} + O\left(\frac{m^6}{n^6}\right) + O\left(\frac{1}{n^4}\right)
\end{equation}
and
\begin{multline}\label{eq:Delta_gamma}
\Delta \gamma_{n,m} =  \left(\frac{21}{5} + i \frac85\right) \frac{m^4}{n^4} - \left(\frac{49}{5} + i \frac{12}{5}\right) \frac{m^2}{n^3} + \left(\frac{9}{10} - i \frac{1}{20}\right) \frac{1}{n^2} \\
+ \left(-\frac{107}{160} + i \frac{23}{20}\right) \frac{1}{n^3} + O\left(\frac{m^6}{n^6}\right) + O\left(\frac{1}{n^4}\right).
\end{multline}

When $\ell = 2n+1$, from (\ref{eq:recurs_sol_odd}), one obtains exactly the same $\bar{\beta}'_{n,m} = \beta'_{n,m}$ and $\bar{\gamma}'_{n,m} = \gamma'_{n,m}$ as above in the case of $\ell = 2n$. Then
\begin{equation}\label{eq:asympt_recurs_odd}
\bar{a}_{n+2,m} = (3+4i) \left(\bar{\beta}'_{n,m} + \Delta \bar{\beta}_{n,m}\right) \bar{a}_{n,m} - (2+4i) \left(\bar{\gamma}'_{n,m} + \Delta \bar{\gamma}_{n,m}\right) \bar{a}_{n+1, m} .
\end{equation}
On the other hand, slightly differently,
\begin{equation}\label{eq:Delta_beta_odd}
\Delta \bar{\beta}_{n,m} =  3 \frac{m^4}{n^4} - 11 \frac{m^2}{n^3} + \frac{9}{4 n^2} - \frac{83}{16 n^3} + O\left(\frac{m^6}{n^6}\right) + O\left(\frac{1}{n^4}\right)
\end{equation}
and
\begin{multline}\label{eq:Delta_gamma_odd}
\Delta \bar{\gamma}_{n,m} =  \left(\frac{21}{5} + i \frac85\right) \frac{m^4}{n^4} - \left(\frac{62}{5} + i \frac{16}{5}\right) \frac{m^2}{n^3} + \left(\frac{23}{20} - i \frac{1}{20}\right) \frac{1}{n^2} \\
+ \left(-\frac{431}{160} + i \frac{1}{5}\right) \frac{1}{n^3} + O\left(\frac{m^6}{n^6}\right) + O\left(\frac{1}{n^4}\right).
\end{multline}

Since these basic approximations are almost the same when $\ell = 2n$ or $\ell = 2n+1$, from now on, we deal exclusively with the even case $\ell = 2n$.

Using Taylor expansion again with $r=1$ and $2$, we have
\begin{multline}\label{eq:delta_nr}
\delta_{n,r} := \frac{1}{\sqrt{n+r}} - \frac{1}{\sqrt{n}} \left(1 - \frac{r}{2n}\right)
= \frac{1}{\sqrt{n}} \left\{\frac{3r^2}{8n^2} -\frac{5 r^3}{16 n^3} + \sum_{k=4}^{\infty} \binom{-\frac12}{k} \left(\frac{r}{n}\right)^k\right\} \\
= \frac{1}{\sqrt{n}} \left(\frac{3r^2}{8n^2} -\frac{5 r^3}{16 n^3} + O(n^{-4})\right).
\end{multline}
Also, with $\alpha = 2+i$ or $-i$,
\begin{equation}\label{eq:eta_nm}
e^{-\alpha \frac{m^2}{n+r}} = e^{-\alpha \frac{m^2}{n} \left(1 + \frac{r}{n} \right)^{-1}}
= e^{-\alpha \frac{m^2}{n}} \left(1 + \alpha r \frac{m^2}{n^2}\right) + \eta_{n,m,r}(\alpha) ,
\end{equation}
where
\begin{multline}\label{eq:eta_nmr}
\eta_{n,m,r}(\alpha) := e^{-\alpha \frac{m^2}{n}} \left\{e^{-\alpha m^2 \left(\frac{1}{n+r} - \frac{1}{n}\right)} - \left( 1 + \alpha r \frac{m^2}{n^2} \right) \right\}  \\
= e^{-\alpha \frac{m^2}{n}} \left\{ \alpha r \frac{m^2}{n^2} \left(\left(1 + \frac{r}{n}\right)^{-1} - 1 \right) + \frac{\alpha^2 r^2}{2} \frac{m^4}{n^4}\left(1 + \frac{r}{n}\right)^{-2} \right. \\
\left. + \sum_{k=3}^{\infty} \frac{1}{k!} \left(\alpha r \frac{m^2}{n^2}\right)^k \left(1 + \frac{r}{n}\right)^{-k} \right\} \\
= e^{-\alpha \frac{m^2}{n}} \left\{ -\alpha r^2 \frac{m^2}{n^3} + \frac{\alpha^2 r^2}{2} \frac{m^4}{n^4} + O\left(\frac{m^6}{n^6}\right)\right\} .
\end{multline}

Using these, for any value of $n$ and $m$, we further approximate $a'_{n+1,m}$ and $a'_{n+2,m}$ by
\begin{eqnarray}
a^{(1)}_{n,m} &:=& \frac{1}{\sqrt{n}} \left(1 - \frac{1}{2n}\right) \left\{c_{1,m} e^{i \frac{m^2}{n}}\left(1 - i \frac{m^2}{n^2}\right) \right. \nonumber \\
&& \left. + c_2 e^{i 2\pi m - (2+i) \frac{m^2}{n}} \left(1 + (2+i) \frac{m^2}{n^2}\right) (-3 - 4i)^{n+1} \right\}, \label{eq:a1nm} \\
a^{(2)}_{n,m} &:=& \frac{1}{\sqrt{n}} \left(1 - \frac{1}{n}\right) \left\{c_{1,m} e^{i \frac{m^2}{n}}\left(1 - 2i \frac{m^2}{n^2}\right) \right. \nonumber \\
&& \left. + c_2 e^{i 2 \pi m - (2+i) \frac{m^2}{n}} \left(1 + 2(2+i) \frac{m^2}{n^2}\right) (-3 - 4i)^{n+2} \right\} , \label{eq:a2nm}
\end{eqnarray}
respectively. Then it is easy to check that for any $n$, $m$, $c_{1,m}$, and $c_2$, we have the \emph{exact equality}
\begin{equation}\label{eq:aprpr}
a^{(2)}_{n,m} = (3+4i) \beta'_{n,m} \, a'_{n,m} - (2+4i) \gamma'_{n,m} \, a^{(1)}_{n, m}.
\end{equation}
This indicates that $a'_{n,m}$, defined by (\ref{eq:recurs_sol}), may really be an asymptotic solution of the recursion (\ref{eq:exact_recursn}).

We want to show (\ref{eq:eps_nm}), that is, we have to show that
\begin{equation}\label{eq:hnm}
|h_{n,m}| \le k_2 \, n^{-\frac23} \qquad (n \ge 3, |m| \le \frac13 n^{\frac23})
\end{equation}
holds with suitable constants $z_0, z_1 \in \mathbb{C}$ and $k_2 \in (0, \infty)$.

Consequently, by (\ref{eq:delta_nr})--(\ref{eq:eta_nmr}) and a Taylor expansion of $(1+1/n)^k$ $(k=-3, -1)$ we should have
\begin{multline}\label{eq:eps_nm1}
\epsilon_{n+1,m} = \frac{c_2}{\sqrt{n}} e^{i 2 \pi m - (2+i) \frac{m^2}{n}} (-3-4i)^{n+1} \left(1 - \frac{1}{2n} + \frac{3}{8n^2} -\frac{5}{16 n^3} + O\left(\frac{1}{n^{4}}\right)\right) \\
\times \left(1 + (2+i) \frac{m^2}{n^2} + \frac{3+4i}{2} \frac{m^4}{n^4} - (2+i) \frac{m^2}{n^3} + O\left(\frac{m^6}{n^6}\right)\right) \\
\times \left\{z_0 \frac{m^4}{n^3} \left(1 - \frac{3}{n} + O\left(\frac{1}{n^2}\right) \right)+ \frac{z_1}{n} \left(1 - \frac{1}{n} + \frac{1}{n^2} + O\left(\frac{1}{n^3}\right) \right)+ h_{n+1,m} \right\} ,
\end{multline}
where
\begin{equation}\label{eq:hn1m}
|h_{n+1,m}| \le k_2 \, (n+1)^{-\frac23},
\end{equation}
with the above constants $z_0, z_1 \in \mathbb{C}$ and $k_2 > 0$.

Similarly, we should have
\begin{multline}\label{eq:eps_nm2}
\epsilon_{n+2,m} = \frac{c_2}{\sqrt{n}} e^{i 2 \pi m - (2+i) \frac{m^2}{n}} (-3-4i)^{n+2} \left(1 - \frac{1}{n} + \frac{3}{2n^2} -\frac{5}{2 n^3} + O\left(\frac{1}{n^{4}}\right)\right) \\
\times \left(1 + 2(2+i) \frac{m^2}{n^2} + 2(3+4i) \frac{m^4}{n^4} - 4(2+i) \frac{m^2}{n^3} + O\left(\frac{m^6}{n^6}\right)\right) \\
\times \left\{z_0 \frac{m^4}{n^3} \left(1 - \frac{6}{n} + O\left(\frac{1}{n^2}\right) \right) + \frac{z_1}{n} \left(1 - \frac{2}{n} + \frac{4}{n^2} + O\left(\frac{1}{n^3}\right) \right) + h_{n+2,m} \right\}  ,
\end{multline}
where
\begin{equation}\label{eq:hn2m}
|h_{n+2,m}| \le k_2 \, (n+2)^{-\frac23} ,
\end{equation}
with the above constants $z_0, z_1 \in \mathbb{C}$ and $k_2 > 0$.

On the other hand, by definition,
\begin{equation}\label{eq:eps_n2}
\epsilon_{n+2,m} :=  a_{n+2,m} - a'_{n+2,m} = (a_{n+2,m} - a^{(2)}_{n,m}) - (a'_{n+2,m} - a^{(2)}_{n,m}) .
\end{equation}
Let us compute the terms here. First, by (\ref{eq:recurs_sol})--(\ref{eq:eta_nmr}) and (\ref{eq:a2nm}),
\begin{multline}\label{eq:anmpr_an2}
a'_{n+2,m} - a^{(2)}_{n,m} = \frac{c_2}{\sqrt{n}} e^{i 2\pi m - (2+i) \frac{m^2}{n}} (-3-4i)^{n+2} \\
\times \left\{(6+8i) \frac{m^4}{n^4} - (8+4i) \frac{m^2}{n^3} + \frac{3}{2n^2} - \frac{5}{2 n^3}
+ O\left(\frac{m^6}{n^6}\right) + O\left(\frac{1}{n^4}\right)\right\} .
\end{multline}
Similarly,
\begin{multline}\label{eq:anmpr_an1}
a'_{n+1,m} - a^{(1)}_{n,m} = \frac{c_2}{\sqrt{n}} e^{i 2 \pi m - (2+i) \frac{m^2}{n}} (-3-4i)^{n+1} \\
\times \left\{\left(\frac32+2i\right) \frac{m^4}{n^4} - (2+i) \frac{m^2}{n^3} + \frac{3}{8n^2} - \frac{5}{16 n^3} + O\left(\frac{m^6}{n^6}\right) + O\left(\frac{1}{n^3}\right)\right\} .
\end{multline}
Next, by (\ref{eq:asympt_recurs}) and (\ref{eq:aprpr}),
\begin{multline}\label{eq:an2_a2n}
a_{n+2,m} - a^{(2)}_{n,m} \\
= (3+4i)(\beta_{n,m} a_{n,m} - \beta'_{n,m} a'_{n,m}) - (2+4i) (\gamma_{n,m} a_{n+1,m} - \gamma'_{n,m} a^{(1)}_{n,m}) \\
= (3+4i) \left(\Delta \beta_{n,m} \, a'_{n,m} + \beta_{n,m} \epsilon_{n,m}\right) \\
- (2+4i) \left\{\Delta \gamma_{n,m} \, a^{(1)}_{n,m} + \gamma_{n,m} \left(\epsilon_{n+1,m} + (a'_{n+1,m} - a^{(1)}_{n,m})\right)\right\}
\end{multline}
Thus, by (\ref{eq:asympt_recurs})--(\ref{eq:Delta_gamma}), (\ref{eq:eps_nm}), (\ref{eq:eps_nm1}) and (\ref{eq:eps_n2})--(\ref{eq:an2_a2n}), after simplification we obtain that
\begin{multline}\label{eq:eps_n2m}
\epsilon_{n+2,m} = \frac{c_2}{\sqrt{n}} e^{i 2\pi m - (2+i) \frac{m^2}{n}} (-3 - 4i)^{n} \\
\times \left\{z_0 \left((-7+24i)\frac{m^4}{n^3} + (76-82i)\frac{m^6}{n^5} +(37-84i)\frac{m^4}{n^4} - (248-136i)\frac{m^8}{n^7} \right) \right. \\
\left. + z_1 \left((-7+24i)\frac{1}{n} + (17-44i)\frac{1}{n^2} + (-29+78i)\frac{1}{n^3} + (-76+82i)\frac{m^2}{n^3} \right) \right. \\
\left. + h_{n,m} \left((3+4i) - (3+4i)\frac{1}{n} + \left(\frac{21}{4}+7i\right)\frac{1}{n^2} - \left(\frac{153}{16}+\frac{51}{4}i\right)\frac{1}{n^3} \right. \right. \\
\left. \left. + (6+8i)\frac{m^2}{n^2} + (9+12i) \frac{m^4}{n^4} - (27+36i)\frac{m^2}{n^3}\right)\right. \\
\left. + h_{n+1,m} \left((-10+20i) + (10-20i)\frac{1}{n} - \left(\frac{57}{4}-31i\right)\frac{1}{n^2} - \left(\frac{117}{16}+\frac{353}{8}i\right)\frac{1}{n^3} \right. \right. \\
\left. \left. - (82-74i)\frac{m^2}{n^2} - (257-124i)\frac{m^4}{n^4} + (247-254i)\frac{m^2}{n^3}\right)\right. \\
\left. + (4-3i)\frac{1}{n^2} - \left(\frac{275}{8}-\frac{25}{8}i\right)\frac{1}{n^3} + (114+2i)\frac{m^4}{n^4} +(7-74i)\frac{m^2}{n^3}  \right. \\
\left. + O\left(\frac{m^6}{n^6}\right) + O\left(\frac{1}{n^4}\right) \right\} .
\end{multline}

Compare the $m^4/n^4$ and the $1/n^2$ terms, respectively, in (\ref{eq:eps_nm2}) and (\ref{eq:eps_n2m}). We have equality if and only if
\begin{equation}\label{eq:z0_z1}
z_0 = \frac{1+8i}{6}, \qquad z_1 = \frac{1+i}{8} .
\end{equation}
Then express $h_{n+2,m}$ by comparing (\ref{eq:eps_nm2}) and (\ref{eq:eps_n2m}) again:
\begin{multline*}
h_{n+2,m} \\
= \frac{1}{3+4i} \, \frac{1 - \frac{1}{n} + \frac{7}{4 n^2} + 2 \frac{m^2}{n^2} - \frac{51}{16 n^3} + 3 \frac{m^4}{n^4} - 9 \frac{m^2}{n^3}}{1 - \frac{1}{n} + \frac{3}{2 n^2} + (4+2i)\frac{m^2}{n^2} - \frac{5}{2 n^3} + (6+8i)\frac{m^4}{n^4} - (12+6i)\frac{m^2}{n^3}} h_{n,m} \\
+ \frac{2+4i}{3+4i} \, \frac{1 - \frac{1}{n} + \frac{61-2i}{40 n^2} + \frac{23+9i}{5}\frac{m^2}{n^2} - \frac{259-188i}{160 n^3} + \frac{101+78i}{10}\frac{m^4}{n^4} - \frac{151+48i}{10}\frac{m^2}{n^3}}{1 - \frac{1}{n} + \frac{3}{2 n^2} + (4+2i)\frac{m^2}{n^2} - \frac{5}{2 n^3} + (6+8i)\frac{m^4}{n^4} - (12+6i)\frac{m^2}{n^3}} h_{n+1,m} \\
+ \frac{\frac{-19+317i}{400 n^3} - \frac{73-14i}{25}\frac{m^2}{n^3} + \frac{1+8i}{3}\frac{m^8}{n^7} + O\left(\frac{m^6}{n^6}\right) + O\left(\frac{1}{n^4}\right)}{1 - \frac{1}{n} + \frac{3}{2 n^2} + (4+2i)\frac{m^2}{n^2} - \frac{5}{2 n^3} + (6+8i)\frac{m^4}{n^4} - (12+6i)\frac{m^2}{n^3}} .
\end{multline*}
Simplifying this, we obtain that
\begin{multline}\label{eq:hn2mrec}
h_{n+2,m} = \frac{1}{3+4i} \,\left(1 + \frac{1}{4 n^2} - (2+2i)\frac{m^2}{n^2} + O\left(\frac{m^4}{n^4}\right) + O\left(\frac{1}{n^3}\right)\right) h_{n,m} \\
+ \frac{2+4i}{3+4i} \,\left(1 + \frac{1-2i}{40 n^2} - \frac{3-i}{5}\frac{m^2}{n^2} + O\left(\frac{m^4}{n^4}\right) + O\left(\frac{1}{n^3}\right)\right) h_{n+1,m} \\
+ \frac{-19+317i}{400 n^3} - \frac{73-14i}{25}\frac{m^2}{n^3} + \frac{1+8i}{3}\frac{m^8}{n^7} + O\left(\frac{m^6}{n^6}\right) + O\left(\frac{1}{n^4}\right) .
\end{multline}

Ignoring lower order error terms, now we want to find an asymptotic solution of the inhomogeneous linear difference equation
\begin{equation}\label{eq:hnmrec1}
h'_{n+2,m} = \mu_{n,m} h'_{n,m} + \nu_{n,m} h'_{n+1,m} + g_{n,m} ,
\end{equation}
where
\begin{eqnarray}\label{eq:hn2mcoef}
\mu_{n,m} &:=& \frac{1}{3+4i} \,\left(1 + \frac{1}{4 n^2} - (2+2i)\frac{m^2}{n^2}\right), \nonumber \\
\nu_{n,m} &:=& \frac{2+4i}{3+4i} \,\left(1 + \frac{1-2i}{40 n^2} - \frac{3-i}{5}\frac{m^2}{n^2}\right), \nonumber \\
g_{n,m} &:=& \frac{-19+317i}{400 n^3} - \frac{73-14i}{25}\frac{m^2}{n^3} + \frac{1+8i}{3}\frac{m^8}{n^7} .
\end{eqnarray}
Similarly as in (\ref{eq:recurs_sol}), we can conclude that an asymptotic general solution (ignoring lower order error terms again) of the corresponding homogeneous linear difference equation is
\begin{multline*}
h^{\text{hom}}_{n,m} := d_{1,m} u_{1,m}(n) + d_{2,m} u_{2,m}(n) , \\
u_{1,m}(n) = e^{-\frac{1}{n}(\alpha_1 m^2 + \beta_1)} , \qquad  u_{2,m}(n) = e^{-\frac{1}{n}(\alpha_2 m^2 + \beta_2)} (-3-4i)^{-n},
\end{multline*}
where $d_{1,m}, d_{2,m} \in \mathbb{C}$. Comparing with (\ref{eq:hn2mcoef}), it follows that
\begin{equation}\label{eq:hnmhom}
h^{\text{hom}}_{n,m} = d_{1,m} e^{\frac{-1+i}{16 n}} + d_{2,m} e^{(2+2i)\frac{m^2}{n} - \frac{3+i}{16 n}} (-3-4i)^{-n} .
\end{equation}

Then we are going to find an (asymptotic) particular solution of the inhomogeneous difference equation (\ref{eq:hnmrec1}). By variation of parameters, see e.g. \cite[Section 3.2]{KEPE91}, we seek a solution as
\[
h^{\text{inh}}_{n,m} = f_{1,m}(n) u_{1,m}(n) + f_{2,m} u_{2,m}(n) ,
\]
where $f_{j,m}(n)$ $(j=1,2)$ are unknown functions. Then one obtains a linear system of equations for the differences $\Delta f_{j,m}(n) := f_{j,m}(n+1) - f_{j,m}(n)$:
\begin{eqnarray*}
\Delta f_{1,m}(n) u_{1,m}(n+1) + \Delta f_{2,m}(n) u_{2,m}(n+1) &=& 0 \\
\Delta f_{1,m}(n) u_{1,m}(n+2) + \Delta f_{2,m}(n) u_{2,m}(n+2) &=& g_{n,m} .
\end{eqnarray*}
Thus a particular solution is
\begin{multline}\label{eq:hnminh}
h^{\text{inh}}_{n,m} = \sum_{k=1}^n g_{k,m} \frac{e^{\frac{1-i}{16} \left(\frac{1}{k+2} - \frac{1}{n}\right)}}{1 + (3+4i)^{-1} e^{-\frac{(2+2i)\left(m^2 - \frac{1}{16}\right)}{(k+1)(k+2)}}} \\
+ \sum_{k=1}^n g_{k,m} \frac{e^{\left(-(2+2i)m^2 + \frac{3+i}{16}\right) \left(\frac{1}{k+2} - \frac{1}{n}\right)} (-3-4i)^{k-n+2}}{1 + (3+4i) e^{\frac{(2+2i)\left(m^2 - \frac{1}{16}\right)}{(k+1)(k+2)}}} .
\end{multline}

Lemma \ref{le:hinf} below shows that $h^{\text{inh}}_{n,m}$ uniformly converges to a finite complex limit for any $|m| \le \frac13 n^{\frac23}$ as $n \to \infty$:
\begin{equation}\label{eq:hinfty}
h^{\text{inh}}_{\infty,m} := \lim_{n \to \infty} h^{\text{inh}}_{n,m} = \sum_{k=1}^{\infty} g_{k,m} \frac{e^{\frac{1-i}{16(k+2)}}}{1 + (3+4i)^{-1} e^{-\frac{(2+2i)\left(m^2 - \frac{1}{16}\right)}{(k+1)(k+2)}}} .
\end{equation}
For example, $h^{\text{inh}}_{\infty,0} \approx -0.156498 + 0.849953 i$.

Thus an asymptotic solution of the difference equation (\ref{eq:hnmrec1}) is
\begin{equation}\label{eq:hdpnm}
h''_{n,m} := h^{\text{hom}}_{n,m} + h^{\text{inh}}_{n,m} ,
\end{equation}
as defined by (\ref{eq:hnmhom}) and (\ref{eq:hnminh}). Let us replace $h_{n,m}$ by $h''_{n,m}$ in (\ref{eq:eps_nm}), since this way we can explicitly compute approximate values of coefficients.

For $1 \le m \le \frac13 n^{\frac23}$, we choose the coefficient $c_{1,m}$ so that $a'_{m,m} = a_{m,m} = (-1)^m 2^{-2m}$. (So then the error term be zero.) It implies that
\[
c_{1,m} = (-1)^m 2^{-2m} \sqrt{m} e^{-im} - c_2 e^{-(2+2i)m} (-3-4i)^m .
\]
When $m = 0$, we have to modify this, since $a'_{0,0}$ is not defined. Then we demand that $a'_{1,0} = a_{1,0} = -\frac12 - 2i$. It gives
\[
c_{1,0} = -\frac12 - 2i + (3 + 4i) c_2.
\]
From these it follows that there is a positive constant $k_1$ such that $|c_{1,m}| \le k_1$ for any $0 \le m \le \frac13 n^{\frac23}$, as was claimed.

To find the coefficients $d_{1,m}$ and $d_{2,m}$ in (\ref{eq:hnmhom}) we demand two boundary conditions. First, we want that the (approximate) error term
\begin{multline}\label{eq:eps_nm_appr}
\epsilon''_{n,m} :=  \frac{c_2}{\sqrt{n}} e^{- (2+i) \frac{m^2}{n}} (-3-4i)^{n} \\
\times \left(z_0 \frac{m^4}{n^3} + \frac{z_1}{n} + d_{1,m} e^{\frac{-1+i}{16 n}} + d_{2,m} e^{(2+2i)\frac{m^2}{n} - \frac{3+i}{16 n}} (-3-4i)^{-n} + h^{\text{inh}}_{n,m} \right)
\end{multline}
be $0$ when $n=m$ $(m \ge 1)$ and $n=1$ $(m=0)$. Second, we demand that the factor in parentheses above tend to $0$ as $n \to \infty$. From the second condition it follows that
\begin{equation}\label{eq:d1m}
d_{1,m} = -h^{\text{inh}}_{\infty,m} \qquad \left(0 \le m \le \frac13 n^{\frac23}\right).
\end{equation}

From the first condition it follows that
\begin{equation}\label{eq:d2m}
d_{2,m} = -\left(z_0 m + \frac{z_1}{m} + d_{1,m} e^{\frac{-1+i}{16 m}} + h^{\text{inh}}_{m,m}\right) (-3-4i)^m e^{-(2+2i)m + \frac{3+i}{16 m}}
\end{equation}
$(1 \le m \le \frac13 n^{\frac23})$ and
\begin{equation}\label{eq:d20}
d_{2,0} = \left(z_1 + d_{1,0} e^{\frac{-1+i}{16}} + h^{\text{inh}}_{1,0}\right) (3+4i) e^{\frac{3+i}{16}} \qquad (m = 0) .
\end{equation}

(\ref{eq:d1m}) and Lemma \ref{le:hinf} below show that
\[
|d_{1,m}| \le \kappa_7 (1 + m^8)  \qquad (m \in \mathbb{Z}).
\]
Hence (\ref{eq:d2m}), (\ref{eq:d20}) and Lemma \ref{le:hinf} imply that $|d_{2,m}| \to 0$ as $m \to \infty$. Consequently, there exists a positive constant $k_4$ such that $|d_{2,m}| \le k_4$ for any $m \in \mathbb{Z}$.

Since the error term $h_{n,m}$ is asymptotically equal to $h''_{n,m}$, by these and Lemma \ref{le:hinf} we conclude that
\begin{multline*}
|h_{n,m}| \sim |h''_{n,m}| = \left| -h^{\text{inh}}_{\infty,m} e^{\frac{1-i}{16 n}} + d_{2,m} e^{(2+2i)\frac{m^2}{n} - \frac{3+i}{16 n}} (-3-4i)^{-n} + h^{\text{inh}}_{n,m} \right| \\
\le \kappa_8 n^{-\frac23} + k_4 \, e^{\frac29 n^{1/3}} 5^{-n} \le k_2 \, n^{-\frac23}
\end{multline*}
for any $|m| \le \frac13 n^{\frac23}$, with some positive constant $k_2$. This completes the proof of the theorem.

\end{proof}

\begin{lem}\label{le:hinf}
If $h^{\text{inh}}_{n,m}$ is defined by (\ref{eq:hnminh}) and $h^{\text{inh}}_{\infty,m}$ is defined by (\ref{eq:hinfty}), then
\[
|h^{\text{inh}}_{\infty,m}| < \kappa_7 (1 + m^8) \quad  (m \in \mathbb{Z}),
\]
\[
|h^{\text{inh}}_{\infty,m} - h^{\text{inh}}_{n,m}| \le \kappa_8 \, n^{-\frac23}  \quad (|m| \le \frac13 n^{\frac23}, n \ge 1) ,
\]
which  converges to $0$, uniformly in $m$, as $n \to \infty$. Finally,
\[
|h^{\text{inh}}_{\infty,m} - h^{\text{inh}}_{m,m}| \le \kappa_9 (1 + m^2) \quad  (m \in \mathbb{Z}) .
\]
Here and in the proof, each $\kappa_r$ is a finite, positive constant.
\end{lem}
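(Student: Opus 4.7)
The plan is to bound the sums in (\ref{eq:hnminh}) and (\ref{eq:hinfty}) by combining a pointwise estimate on the source $g_{k,m}$ with uniform bounds on the kernel moduli, then to deduce each of the three claims from a common decomposition.

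From (\ref{eq:hn2mcoef}) an elementary estimate gives $|g_{k,m}| \le C(1+m^2)/k^3 + C' m^8/k^7$ for all $k \ge 1$ and $m \in \mathbb{Z}$. For the first kernel, set $z_k := (3+4i)^{-1} e^{-(2+2i)(m^2-\frac{1}{16})/((k+1)(k+2))}$; the real part of the exponent is $-2(m^2 - \frac{1}{16})/((k+1)(k+2))$, which is $\le 1/48$ uniformly, so $|z_k| \le e^{1/48}/5 < 1/4$ and $|1+z_k| \ge 3/4$. Together with $|e^{(1-i)/(16(k+2))}| \le e^{1/48}$ this yields $|K^{(1)}(n,k,m)| \le C_0$ uniformly, where $K^{(1)}$ denotes the first-sum kernel. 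A parallel calculation shows that the denominator of the second-sum kernel $K^{(2)}$ is bounded below by a positive constant, while its numerator carries the prefactor $5^{k-n+2}$ together with a Gaussian-type damping $\exp(-2m^2(1/(k+1) - 1/n) + O(1))$ arising from the $-(2+2i)m^2$ piece.

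The first claim is then immediate: $|h^{\mathrm{inh}}_{\infty,m}| \le C_0 \sum_k |g_{k,m}| \le C_1(1+m^2) + C_2 m^8 \le \kappa_7(1+m^8)$, using convergence of $\sum k^{-3}$ and $\sum k^{-7}$. For the second claim I would write
\[
h^{\mathrm{inh}}_{\infty,m} - h^{\mathrm{inh}}_{n,m} = T_{\mathrm{tail}} + T_{\mathrm{ker}} - T_2,
\]
with $T_{\mathrm{tail}} = \sum_{k>n} g_{k,m} K^{(1)}(\infty,k,m)$, $T_{\mathrm{ker}} = \sum_{k\le n} g_{k,m}[K^{(1)}(\infty,k,m) - K^{(1)}(n,k,m)]$, and $T_2$ the second sum in $h^{\mathrm{inh}}_{n,m}$. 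The tail is $O((1+m^2)/n^2 + m^8/n^6)$, which is $O(n^{-2/3})$ under $|m| \le \tfrac13 n^{2/3}$. Since the only $n$-dependence in $K^{(1)}(n,k,m)$ is the scalar factor $e^{-(1-i)/(16n)}$, $T_{\mathrm{ker}}$ factors as $(1 - e^{-(1-i)/(16n)})\sum_{k\le n} g_{k,m} K^{(1)}(\infty,k,m)$, of size $O(1/n)$ times a partial sum. For $T_2$, the geometric factor $5^{k-n+2}$ concentrates the mass near $k = n$, where $|g_{k,m}| = O(n^{-5/3})$ under the $m$-restriction, yielding $|T_2| = O(n^{-5/3})$. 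The third claim follows from the same decomposition at $n = m$: the tail sum $\sum_{k>m}|g_{k,m}|$ is $O(1/m^2 + 1 + m^2) = O(1+m^2)$, the second sum at $n=m$ is dominated by its $k = m$ term and is again $O(1+m^2)$, and the kernel-mismatch term is controlled analogously, producing the bound $\kappa_9(1+m^2)$.

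The main obstacle is sharpening the crude inequality $|T_{\mathrm{ker}}| \le O(1/n)\cdot|h^{\mathrm{inh}}_{\infty,m}|$, since $|h^{\mathrm{inh}}_{\infty,m}|$ may be as large as $1+m^8$ while the target is $O(n^{-2/3})$; a refinement exploiting the $m$-dependent exponential damping implicit in $z_k$ and cutting the partial sum at an appropriate $k_0 = k_0(m,n)$, together with the partial cancellation between $T_{\mathrm{ker}}$ and the near-diagonal contribution of $T_2$, is what forces the particular range $|m| \le \tfrac13 n^{2/3}$ stated in the lemma and links the estimate to the subsequent application of Lemma \ref{le:hinf} in the proof of Theorem \ref{th:asympt_soln}.
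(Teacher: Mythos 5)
Your decomposition and the individual estimates you actually carry out coincide with the paper's own argument. The pointwise bound $|g_{k,m}| \le C(1+m^2)k^{-3} + C' m^8 k^{-7}$ from (\ref{eq:hn2mcoef}), the uniform bounds on the two kernel denominators ($|1+z_k|\ge 3/4$ for the first, a positive lower bound for the second), the resulting estimate $|h^{\text{inh}}_{\infty,m}| \le \kappa_7(1+m^8)$, the tail bound $\sum_{k>n}|g_{k,m}| = O(n^{-2/3})$ under $|m|\le\frac13 n^{2/3}$, and the treatment of the second sum via the geometric factor $5^{k-n}$ are all exactly what the paper does; your $O(n^{-5/3})$ for $T_2$ is in fact slightly sharper than the paper's $O(n^{-2/3})$, which simply bounds the geometric factor by a constant on the range $k>n/2$. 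The splitting of the third claim at $n=m$ likewise matches the paper.

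The genuine gap is the term you yourself flag, $T_{\mathrm{ker}} = (1-e^{-(1-i)/(16n)})\sum_{k\le n} g_{k,m}K^{(1)}(\infty,k,m)$, and the remedies you propose cannot close it. You proved that the kernel $1/(1+z_k)$ is bounded above and below by absolute constants, so there is no $m$-dependent damping left to exploit; moreover the arguments of the kernels $K^{(1)}(\infty,k,m)$ all lie in a narrow band around $0$, so the $m^8k^{-7}$ part of $g_{k,m}$ suffers no cancellation and the partial sum genuinely has modulus of order $m^8$ for large $m$. Hence $|T_{\mathrm{ker}}| \asymp m^8/n$, which for $m$ near $\frac13 n^{2/3}$ is of order $n^{13/3}$; it obviously cannot be cancelled by $T_2 = O(n^{-5/3})$, and truncating the partial sum at some $k_0(m,n)$ does not reduce $\sum_{k}m^8k^{-7}$. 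The same term reappears in the third claim at $n=m$, where it is of order $m^7$ rather than $O(1+m^2)$, so your remark that it is ``controlled analogously'' is not justified. You should know that the paper's own proof bounds $|h^{\text{inh}}_{\infty,m}-h^{\text{inh}}_{n,m}|$ by (second sum) $+$ (tail of the first sum) and never mentions the mismatch between the kernels $e^{\frac{1-i}{16}(\frac{1}{k+2}-\frac{1}{n})}$ in (\ref{eq:hnminh}) and $e^{\frac{1-i}{16(k+2)}}$ in (\ref{eq:hinfty}): the term you isolated is omitted there without comment. Since $|h^{\text{inh}}_{\infty,m}|$ really does grow like $m^8$, the second and third inequalities appear to fail for $m$ large relative to $n^{1/24}$ unless the definitions are modified (e.g.\ by absorbing the factor $e^{-(1-i)/(16n)}$ into the comparison). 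So you have identified a real defect in the published argument, but as written your proof of the second and third inequalities is incomplete.
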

\begin{proof}
Let us begin with the second sum in (\ref{eq:hnminh}). We are going to show that the limit of the second sum is $0$ when $|m| \le \frac13 n^{\frac23}$:
\begin{multline*}
\sum_{k=1}^n \left|g_{k,m} \frac{e^{\left(-(2+2i)m^2 + \frac{3+i}{16}\right) \left(\frac{1}{k+2} - \frac{1}{n}\right)} (-3-4i)^{k-n+2}}{1 + (3+4i) e^{\frac{(2+2i)\left(m^2 - \frac{1}{16}\right)}{(k+1)(k+2)}}} \right| \\
\le \kappa_1 \sum_{k=1}^n
\left(\frac{1+m^2}{k^3} + \frac{m^8}{k^7}\right) e^{\left(-2m^2 + \frac{3}{16}\right)\frac{n-2-k}{n(k+2)}} (-3-4i)^{k-n} \\
\le \kappa_1 e^{\frac{1}{16} + n^{-\frac23}} \sum_{k=1}^n
\left(\frac{1+m^2}{k^3} + \frac{m^8}{k^7}\right) (-3-4i)^{k-n} \\
\le \kappa_2 (-3-4i)^{-\frac{n}{2}} \sum_{k=1}^{\lfloor\frac{n}{2}\rfloor}
\left(\frac{1+m^2}{k^3} + \frac{m^8}{k^7}\right)  + \kappa_2 \sum^n_{k=\lfloor\frac{n}{2}\rfloor + 1}
\left(\frac{1+m^2}{k^3} + \frac{m^8}{k^7}\right) \\
\le \kappa_3 \, n^{\frac{16}{3}} (-3-4i)^{-\frac{n}{2}} + \kappa_4 \left(\frac{n^{\frac43}}{n^2} + \frac{n^{\frac{16}{3}}}{n^6}\right) \le \kappa_5 \, n^{-\frac23},
\end{multline*}
which  converges to $0$, uniformly in $m$, as $n \to \infty$. In the first inequality above we used that the modulus of the denominator is clearly bigger than $5$ for any $m$ and $k$. In the second inequality we separately considered the exponent when $k \le n-2$ and when $k=n-1, n$. In the last inequality we used the assumption $|m| \le \frac13 n^{\frac23}$.

It is easy to see that the limit $h^{\text{inh}}_{\infty,m}$ is finite:
\begin{multline*}
\sum_{k=1}^{n} \left|g_{k,m} \frac{e^{\frac{1-i}{16}\left(\frac{1}{k+2} - \frac{1}{n}\right)}}{1 + (3+4i)^{-1} e^{-\frac{(2+2i)\left(m^2 - \frac{1}{16}\right)}{(k+1)(k+2)}}}\right| \\
\le \kappa_6 \sum_{k=1}^{\infty} \left(\frac{1+m^2}{k^3} + \frac{m^8}{k^7}\right) \le \kappa_7 (1 + m^8) < \infty
\end{multline*}
for any $m \in \mathbb{Z}$ and $n \ge 1$. In the first inequality above we used that the modulus of the denominator is clearly bigger than $4/5$ for any $m$ and $k$.

It follows from the above estimates that when $|m| \le \frac13 n^{\frac23}$,
\[
|h^{\text{inh}}_{\infty,m} - h^{\text{inh}}_{n,m}| \le \kappa_5 n^{-\frac23} + \kappa_6 \sum_{k=n+1}^{\infty} \left(\frac{1+m^2}{k^3} + \frac{m^8}{k^7}\right)
\le \kappa_8 n^{-\frac23},
\]
which  converges to $0$, uniformly in $m$, as $n \to \infty$.

Finally, using the above estimates again, we obtain
\begin{multline*}
|h^{\text{inh}}_{\infty,m} - h^{\text{inh}}_{m,m}| \le \kappa_2 (-3-4i)^{-\frac{m}{2}} \sum_{k=1}^{\lfloor\frac{m}{2}\rfloor} \left(\frac{1+m^2}{k^3} + \frac{m^8}{k^7}\right) \\
+ \kappa_2 \sum^m_{k=\lfloor\frac{m}{2}\rfloor + 1} \left(\frac{1+m^2}{k^3} + \frac{m^8}{k^7}\right) + \kappa_6 \sum_{k=m+1}^{\infty} \left(\frac{1+m^2}{k^3} + \frac{m^8}{k^7}\right)
\le \kappa_9 (1 + m^2).
\end{multline*}
This completes the proof of the lemma.
\end{proof}

\begin{rem} \label{re:c2}
By Theorem \ref{th:asympt_soln} it follows that
\[
c_2 = \lim_{n \to \infty} \frac{a_{n,0} \sqrt{n}}{(-3-4i)^n} .
\]
It is conjectured that
\[
c_2 = \sqrt{\frac{(2+i) \pi}{40}} \approx 0.40786 + i\, 0.0962827 .
\]
Taking e.g.\ $n = 1000$, one obtains the approximation
\[
c_2 \approx \frac{a_{n,0} \sqrt{n}}{(-3-4i)^n} \approx 0.410514 + i \, 0.0969363 ,
\]
which is relatively close to our conjecture.
\end{rem}

\begin{cor} \label{co:asympt_sol}
Theorem \ref{th:asympt_soln} implies that there exists a constant $C' > 0$ such that
\[
\frac{\mu(S_{\ell} = j)}{\sqrt{\frac{2}{\ell}} c_2 e^{i \pi j - (2+i) \frac{j^2}{2\ell}} (-3-4i)^{\frac{\ell}{2}}} = 1 + \delta_{\ell,j} ,
\]
where $|\delta_{\ell,j}| \le C' \ell^{-\frac13}$ for any $\ell \ge 1$ and $|j| \le \frac13 \ell^{\frac23}$ .
Then also
\begin{equation}\label{eq:abs_error}
1 - C' \ell^{-\frac13} \le \frac{|\mu(S_{\ell} = j)|}{\sqrt{\frac{2}{\ell}} |c_2| e^{-\frac{j^2}{\ell}} 5^{\frac{\ell}{2}}} \le 1 + C' \ell^{-\frac13} .
\end{equation}

The error term (\ref{eq:error_term1}) is really negligible compared to the main term for any $\ell \ge 1$ and $|j| \le \frac13 \ell^{\frac23}$:
\[
\frac{|\widetilde{\epsilon}_{\ell,j}|}{\sqrt{\frac{2}{\ell}} |c_2| e^{-\frac{j^2}{\ell}} 5^{\frac{\ell}{2}}} \le C' \ell^{-\frac13} .
\]
\end{cor}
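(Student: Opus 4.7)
The corollary is a direct repackaging of Theorem~\ref{th:asympt_soln}, so the plan is simply to normalize the asymptotic identity by its dominant summand and read off the decay rates. Set
\[
M_{\ell,j} := \sqrt{\tfrac{2}{\ell}}\, c_2 \, e^{i\pi j - (2+i) j^2/(2\ell)} (-3-4i)^{\ell/2},
\]
and observe that $|M_{\ell,j}| = |c_2|\sqrt{2/\ell}\, e^{-j^2/\ell}\, 5^{\ell/2}$, since $|{-3-4i}|=5$ and $|e^{-(2+i)j^2/(2\ell)}| = e^{-j^2/\ell}$. This is precisely the normalization appearing on the left of the second and third displays in the statement. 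Theorem~\ref{th:asympt_soln} now reads
\[
\mu(S_\ell = j) = \sqrt{\tfrac{2}{\ell}}\, \widetilde{c}_{1,j}\, e^{i j^2/(2\ell)} + M_{\ell,j} + \widetilde{\epsilon}_{\ell,j},
\]
so dividing through by $M_{\ell,j}$ yields $1 + \delta_{\ell,j}$ with $\delta_{\ell,j}$ the sum of two pieces to be bounded separately.

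The first piece is $\sqrt{2/\ell}\,\widetilde{c}_{1,j}\, e^{i j^2/(2\ell)}/M_{\ell,j}$, whose modulus equals $|\widetilde{c}_{1,j}|/(|c_2|\, e^{-j^2/\ell}\, 5^{\ell/2})$. Using $|\widetilde{c}_{1,j}|\le \widetilde{k}_1$ and the range $|j|\le \tfrac{1}{3}\ell^{2/3}$ (which gives $e^{j^2/\ell}\le e^{\ell^{1/3}/9}$), this is at most $\widetilde{k}_1 e^{\ell^{1/3}/9}/(|c_2|\, 5^{\ell/2})$, which decays super-polynomially in $\ell$ and is trivially $O(\ell^{-1/3})$. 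The second piece, by (\ref{eq:error_term1}), equals
\[
\frac{\widetilde{\epsilon}_{\ell,j}}{M_{\ell,j}} = z_0\, \frac{j^4}{2\ell^3} + z_1\, \frac{2}{\ell} + \widetilde{h}_{\ell,j}.
\]
In the same range of $j$, the first term is bounded by $|z_0|\,\ell^{-1/3}/162$, the middle term is $O(\ell^{-1})$, and $|\widetilde{h}_{\ell,j}|\le \widetilde{k}_2\,\ell^{-2/3}$ by hypothesis; altogether the piece is $O(\ell^{-1/3})$, and this bound alone already furnishes the third display of the corollary.

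Summing the two bounds yields $|\delta_{\ell,j}|\le C'\ell^{-1/3}$ with $C'$ an explicit constant built from $\widetilde{k}_1,\widetilde{k}_2,|z_0|,|z_1|,|c_2|$, which gives the first display. The second display is immediate from $\bigl||1+\delta_{\ell,j}|-1\bigr|\le |\delta_{\ell,j}|$ applied to the modulus of the first identity. There is no real obstacle here; the only thing worth noting is that the exponentially small contribution from the $\widetilde{c}_{1,j}$-term is comfortably absorbed into the $O(\ell^{-1/3})$ rate governed by $z_0\, j^4/(2\ell^3)$, so a single constant $C'$ works uniformly over the stated range of $\ell$ and $j$.
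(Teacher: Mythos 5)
Your proposal is correct and is exactly the intended argument: the paper states Corollary~\ref{co:asympt_sol} without proof as an immediate consequence of Theorem~\ref{th:asympt_soln}, and your normalization by the dominant summand $M_{\ell,j}$, the observation $|M_{\ell,j}|=\sqrt{2/\ell}\,|c_2|\,e^{-j^2/\ell}\,5^{\ell/2}$, and the bounds $j^4/(2\ell^3)\le \ell^{-1/3}/162$, $|\widetilde h_{\ell,j}|\le \widetilde k_2\ell^{-2/3}$, plus the super-polynomially small $\widetilde c_{1,j}$-contribution, are precisely how one reads the corollary off the theorem.
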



\section{Coupled probability distributions} \label{se:coupled_prob}

\subsection{One-dimensional distributions} \label{sse:1D}

\begin{thm}\label{th:tailprob}
If $\ell \ge \ell_0$, with definition (\ref{eq:coupl_prob}) we have
\[
\frac{\mathbb{P}(S_{\ell}=j+2)}{\mathbb{P}(S_{\ell}=j)} \le e^{-2 \frac{j}{\ell}} \le 1 \qquad (0 \le j \le \ell - 2).
\]
\end{thm}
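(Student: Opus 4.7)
The plan is to apply Corollary \ref{co:asympt_sol} to both $\mu(S_\ell=j)$ and $\mu(S_\ell=j+2)$. In the bulk range $|j|\le \tfrac13\ell^{2/3}$ this gives
\[
\frac{|\mu(S_\ell=j+2)|^2}{|\mu(S_\ell=j)|^2} \;=\; e^{-2((j+2)^2-j^2)/\ell}\cdot\frac{|1+\delta_{\ell,j+2}|^2}{|1+\delta_{\ell,j}|^2} \;=\; e^{-(8j+8)/\ell}\cdot E_{\ell,j},
\]
with $|\delta_{\ell,j}|\le C'\ell^{-1/3}$. Writing the leading factor as $e^{-2j/\ell}\cdot e^{-(6j+8)/\ell}$, the principal part already undercuts the target $e^{-2j/\ell}$ by the slack $e^{-(6j+8)/\ell}\le 1$, so the remaining work is to verify $E_{\ell,j}\le e^{(6j+8)/\ell}$ for each $j$ in the relevant range, provided $\ell\ge\ell_0$.

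For $j$ of order $\ell^{2/3}$ the slack grows fast enough to absorb the $(1\pm C'\ell^{-1/3})^2$ error effortlessly. The delicate case is small $j$ (e.g.\ $j=0$), where the slack is only $e^{8/\ell}\approx 1+8/\ell$. I would then invoke the sharper decomposition
\[
\delta_{\ell,j} \;=\; z_0\,\frac{j^4}{2\ell^3} + \frac{2z_1}{\ell} + \widetilde{h}_{\ell,j} + O(5^{-\ell/2})
\]
from Theorem \ref{th:asympt_soln}: the $j$-independent piece $2z_1/\ell$ cancels in the difference $\delta_{\ell,j+2}-\delta_{\ell,j}$, and the polynomial piece contributes only $O(j^3/\ell^3)=O(1/\ell)$ on $|j|\le\tfrac13\ell^{2/3}$, so that $E_{\ell,j}\le 1 + O\bigl(|\widetilde{h}_{\ell,j+2}-\widetilde{h}_{\ell,j}| + 1/\ell\bigr)$. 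A uniform $O(1/\ell)$ bound on $|\widetilde{h}_{\ell,j+2}-\widetilde{h}_{\ell,j}|$ then closes the argument; this should come by differencing in $m$ inside the explicit formula (\ref{eq:hnminh}), using the estimates from Lemma \ref{le:hinf} applied to $h^{\text{inh}}_{n,m+1}-h^{\text{inh}}_{n,m}$ rather than to $h^{\text{inh}}_{n,m}$ alone.

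For the tail range $j>\tfrac13\ell^{2/3}$, outside the Corollary's range of validity, I would fall back on the first-order reverse recursion (\ref{eq:rho}) for $\rho_{n,m}=a_{n,m+1}/a_{n,m}$ (with $a_{n,m}=\mu(S_{2n}=2m)$), propagating from $\rho_{n,n}=0$ inward and checking along the way that $|\rho_{n,m}|$ decays rapidly enough in $n-m$ to satisfy the ratio bound with room to spare. The principal obstacle is precisely the small-$j$ balance in the bulk: the off-the-shelf $O(\ell^{-1/3})$ bound on individual errors $\delta_{\ell,j}$ is too weak compared with the $O(1/\ell)$ slack, so one is forced to upgrade the pointwise estimate into an $O(1/\ell)$ estimate on the first $j$-difference of $\widetilde{h}_{\ell,j}$; that refinement, rather than the leading-order asymptotic comparison, is where the real technical work will lie.
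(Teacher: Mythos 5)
Your overall architecture matches the paper's: handle the bulk range $|j|\le\tfrac13\ell^{\frac23}$ via the asymptotics of Corollary \ref{co:asympt_sol}, and handle the tail range via the reverse recursion (\ref{eq:rho}) for $\rho_{n,m}=a_{n,m+1}/a_{n,m}$ started from $\rho_{n,n}=0$. (The paper, in its tail analysis, goes one step further than you do and extracts the approximate closed form $-\log\rho_{n,m}\approx 4\tanh^{-1}x+i\tan^{-1}\bigl(4x^2/(1-x^2)\bigr)$ with $x=m/n$, from which $|\rho_{n,m}|\le e^{-2m/n}$ is immediate since $\tanh^{-1}x\ge x$; your ``propagate inward and check decay'' plan is the same idea without the explicit fixed point.) Your diagnosis of the bulk range is in fact sharper than the paper's: the paper's proof is explicitly only a sketch, and at small $j$ it simply asserts that Theorem \ref{th:asympt_soln} and Corollary \ref{co:asympt_sol} give $|\rho_{n,m}|\le e^{-2m/n}$, which, as you correctly observe, does not follow from the stated $O(\ell^{-1/3})$ error bound: at $j=0$ the available margin is only $e^{-8/\ell}$, and $1+C'\ell^{-1/3}\le e^{8/\ell}$ fails for large $\ell$. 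You have located the real obstruction.

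The gap is that your proposed repair is not actually available from the paper's results. You need $|\widetilde h_{\ell,j+2}-\widetilde h_{\ell,j}|=O(1/\ell)$ uniformly in the bulk, and you propose to obtain it by differencing the explicit formula (\ref{eq:hnminh}) in $m$. But (\ref{eq:hnminh}) defines the \emph{model} quantity $h^{\text{inh}}_{n,m}$ (and $h''_{n,m}=h^{\text{hom}}_{n,m}+h^{\text{inh}}_{n,m}$), whereas the quantity entering $\delta_{\ell,j}$ is the true remainder $h_{n,m}$, which the proof of Theorem \ref{th:asympt_soln} identifies with $h''_{n,m}$ only ``asymptotically'', with no quantitative control on $h_{n,m}-h''_{n,m}$ beyond the blanket $O(n^{-2/3})$. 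Differencing (\ref{eq:hnminh}) therefore controls the first difference of $h''$, not of $h$; the uncontrolled discrepancy re-enters at exactly the order $\ell^{-2/3}$ you are trying to beat. Closing this would require rerunning the recursion (\ref{eq:hn2mrec}) for the differenced quantity $h_{n,m+1}-h_{n,m}$ (or, alternatively, establishing the ratio bound directly from (\ref{eq:rho}) down to $m=0$ rather than only for $m>\tfrac13 n^{\frac23}$), which is genuinely new work, not a corollary of Lemma \ref{le:hinf}. So your proposal correctly reduces the theorem to a concrete missing estimate but does not prove that estimate; in fairness, neither does the paper, whose sketch leaves the same hole and which itself remarks that the full claim rests partly on numerical evidence.
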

\begin{proof} (Sketch.)

We restrict ourselves to the case where $\ell=2n$ and $j=2m$, even. The cases with other parities can be treated similarly. Define $a_{n,m} := \mu(S_{2n} = 2m)$. The starting point is the reverse recursion (\ref{eq:rho}) for $\rho_{n,m} = a_{n,m+1}/a_{n,m}$. We have to show that
\begin{equation}\label{eq:rho_abs}
|\rho_{n,m}| \le e^{-2 \frac{m}{n}}  \qquad (n-1 \ge m \ge 0).
\end{equation}

By Theorem \ref{th:asympt_soln} and Corollary \ref{co:asympt_sol}, when $n$ is large enough and $0 \le m \le \frac13 n^{\frac23}$, we have $|\rho_{n,m}| \le e^{-2 \frac{m}{n}}$. So from now it is enough to consider the case $\frac13 n^{\frac23} < m \le n-1$. If $n$ is large enough, the recursion (\ref{eq:rho}) can be approximated by
\begin{equation}\label{eq:rho_approx1}
\rho_{n,m} \approx \frac{(1-x)^2}{2(1-x^2) + i 8 x^2 -(1+x)^2 \rho_{n,m+1}} ,
\end{equation}
where $x := m/n \in [0, 1]$. Equivalently,
\begin{equation}\label{eq:rho_approx2}
- \log(\rho_{n,m}) \approx 4 \tanh^{-1} x + \log\left(\frac{2(1-x^2) + i 8 x^2}{(1+x)^2} - \rho_{n,m+1}\right) .
\end{equation}

An approximate solution of (\ref{eq:rho_approx2}) is
\begin{equation}\label{eq:rho_approx3}
- \log(\rho_{n,m}) \approx 4 \tanh^{-1} x + i \tan^{-1} \left(\frac{4 x^2}{1-x^2}\right) .
\end{equation}
Really, substituting this for $\rho_{n,m+1} \approx \rho_{n,m}$ in (\ref{eq:rho_approx2}), the error is negligible compared to (\ref{eq:rho_approx3}) for any $x \in [0, 1]$.

Then (\ref{eq:rho_approx3}) shows that (\ref{eq:rho_abs}) holds true.
\end{proof}

Simulations with Wolfram Mathematica lead to the conjecture that Theorem \ref{th:tailprob} holds for any $\ell \ge 1$, that is, $\ell_0 = 1$.

\begin{thm}\label{th:norm_factor}
For the normalizing factor (\ref{eq:norm_fact}) there exists a constant $C_1 > 0$ such that for any $\ell \ge 1$,
\begin{equation}\label{eq:norm_factor}
1 - C_1 \ell^{-\frac13} \le \frac{Z_{\ell}}{\sqrt{\frac{2 \pi}{\ell}} |c_2|^2 5^{\ell}} \le 1 + C_1 \ell^{-\frac13} .
\end{equation}
For the coupled probability (\ref{eq:coupl_prob}) we have a constant $C_2 > 0$ such that
\begin{equation}\label{eq:coupled_prob}
1 - C_2 \ell^{-\frac13}  \le \frac{\mathbb{P}(S_{\ell} = j)}{\sqrt{\frac{2}{\pi \ell}} e^{- \frac{2j^2}{\ell}}} \le 1 + C_2 \ell^{-\frac13}  ,
\end{equation}
for any $\ell \ge 1$ and $|j| \le \frac13 \ell^{\frac23}$.

There exists a constant $C_3 > 0$ such that
\begin{equation}\label{eq:binom_even}
1 - C_3 n^{-\frac13} \le \frac{\mathbb{P}(S_{2n} = m)}{\mathbb{P}^*(S^*_{2n} = 2m)} \le 1 + C_3 n^{-\frac13}
\end{equation}
and
\begin{equation}\label{eq:binom_odd}
1 - C_3 n^{-\frac13} \le \frac{\mathbb{P}(S_{2n-1} = m)}{\mathbb{P}^*(S^*_{2n-1} = 2m-1)} \le 1 + C_3 n^{-\frac13} ,
\end{equation}
for any $n \ge 1$ and $|m| \le \frac13 n^{\frac23}$, where $(S^*_{\ell})_{\ell \ge 0}$ is an ordinary simple, symmetric random walk w.r.t.\ the probability $\mathbb{P}^*$,
\[
\mathbb{P}^*(S^*_{\ell} = j) = \binom{\ell}{\frac{\ell + j}{2}} 2^{-\ell} \qquad (|j| \le \ell).
\]
(The binomial coefficient is defined to be 0 if $\ell + j$ is odd.)

The sum of the tail terms is asymptotically negligible:
\begin{equation}\label{eq:tailterms}
\sum_{|j| > \frac13 \ell^{\frac23}} \mathbb{P}(S_{\ell} = j) \le (1 + C_4 \ell^{-\frac13}) 4 \sqrt{\frac{\ell}{2 \pi}} e^{-\frac29 \ell^{\frac13}}
\end{equation}
for any $\ell \ge 1$, where $C_4 > 0$ is a constant.

Finally, we have symmetry
\begin{equation}\label{eq:symprob}
\mathbb{P}(S_{\ell} = j) = \mathbb{P}(S_{\ell} = -j) \qquad (\ell \ge 1, |j| \le \ell).
\end{equation}
\end{thm}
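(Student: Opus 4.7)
My strategy splits the analysis into a bulk/tail decomposition at $j_\ast:=\lfloor\tfrac{1}{3}\ell^{2/3}\rfloor$: on the bulk I apply Corollary~\ref{co:asympt_sol}, and on the tail I iterate the ratio bound of Theorem~\ref{th:tailprob}. The symmetry (\ref{eq:symprob}) is immediate because $\mu(X_r=1)=\mu(X_r=-1)$ makes $(X_1,\dots,X_\ell)$ and $(-X_1,\dots,-X_\ell)$ identically distributed under $\mu^\ell$; hence $\mu(S_\ell=j)=\mu(S_\ell=-j)$, and the $|\cdot|^2/Z_\ell$ operation preserves this.

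For (\ref{eq:norm_factor}), squaring Corollary~\ref{co:asympt_sol} gives $|\mu(S_\ell=j)|^2=\tfrac{2}{\ell}|c_2|^2 e^{-2j^2/\ell}\, 5^\ell (1+O(\ell^{-1/3}))$ uniformly for $|j|\le j_\ast$. A standard Gaussian Riemann sum (e.g.\ Poisson summation) yields $\sum_{j\in\mathbb{Z}} e^{-2j^2/\ell}=\sqrt{\pi\ell/2}\,(1+O(e^{-c\ell}))$, and truncating to $|j|\le j_\ast$ costs only $O(\sqrt{\ell}\,e^{-\frac{2}{9}\ell^{1/3}})$. Thus $Z_\ell^{\mathrm{bulk}}:=\sum_{|j|\le j_\ast}|\mu(S_\ell=j)|^2=\sqrt{2\pi/\ell}\,|c_2|^2 5^\ell (1+O(\ell^{-1/3}))$. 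For $Z_\ell^{\mathrm{tail}}$, the ratio bound of Theorem~\ref{th:tailprob} applies identically to $|\mu|^2$ (the common $Z_\ell$ factor cancels), so starting from the boundary value $|\mu(S_\ell=j_\ast)|^2=O(\sqrt{\ell}\,5^\ell e^{-\frac{2}{9}\ell^{1/3}})$ and iterating gives $Z_\ell^{\mathrm{tail}}=O(\ell\cdot 5^\ell e^{-\frac{2}{9}\ell^{1/3}})$, absorbed into the $O(\ell^{-1/3})$ correction. This proves (\ref{eq:norm_factor}); dividing the bulk $|\mu|^2$ estimate by $Z_\ell$ yields (\ref{eq:coupled_prob}), and iterating Theorem~\ref{th:tailprob} from the now-controlled boundary value $\mathbb{P}(S_\ell=j_\ast)$ and summing the resulting super-Gaussian series delivers (\ref{eq:tailterms}).

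The random-walk comparisons (\ref{eq:binom_even})--(\ref{eq:binom_odd}) follow by comparing (\ref{eq:coupled_prob}) at $\ell=2n$ (resp.\ $\ell=2n-1$) with the classical sharp local central limit theorem $\mathbb{P}^*(S^*_{2n}=2m)=\sqrt{1/(\pi n)}\,e^{-m^2/n}(1+O(n^{-1}))$, valid on $|m|\le\tfrac{1}{3}n^{2/3}$: with $\ell=2n$ and $j=m$ the two leading Gaussians coincide, $\sqrt{2/(\pi\ell)}\,e^{-2j^2/\ell}=\sqrt{1/(\pi n)}\,e^{-m^2/n}$, so combining the errors produces the stated $O(n^{-1/3})$ bound; the odd case is analogous. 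The main obstacle I foresee is the error bookkeeping for $Z_\ell$: the cutoff $j_\ast\sim\ell^{2/3}$ in Corollary~\ref{co:asympt_sol} is tight, and one must verify that both the Gaussian truncation and the iterated Theorem~\ref{th:tailprob} bound genuinely stay below the advertised $O(\ell^{-1/3})$ relative accuracy --- which works precisely because $e^{-\frac{2}{9}\ell^{1/3}}$ beats every negative power of $\ell$.
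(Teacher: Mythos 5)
Your proposal is correct and follows essentially the same route as the paper: the bulk of $Z_\ell$ is estimated by squaring Corollary~\ref{co:asympt_sol} and summing the resulting Gaussian Riemann sum, the tail is controlled via the monotone ratio bound of Theorem~\ref{th:tailprob} applied from the boundary index $j_\ast$, symmetry comes from the symmetry of the step measure, and the binomial comparison uses the local de Moivre--Laplace theorem. One minor correction that does not affect the conclusion: on the range $|m|\le\frac13 n^{2/3}$ the relative error in $\mathbb{P}^*(S^*_{2n}=2m)=\frac{1}{\sqrt{\pi n}}e^{-m^2/n}(1+\cdot)$ is $O(n^{-1/3})$ (driven by the $m^4/n^3$ term in the expansion), not $O(n^{-1})$ as you state --- harmless here, since only $O(n^{-1/3})$ accuracy is claimed in (\ref{eq:binom_even}) and (\ref{eq:binom_odd}).
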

\begin{proof}
By Corollary \ref{co:asympt_sol} it follows that
\[
\sum_{|j| \le \frac13 \ell^{\frac23}} |\mu(S_{\ell} = j)|^2 = \frac{2 |c_2|^2 5^{\ell}}{\ell} \sum_{|j| \le \frac13 \ell^{\frac23}} e^{-\frac{2 j^2}{\ell}} |1 + \delta_{\ell,j}|^2 ,
\]
where $|\delta_{\ell,j}| \le C' \ell^{-\frac13}$ for any $\ell \ge 1$. Setting $h := 2/\sqrt{\ell}$ one gets that
\[
1 - 2\left[1 - \Phi\left(h \left(\frac13 \ell^{\frac23} - 2\right)\right)\right] \le \frac{1}{\sqrt{2 \pi}} \sum_{|j| \le \frac13 \ell^{\frac23}} h \, e^{-\frac12 (jh)^2} \le 1 + 2 \sqrt{\frac{2}{\pi \ell}} ,
\]
where $\Phi(x) := \int_{-\infty}^x \phi(t) \, dt$, $\phi(x) := (2 \pi)^{-\frac12} e^{-x^2/2}$.
Using the well-known inequality, cf. \cite[Lemma 12.9]{MP10},
\begin{equation}\label{eq:phi_ineq}
\frac{x}{x^2 + 1} \phi(x) < 1 - \Phi(x) < \frac{1}{x} \phi(x) \quad (x > 0),
\end{equation}
one then obtains
\begin{equation}\label{eq:ratio_small}
1 - C'' \ell^{-\frac13} \le \frac{\sum_{|j| \le \frac13 \ell^{\frac23}} |\mu(S_{\ell} = j)|^2}{\sqrt{\frac{2 \pi}{\ell}} |c_2|^2 5^{\ell}} \le 1 + C'' \ell^{-\frac13} ,
\end{equation}
for any $\ell \ge 1$ with some constant $C'' > 0$.

On the other hand, by Theorem \ref{th:tailprob} and Corollary \ref{co:asympt_sol}, for $\ell \ge \ell_0$,
\begin{multline} \label{eq:tail}
\sum_{|j| > \frac13 \ell^{\frac23}} |\mu(S_{\ell} = j)|^2 = Z_{\ell} \sum_{|j| > \frac13 \ell^{\frac23}} \mathbb{P}(S_{\ell} = j)
\le 2 \ell Z_{\ell} \, \mathbb{P}\left(S_{\ell} = \left\lfloor \frac13 \ell^{\frac23}\right\rfloor\right)\\
\le 4 |c_2|^2 5^{\ell} (1 + C' \ell^{-\frac13})^2 \exp\left(-\frac{2}{\ell} \left\lfloor \frac13 \ell^{\frac23}\right\rfloor^2\right) .
\end{multline}
This together with (\ref{eq:ratio_small}) prove (\ref{eq:norm_factor}). In turn, (\ref{eq:norm_factor}) and Corollary \ref{co:asympt_sol} imply (\ref{eq:coupled_prob}).

Using the ordinary de Moivre--Laplace theorem, cf. e.g. \cite{SZA16}, we see that there exists a constant $C''' > 0$ such that
\[
1 - C''' n^{-\frac13} \le \frac{\mathbb{P}^*(S^*_{2n} = 2m)}{\frac{1}{\sqrt{\pi n}} e^{-\frac{m^2}{n}}} \le 1 + C''' n^{-\frac13}
\]
and
\[
1 - C''' n^{-\frac13} \le \frac{\mathbb{P}^*(S^*_{2n-1} = 2m-1)}{\frac{1}{\sqrt{\pi (n-\frac12)}} e^{-\frac{m^2}{n-\frac12}}} \le 1 + C''' n^{-\frac13}
\]
for any $n \ge 1$ and $|m| \le n^{\frac23}$. These and (\ref{eq:coupl_prob}) imply (\ref{eq:binom_even}) and (\ref{eq:binom_odd}).

(\ref{eq:tail}) and (\ref{eq:norm_factor}) imply that
\begin{multline*}
\sum_{|j| > \frac13 \ell^{\frac23}} \mathbb{P}(S_{\ell} = j) \le \frac{4 (1 + C' \ell^{-\frac13})^2 \exp\left(-\frac{2}{\ell} \left\lfloor \frac13 \ell^{\frac23}\right\rfloor^2\right)}{(1 - C_1 \ell^{-\frac13}) \sqrt{\frac{2 \pi}{\ell}}} \\
\le (1 + C_4 \ell^{-\frac13}) 4 \sqrt{\frac{\ell}{2 \pi}} e^{-\frac29 \ell^{\frac13}}
\end{multline*}
and this shows (\ref{eq:tailterms}).

Finally, (\ref{eq:symprob}) follows by definition (\ref{eq:coupl_prob}). This ends the proof of the theorem.
\end{proof}

It is interesting that the random walk approximation described in Theorem \ref{th:norm_factor} is rather good even for small values of $n$:
\begin{table}[h] \label{ta:bin_appr}
\caption[Table 1]{$\mathbb{P}(S_{2n} = m)/\mathbb{P}^*(S^*_{2n} = 2m)$ rounded to 4 decimal places}
\begin{tabular}{|c|c|c|c|c|c|}
  \hline
   $n$ & $m=0$ & $m=\pm 1$ & $m=\pm 2$ & $m=\pm 3$ & $m=\pm 4$  \\ \hline
  $1$ & .5075/.5 & .2388/.25 & .0075/0 & 0/0 & 0/0 \\
  $2$ & .3478/.375 & .2584/.25 & .0642/.0625 & .0035/0 & .0000/0 \\
  $3$ & .3117/.3125 & .2289/.2344 & .0959/.0938 & .0012/.0156 & .0000/0 \\
  $4$ & .2718/.2734 & .2169/.2188 & .1085/.1094 & .0330/.0313 & .0053/.0039 \\
  $5$ & .2450/.2461 & .2038/.2051 & .1163/.1172 & .0446/.0439 & .0110/.0098 \\
  \hline
  \end{tabular}
\end{table}

\subsection{A Markov chain model and multidimensional distributions} \label{sse:Markov}

Let us consider now the probability distributions at times $2n$ and $2n+2$. Our goal is to find an asymptotically nearest neighbor Markov chain describing the transition from time $2n$ to time $2n+2$, if there exists such a model at all. The assumptions we are going to use are based on the properties of the underlying complex measure walk and on the one-dimensional distributions obtained in the previous subsection.

Define
\[
p_{n,j} := \mathbb{P}(S_{2n+2} = j+1 \mid S_{2n} =j), \quad q_{n,j} := \mathbb{P}(S_{2n+2} = j-1 \mid S_{2n} =j).
\]
We are going to use the following assumptions for $n \ge 1$:
\begin{enumerate}[(i)]
  \item $\mathbb{P}(|S_{2n+2} - j| \ge 3 \mid S_{2n} = j) = 0$ if $|j| \le 2n$,
  \item $\mathbb{P}(|S_{2n+2} - j| = 2 \mid S_{2n} = j) \le  C_5 n^{-\frac13}$ if $|j| \le \frac13 n^{\frac23}$, where $C_5 > 0$ is a constant,
  \item the Markov chain is symmetric w.r.t.\ reflection about the state $0$, that is, $p_{n,-j} = q_{n,j}$ if $|j| \le 2n$.
\end{enumerate}
Assumptions (i) and (ii) imply for $n \ge 1$ and $|j| \le \frac13 n^{\frac23}$ that
\[
|\mathbb{P}(S_{2n+2} = j \mid S_{2n} = j) - (1 - p_{n,j} - q_{n,j})| \le C_5 n^{-\frac13} .
\]

We want to find transition probabilities $p_{n,j} \in [0, 1]$ and $q_{n,j} \in [0, 1]$ that satisfy the above assumptions. Because of assumption (iii), it is enough to consider the cases when $j \ge 0$. Then we are led to the following Markovian inequality:
\begin{multline}\label{eq:Markov_ineq}
\left|\mathbb{P}(S_{2n+2} = j) - q_{n,j+1} \mathbb{P}(S_{2n} = j+1) - p_{n,j-1} \mathbb{P}(S_{2n} = j-1) \right. \\
\left. - (1 - p_{n,j} - q_{n,j}) \mathbb{P}(S_{2n} = j)\right| \le  \left(\mathbb{P}(S_{2n} = j) + \mathbb{P}(|S_{2n} - j| = 2)\right) C_5 n^{-\frac13} ,
\end{multline}
if $n \ge 1$ and $0 \le j \le \frac13 n^{\frac23}$.

The next theorem shows that there exists a suitable solution to this inequality.
\begin{thm}\label{th:Markov}
There exist transition probabilities $p_{n,j}$ and $q_{n,j}$ giving an asymptotically nearest neighbor Markovian solution satisfying assumptions (i) -- (iii) such that
\[
\left|p_{n,j} - \frac14\right| \le C_6 n^{-\frac13}, \qquad \left|q_{n,j} - \frac14\right| \le C_6 n^{-\frac13} \qquad (n \ge n_0, |j| \le \frac13 n^{\frac23}) ,
\]
with a constant $C_6 > 0$ and a positive integer $n_0$.
\end{thm}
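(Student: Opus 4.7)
The plan is to verify the natural ansatz $p_{n,j} = q_{n,j} = \frac14$, completed by middle probability $\frac12$ and no $\pm 2$ or longer jumps in the chain. The motivation comes from Theorem \ref{th:norm_factor}: the coupled probabilities are asymptotically Gaussian with the same leading form as the binomial SSRW probabilities, and under the paper's state identification ($j \leftrightarrow 2j$), the SSRW's 2-step kernel has transitions $\frac14, \frac12, \frac14$ to $j-1, j, j+1$.

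The main computation is to verify the Markovian inequality (\ref{eq:Markov_ineq}) with this ansatz. Set $f(j) := \frac{1}{\sqrt{\pi n}} e^{-j^2/n}$. By (\ref{eq:coupled_prob}), $\mathbb{P}(S_{2n}=j) = f(j)(1 + O(n^{-1/3}))$ for $|j| \le \frac13 n^{2/3}$, and likewise for $\mathbb{P}(S_{2n+2}=j)$ with $n$ replaced by $n+1$. Using $|j/n| = O(n^{-1/3})$ in this range, a Taylor expansion (via $e^{-1/n}\cosh(2j/n) = 1 - \frac{1}{n} + \frac{2j^2}{n^2} + O(n^{-4/3})$) yields
\[
\tfrac14 f(j+1) + \tfrac12 f(j) + \tfrac14 f(j-1) = f(j)\bigl[1 - \tfrac{1}{2n} + \tfrac{j^2}{n^2}\bigr] + O\bigl(f(j) n^{-4/3}\bigr),
\]
and similarly
\[
\tfrac{1}{\sqrt{\pi(n+1)}} e^{-j^2/(n+1)} = f(j)\bigl[1 - \tfrac{1}{2n} + \tfrac{j^2}{n^2}\bigr] + O\bigl(f(j) n^{-4/3}\bigr),
\]
i.e., a discrete heat-equation identity holds to leading order. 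Hence the LHS of (\ref{eq:Markov_ineq}) is $O(\mathbb{P}(S_{2n}=j)\, n^{-1/3})$. Since $\mathbb{P}(|S_{2n}-j|=2) \asymp \mathbb{P}(S_{2n}=j)$ in this regime (again by the Gaussian asymptotic, as shifting $j$ by $2$ alters the exponent by $O(n^{-1/3})$), this is within the slack on the RHS.

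The assumptions (i)--(iii) are then immediate. Declaring $\mathbb{P}(|S_{2n+2}-j|\ge 2 \mid S_{2n}=j) = 0$ in the model makes (i) hold trivially and the bound in (ii) vacuous. Symmetry (iii), $p_{n,-j} = q_{n,j}$, is clear since both equal $\frac14$. The kernel $p + q + \text{middle} = \frac14+\frac14+\frac12 = 1$ is a valid probability distribution. Finally, $|p_{n,j} - \frac14| = 0 \le C_6 n^{-1/3}$ holds for any $C_6 > 0$, with $n_0$ chosen large enough that the Gaussian approximation in Theorem \ref{th:norm_factor} is applicable.

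The main obstacle is the Taylor-expansion check above, i.e., matching the $\frac{1}{2n}$ and $\frac{j^2}{n^2}$ corrections between the SSRW kernel applied to $f$ and the time-shifted Gaussian. A secondary subtlety concerns the phrase \emph{asymptotically nearest neighbor Markovian solution}: if the chain's marginals must coincide exactly with $\mathbb{P}(S_{2n}=\cdot)$ at every time, the simple ansatz needs to be refined by $O(n^{-1/3})$ corrections to $p_{n,j}, q_{n,j}$, together with small nonzero $\pm 2$-jump probabilities, chosen to satisfy the marginal-evolution equation exactly; these corrections can be determined by a recursion in $j \ge 0$ (using (iii) to reflect to negative $j$), and remain of size $O(n^{-1/3})$, matching the bound $C_6 n^{-1/3}$.
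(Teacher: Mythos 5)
Your proposal is correct, and at the conceptual level it matches the paper's proof: both fix the central choice $p_{n,j}=q_{n,j}=\frac14$ (the paper allows any perturbation of size $C_6 n^{-\frac13}$ around it) and then verify the Markovian inequality (\ref{eq:Markov_ineq}) using the one-dimensional asymptotics of Theorem \ref{th:norm_factor}. The technical verification is genuinely different, though. The paper uses the binomial comparison (\ref{eq:binom_even}) and exploits the exact Pascal identity
$\binom{2n+2}{n+1+j}2^{-2n-2}=\frac14\binom{2n}{n+j-1}2^{-2n}+\frac12\binom{2n}{n+j}2^{-2n}+\frac14\binom{2n}{n+j+1}2^{-2n}$,
so that after clearing factorials the left-hand side reduces to $|U_{n,j}+C_3 n^{-\frac13}V_{n,j}|$, where the main term $U_{n,j}$ vanishes identically when $p^*_{n,j}=q^*_{n,j}=0$ and only the $C_3 n^{-\frac13}$ approximation errors survive, to be absorbed by choosing $C_5\ge 6C_3$. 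You instead use the Gaussian form (\ref{eq:coupled_prob}) and a Taylor expansion showing that the three-point kernel applied to $f(j)=(\pi n)^{-\frac12}e^{-j^2/n}$ reproduces the time-$(n+1)$ Gaussian up to $O(f(j)\,n^{-\frac43})$; this extra Taylor error is dominated by the $O(n^{-\frac13})$ multiplicative errors already present, so the conclusion is the same. Your expansion is sound: in the range $|j|\le\frac13 n^{\frac23}$ the cross terms of order $j^2/n^3$ and $j^4/n^4$ are indeed $O(n^{-\frac43})$. Two minor remarks: (a) the framework only requires the approximate marginal-consistency inequality (\ref{eq:Markov_ineq}), not exact marginal matching, so your closing paragraph about refining the ansatz with $O(n^{-\frac13})$ corrections is unnecessary for the theorem as stated; (b) the paper's computation additionally shows that the entire band $|p_{n,j}-\frac14|\le C_6 n^{-\frac13}$, $|q_{n,j}-\frac14|\le C_6 n^{-\frac13}$ consists of solutions (and, as remarked after the proof, essentially nothing outside it does), which your single-point verification does not address --- but since the theorem only asserts existence, this costs nothing.
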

\begin{proof}
Denote the left hand side and the right hand side of (\ref{eq:Markov_ineq}) by $L_{n,j}$ and $R_{n,j}$, respectively. By (\ref{eq:binom_even}), for we obtain that
\begin{multline*}
L_{n,j} \le \left| \left(1 + C_3 (n+1)^{-\frac13}\right) \binom{2n+2}{n+1+j} \, 2^{-2n-2} - \left(1 - C_3 n^{-\frac13}\right) \, 2^{-2n} \right. \\
\left. \times \left[q_{n,j+1} \binom{2n}{n+j+1} + p_{n,j-1} \binom{2n}{n+j-1}
+ (1 - q_{n,j} - p_{n,j}) \binom{2n}{n+j} \right] \right| .
\end{multline*}
Then
\begin{equation}\label{eq:Lstar}
L^*_{n,j} := \frac{(n+1+j)! (n+1-j)! 2^{2n}}{(2n)!} L_{n,j} \le |U_{n,j} + C_3 n^{-\frac13} V_{n,j}|,
\end{equation}
where $p^*_{n,j} := p_{n,j} - \frac14$, $q^*_{n,j} := q_{n,j} - \frac14$,
\begin{eqnarray}\label{eq:Unj}
U_{n,j} &:=& n^2 \left(q^*_{n,j} + p^*_{n,j} - p^*_{n,j-1} - q^*_{n,j+1}\right) \nonumber \\
&+& n \left(2 (q^*_{n,j} + p^*_{n,j}) - (2j + 1) p^*_{n,j-1} + (2j - 1) q^*_{n,j+1} \right) \nonumber \\
&-& (j^2 - 1) (q^*_{n,j} + p^*_{n,j}) - (j^2 + j) p^*_{n,j-1} + (j^2 - j) q^*_{n,j+1}  ,
\end{eqnarray}
and
\begin{equation}\label{eq:Vnj}
V_{n,j} := 2n^2 + 3n + \frac{j^2}{2} + \frac12 - U_{n,j} .
\end{equation}

Suppose that $|p^*_{n,j}| \le C_6 n^{-\frac13}$ and $|q^*_{n,j}| \le C_6 n^{-\frac13}$ for all $n \ge 1$ and $|j| \le \frac13 n^{\frac23}$, with some constant $C_6 > 0$. Then it follows that
\begin{eqnarray}\label{eq:ineq_left}
L^*_{n,j} &\le& C_6 n^{-\frac13} (1 + C_3 n^{-\frac13}) (4n^2 + 4jn + 4n + 4j^2 + 2j + 1) \nonumber \\
&+& C_3 n^{-\frac13} \left(2n^2 + 3n + \frac{j^2}{2} + \frac12\right) .
\end{eqnarray}

Consider now the right hand side of (\ref{eq:Markov_ineq}), using (\ref{eq:binom_even}):
\[
R_{n,j} \ge C_5 n^{-\frac13} (1 - C_3 n^{-\frac13}) 2^{-2n} \left[ \binom{2n}{n+j} + \binom{2n}{n+j+2} + \binom{2n}{n+j-2} \right]   ,
\]
where we suppose that $n \ge n_0 \ge 1$ and $C_3 n_0^{-\frac13} \le \frac12$. Then
\begin{multline}\label{eq:ineq_right}
R^*_{n,j} := \frac{(n+1+j)! (n+1-j)! 2^{2n}}{(2n)!} R_{n,j}
\ge \frac{C_5}{2} n^{-\frac13} \Big\{(n+1+j)(n+1-j) \\
+ \frac{(n+1-j)(n-j)(n-1-j)}{n+2+j} + \frac{(n+1+j)(n+j)(n-1+j)}{n+2-j}\Big\} \\
\ge \frac{C_5}{2} n^{-\frac13} \left( \frac{11}{5} n^2 + \frac85 jn + \frac95 n + \frac15 j^2 + \frac15 j \right) ,
\end{multline}
if $|j| \le \frac13 n^{\frac23}$.

We may suppose w.l.o.g. that $C_5 \ge 6 C_3$. Then comparing (\ref{eq:ineq_left}) and (\ref{eq:ineq_right}), we see that $L^*_{n,j} \le R^*_{n,j}$ for any $n \ge n_0$ and $|j| \le \frac13 n^{\frac23}$ if $C_6 \le C_3/60$, say.
\end{proof}

Starting with $j=0$ in the inequality $L^*_{n,j} \le R^*_{n,j}$ in the above proof and going on with values $j=1,2,\dots$ by induction, it turns out that essentially there are no other solutions satisfying assumptions (i) -- (iii) beside the one claimed in Theorem \ref{th:Markov}if $n$ is large enough. The transition probabilities for $|j| > \frac13 n^{\frac23}$ are unimportant because the tail is negligible by (\ref{eq:tail}) when $n$ is large enough.

Thus Theorems \ref{th:norm_factor} and \ref{th:Markov} essentially determine the multidimensional probability distributions of the coupled process of the complex measure walk on even integer time instants for large enough $n$. Let us denote the set
\[
\{2n_0\dots 2n\} := \{2n_0, 2n_0+2, 2n_0+4, \dots, 2n\}
\]
with some positive integers $n > n_0$. Take the sample space $\mathbb{R}^{\{2n_0\dots 2n\}}$ and the corresponding Borel $\sigma$-field $\mathcal{B}^{\{2n_0\dots 2n\}}$. The coupled probability distribution on the measurable space $(\mathbb{R}^{\{2n_0\dots 2n\}}, \mathcal{B}^{\{2n_0\dots 2n\}})$, essentially determined by Theorems \ref{th:norm_factor} and \ref{th:Markov}, will be denoted by $\mathbb{P}_{\{2n_0\dots2n\}}$.

The probability distribution for $n \le n_0$ is not really essential for the limiting behaviour, our main interest. However, one can find a solution of assumptions (i) -- (iii) for $\mathbb{P}_{\{0\dots2n_0\}}$ on $(\mathbb{R}^{\{0\dots 2n_0\}}, \mathcal{B}^{\{0\dots 2n_0\}})$, similarly as above. Choose the solution $p_{n,j} = \frac14$ and $q_{n,j} = \frac14$ for any $0 \le n \le n_0$. Then for $1 \le n \le n_0$ and $|j| \le \frac13 n^{\frac23}$ by (\ref{eq:Lstar}) -- (\ref{eq:Vnj}) one obtains that
\[
L_{n,j} \le C_3 n^{-\frac13} \frac{2n^2+3n+\frac{j^2}{2}+\frac12}{(n+1+j)(n+1-j)} \binom{2n}{n+j} 2^{-2n} \le 2 C_3 n^{-\frac13} .
\]
Define
\[
P_{\min}(n_0) := \min\{\mathbb{P}(S_{2n} = j) : 1 \le n \le n_0, |j| \le \frac13 n^{\frac23} \} .
\]
By (\ref{eq:Markov_ineq}),
\[
R_{n,j} \ge 3 P_{\min}(n_0) C_5 n^{-\frac13} .
\]
Comparing these estimates for $L_{n,j}$ and $R_{n,j}$, it follows that the Markovian inequality (\ref{eq:Markov_ineq}) holds for any $1 \le n \le n_0$ and $|j| \le \frac13 n^{\frac23}$, supposing $C_5 \ge \frac23 C_3/P_{\min}(n_0)$. The cases when $|j| > \frac13 n^{\frac23}$ are again unimportant: if $n$ is large, (\ref{eq:tail}) shows that the tail is negligible; when $n$ is small, the tails can be checked by computer, see Table \ref{ta:bin_appr}.

Theorems \ref{th:norm_factor} and \ref{th:Markov} also imply that asymptotically, as $n \to \infty$, the coupled process on even integer time instants tends to a \emph{lazy random walk}:
\begin{multline*}
p_{n,j} = q_{n,j} = \frac14, \quad \mathbb{P}(|S_{2n+2} - j| \ge 2 \mid S_{2n} = j) = 0, \\
\mathbb{P}(S_{2n+2} = j \mid S_{2n} =j) = 1 - p_{n,j} - q_{n,j} = \frac12 \quad (|j| \le 2n).
\end{multline*}

On the other hand, somewhat surprisingly, if one tries to determine asymptotically nearest neighbor transition probabilities moving from a time instant $2n$ to $2n+1$ using the same method, then it turns out that there is no such classical Markov chain model. For example, one gets
\[
\mathbb{P}(S_{2n+1} = 2 \mid S_{2n} = 1) = -\frac{1}{12} + O(n^{-\frac13}).
\]


\section{A strong approximation of Brownian motion based on lazy random walks} \label{se:strong_approx}

The strong approximation that will be discussed in this section is very similar to the \emph{``twist and shrink''} construction of Brownian motion based on simple, symmetric random walks, see \cite{SZA96} and \cite{SZSZ09}. In turn, the twist and shrink method was based on R\'ev\'esz's approach \cite{REV90} to Knight's construction \cite{KNI62}.

\subsection{Lazy random walks} \label{sse:lazy}

First we define a probability space that we are going to use in this subsection. Let $\mathbb{N} := \{0, 1, 2, \dots\}$ be the set of natural numbers. Take the sample space $\mathbb{R}^{\mathbb{N}}$ and the countable product $\sigma$-field $\mathcal{B}^{\mathbb{N}}$ of the Borel sets in $\mathbb{R}$. Let $q^{\mathbb{N}}$ be the countable power of the probability measure
\begin{equation}\label{eq:lazyrwstep}
q(\{-1\}) = q(\{1\}) = \frac14, \qquad q(\{0\}) = \frac12 .
\end{equation}
Let $(Y(k))_{k=1}^{\infty}$ be the steps of a lazy random walk, each step with the probability distribution $q$. In other words, $Y(k) = \pi_k$, where $\pi_k : \mathbb{R}^{\mathbb{N}} \to \mathbb{R}$ is the $k$th coordinate projection.

The probability measure $\mathbb{Q}$ is defined on $(\mathbb{R}^{\mathbb{N}}, \mathcal{B}^{\mathbb{N}})$ as follows. For any $n \ge 1$ and Borel measurable function $f : \mathbb{R}^n \to \mathbb{R}$, the distribution of the random variable $Z = f(Y(1), \dots ,Y(n))$ is
\[
\mathbb{Q}(Z \in A) := q^n(f^{-1}(A)) .
\]

Using a standard diagonal procedure, over the probability space $(\Omega, \mathcal{F}, \mathbb{Q})$, the coordinate projections define an infinite matrix of independent and identically distributed random variables $(Y_m(k))_{m \ge 0, k \ge 1}$, each with the distribution $q$.
That is, take $Y_0(1), Y_0(2), Y_1(1), Y_0(3), Y_1(2), Y_2(1), Y_0(4), Y_1(3), Y_2(2), Y_3(1), \dots$ in this order. More exactly,
\[
Y_m(n) = \pi_{m+1+(m+n)(m+n-1)/2} .
\]

For each $m \ge 0$ define a \emph{lazy random walk} by
\begin{equation}\label{eq:lazyrw}
L_m(0) = 0, \qquad L_m(n) = \sum_{k=1}^n Y_m(k) .
\end{equation}
It is clear that $\mathbb{E} L_m(n) = 0$ and Var$(L_m(n)) = n/2$. (In this subsection $\mathbb{E}$ and Var are always understood w.r.t.\ the measure $\mathbb{Q}$.)

Fortunately, each $Y_m(n)$ has the same distribution as $S^*_2/2$, where $(S^*_n)_{n \ge 0}$ is a simple, symmetric random walk started from the origin. $(S^*_n)_{n \ge 1}$ can be obtained as partial sums of the coordinate projections on $(\mathbb{R}^{\mathbb{N}}, \mathcal{B}^{\mathbb{N}}, \mathbb{Q}^*)$, where $\mathbb{Q}^*$ is the probability measure obtained from $q*^{\mathbb{N}}$, where $q^*(1) = q^*(-1) = \frac12$. Consequently,
\begin{multline}\label{eq:Lmnj}
\mathbb{Q}(L_m(n) = j) = \sum_{r=0}^{n-|j|} \binom{n}{r} \binom{n-r}{\frac{n-r+j}{2}} \left(\frac{1}{4}\right)^{n-r} \left(\frac12\right)^r \\
= \binom{2n}{n+j} 2^{-2n} = \mathbb{Q}^*(S^*_{2n} = 2j),
\end{multline}
for $n \ge 0$ and $|j| \le n$. Clearly, $L_m$ is strong Markovian.

For $m \ge 0$ we define stopping times $T^*_m(0) := 0$,
\begin{equation}\label{eq:stopt_star}
T^*_m(k+1) := \inf\{n : n > T^*_m(k), |L_m(n) - L_m(T^*_m(k))| = 1\} \qquad (k \ge 0).
\end{equation}
Then the random variables $(T^*_m(k+1) - T^*_m(k))_{k \ge 0}$ are independent and identically distributed,
\[
\mathbb{Q}(T^*_m(1) = j) = 2^{-j}  \qquad (j \ge 1),
\]
a \emph{geometric distribution} with parameter $\frac12$. Then it follows that
\begin{equation}\label{eq:E_Tmlazy}
\mathbb{E}(T^*_m(1)) = 2, \qquad \text{Var}(T^*_m(1)) = 2 .
\end{equation}
The moment generating function of $T^*_m(1)$ is
\[
\mathbb{E}(e^{u T^*_m(1)}) = \frac{e^u}{2 - e^u},
\]
and of its standardized version is
\begin{equation}\label{eq:mgf_Tmlazy}
\mathbb{E}\left(e^{u (T^*_m(1) - 2)/\sqrt{2}}\right) = \frac{e^{-u/\sqrt{2}}}{2 - e^{u/\sqrt{2}}},
\end{equation}
finite if $u < \sqrt{2} \log 2 \approx 0.980258$.

It also follows that $T^*_m(k)$ has \emph{negative binomial distribution}:
\[
\mathbb{Q}(T^*_m(k) = j) = \binom{j-1}{k-1} 2^{-j} \qquad (j \ge k \ge 1)
\]
and
\[
\mathbb{E}(T^*_m(k)) = 2 k, \qquad \text{Var}(T^*_m(k)) = 2 k .
\]

\subsection{A construction of Brownian motion} \label{sse:Brownian}

A key tool in the construction is a large deviation lemma, cf. Lemma 2 and the remarks after it in \cite{SZA96}.
\begin{lem}\label{le:largedev}
Suppose that $(Y_j)_{j \ge 1}$ is a sequence of independent and identically distributed random variables on a probability space $(\Omega, \mathcal{F}, \mathbb{Q})$; $\mathbb{E}(Y_j) = 0$, $\mathbb{E}(Y_j^2) = 1$, and the moment generating function $\mathbb{E}(e^{u Y_j}) < \infty$ in an interval $|u| \le u_0$, where $u_0 > 0$. Let $Z_0 =0$ and $Z_n = \sum_{j=1}^n Y_j$ $(n \ge 1)$ be the partial sums. Then for any $C >1$ there exists a positive $N_0(C)$ (possibly depending on the distribution of $Y_j$ as well) such that for any $N \ge N_0(C)$ we have
\[
\mathbb{Q}\left( \max_{0 \le n \le N} |Z_n| \ge (2CN\log N)^{\frac12} \right) \le N^{1-C} .
\]
\end{lem}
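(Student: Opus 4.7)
The approach is a standard Chernoff-type exponential tail bound combined with Doob's submartingale maximal inequality. First I would introduce $M(u) := \mathbb{E}(e^{uY_1})$, finite by hypothesis on $|u| \le u_0$. Since $\mathbb{E}(Y_1) = 0$ and $\mathbb{E}(Y_1^2) = 1$, a Taylor expansion near $0$ yields $\log M(u) = \tfrac12 u^2 + O(u^3)$; more precisely, there exist $u_1 \in (0, u_0]$ and $K > 0$, depending only on the distribution of $Y_1$, such that $\log M(u) \le \tfrac12 u^2 + K|u|^3$ for $|u| \le u_1$.

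The key step is to note that for any fixed $\lambda \in (0, u_1]$, the process $W_n := e^{\lambda Z_n}$ is a non-negative submartingale with respect to the natural filtration of $(Y_j)$: indeed $\mathbb{E}(W_{n+1} \mid \mathcal{F}_n) = W_n \, M(\lambda)$ and $M(\lambda) \ge e^{\lambda \mathbb{E}(Y_1)} = 1$ by Jensen. Doob's maximal inequality then gives, for any $t > 0$,
\[
\mathbb{Q}\Bigl( \max_{0 \le n \le N} Z_n \ge t \Bigr) = \mathbb{Q}\Bigl( \max_{0 \le n \le N} W_n \ge e^{\lambda t} \Bigr) \le e^{-\lambda t} \mathbb{E}(W_N) = e^{-\lambda t} M(\lambda)^N,
\]
and applying the same bound to $(-Y_j)$ to control $\min_n Z_n \le -t$ yields
\[
\mathbb{Q}\Bigl( \max_{0 \le n \le N} |Z_n| \ge t \Bigr) \le 2 \, e^{-\lambda t} \max\bigl( M(\lambda), M(-\lambda) \bigr)^N.
\]

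Next I would optimise with $t := (2CN\log N)^{1/2}$ and $\lambda := t/N = (2C\log N/N)^{1/2}$; note that $\lambda \to 0$ as $N \to \infty$, so for $N$ sufficiently large we have $\lambda \le u_1$. The Taylor bound then gives
\[
-\lambda t + N \log M(\pm \lambda) \le -\frac{t^2}{2N} + K N \lambda^3 = -C\log N + O\!\left( \frac{(\log N)^{3/2}}{N^{1/2}} \right),
\]
so the correction tends to $0$. Therefore, for all $N$ exceeding a threshold $N_0(C)$ (depending on $C$, and on the distribution through $u_1$ and $K$, to ensure both $\lambda \le u_1$ and that the cubic remainder is bounded), the right-hand side is at most $2 e \cdot N^{-C} \le N^{1-C}$. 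The last inequality is why the lemma is stated with the slightly weaker bound $N^{1-C}$ rather than the pure Chernoff $N^{-C}$: the extra factor of $N$ comfortably absorbs the $2e$ prefactor and the $o(1)$ correction from the cubic remainder.

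I do not anticipate any real obstacle here, since the argument is essentially textbook; the only mild subtleties are (a) verifying that $N_0(C)$ is allowed to depend on the distribution (as the lemma's parenthetical remark permits), and (b) using the submartingale \emph{maximal} inequality of Doob rather than a reflection-type bound such as Etemadi's — the former loses nothing in the exponent when passing from $|Z_N|$ to $\max_{n \le N} |Z_n|$, whereas the latter would degrade the constant $C$ by a factor of $9$ and require a larger $N_0(C)$.
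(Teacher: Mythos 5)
Your proof is correct, but it takes a genuinely different route from the one in the paper. The paper bounds $\mathbb{Q}(\max_{0\le n\le N}|Z_n|\ge t)$ by the union bound $\sum_{n\le N}\mathbb{Q}(|Z_n|\ge t)$, then splits the sum at $n=(\log N)^4$: the small-$n$ terms are handled by a crude exponential Chebyshev bound at the fixed argument $u_0$, while the large-$n$ terms are handled by invoking the Cram\'er-type relation $\mathbb{Q}(Z_n/\sqrt n\ge x_n)\sim 1-\Phi(x_n)$ for $x_n=o(n^{1/6})$ from Feller together with the Gaussian tail inequality; the union bound costs a factor $N$, which is exactly why the lemma is stated with $N^{1-C}$ rather than $N^{-C}$. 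You instead apply Doob's maximal inequality to the exponential submartingale $e^{\lambda Z_n}$ with the optimised $\lambda=t/N$ and control $\log M(\lambda)$ by a cubic Taylor bound near the origin. This avoids the union bound and the appeal to an external moderate-deviation theorem, is fully self-contained given Doob's inequality, and in fact yields the stronger bound $2e\,N^{-C}$, so the factor $N$ in $N^{1-C}$ is used only to absorb constants rather than a sum over $n$. The one point worth making explicit is why $\log M(u)\le\frac12 u^2+K|u|^3$ near $0$: finiteness of the moment generating function on a neighbourhood $|u|\le u_0$ of the origin implies that $M$, hence $\log M$, is smooth there, and the stated bound is Taylor's theorem with the third derivative bounded on a compact subinterval; with that noted, your argument is complete and establishes the lemma (also for real $N$, as the paper's remark after the lemma requires, since $\mathbb{E}(W_{\lfloor N\rfloor})=M(\lambda)^{\lfloor N\rfloor}\le\max(M(\lambda),1)^{N}$).
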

\begin{proof}
\begin{multline}\label{eq:largedev}
\mathbb{Q}\left( \max_{0 \le n \le N} |Z_n| \ge (2CN\log N)^{\frac12} \right) \le \sum_{n=0}^N \mathbb{Q}\left(|Z_n| \ge (2CN\log N)^{\frac12} \right)\\
\le \sum_{0 \le n < (\log N)^4} \mathbb{Q}\left(|Z_n| \ge (2CN\log N)^{\frac12} \right) \\
+ \sum_{(\log N)^4 \le n \le N} \mathbb{Q}\left(\frac{|Z_n|}{\sqrt{n}} \ge (2C\log N)^{\frac12} \right)
\end{multline}

Let us estimate the first sum in (\ref{eq:largedev}) using an exponential Chebyshev's inequality:
\begin{multline*}
\sum_{0 \le n < (\log N)^4} \mathbb{Q}\left(|Z_n| \ge (2CN\log N)^{\frac12} \right)\\
\le \sum_{1 \le n < (\log N)^4} \left\{\left(\mathbb{E}(e^{u_0 Y_j})\right)^n + \left(\mathbb{E}(e^{-u_0 Y_j})\right)^n\right\} e^{-u_0 (2CN\log N)^{\frac12}} .
\end{multline*}
We want to show that this is not larger than $\frac12 N^{1-C}$ if $N$ is large enough. Define $c_Y := \max\left(\mathbb{E}(e^{u_0 Y_j}), \mathbb{E}(e^{-u_0 Y_j})\right)$ and take logarithm; then we have to show
\[
4 \log \log N + \log 2 + (\log N)^4 \log c_Y - u_0 (2CN\log N)^{\frac12} + \log 2 +(C-1) \log N \le 0.
\]
Upon dividing by $N^{\frac12}$, we see that it really holds if $N$ is large enough.

To estimate the second sum of (\ref{eq:largedev}) under the assumptions of the lemma, we may use the large deviation theorem in \cite[Section XVI.6]{FEL66}:
\[
\lim_{n \to \infty} \frac{\mathbb{Q}(Z_n/\sqrt{n} \ge x_n)}{1-\Phi(x_n)} = 1 , \qquad \Phi(x) := \int_{-\infty}^x \frac{1}{\sqrt{2 \pi}} e^{-\frac{t^2}{2}} \di t ,
\]
supposing $x_n \to \infty$ and $x_n = o(n^{\frac16})$. Since now $x_n = (2C\log N)^{\frac12}$, where $(\log N)^4 \le n \le N$, these assumptions hold. Using inequality (\ref{eq:phi_ineq}),
\begin{multline*}
\sum_{(\log N)^4 \le n \le N} \mathbb{Q}\left(\frac{|Z_n|}{\sqrt{n}} \ge (2C\log N)^{\frac12} \right) \\
\le N 2 (1 + \epsilon) (1 - \Phi((2C\log N)^{\frac12})) \le \frac12 N e^{-C \log N} = \frac12 N^{1-C} ,
\end{multline*}
if $N$ is large enough. This completes the proof of the lemma.
\end{proof}
Note that in the above lemma $N$ and $N_0(C)$ can be positive \emph{real} numbers.

\emph{Hoeffding's lemma} is a useful addition to the previous lemma in the special case of bounded random variables $(Y_j)_{j \ge 1}$. Suppose that $(Y_j)_{j \ge 1}$ is a sequence of independent and identically distributed random variables and $b_j \le Y_j \le a_j$. Let $Z_0 =0$ and $Z_n = \sum_{j=1}^n Y_j$ $(n \ge 1)$ be the partial sums. Then for any $x > 0$,
\begin{equation}\label{eq:Hoeffding}
\mathbb{Q}\left\{|Z_n - \mathbb{E}(Z_n)| \ge x \left(\frac14 \sum_{j=1}^n (a_j - b_j)^2\right)^{\frac12}\right\} \le 2 e^{- \frac{x^2}{2}} .
\end{equation}

Here we are going to modify the \emph{``twist and shrink''} construction of Brownian motion (Wiener process) discussed in \cite{SZA96}. Here we use a sequence of lazy random walks $(L_m(n))_{n \ge 0}$, $(m \ge 0)$, instead of simple, symmetric random walks applied in \cite{SZA96}. The sample functions of the approximations will be continuous broken lines over the half line $\mathbb{R}_+ = [0, \infty)$ (the time axis). So we take the sample space $\Omega := C(\mathbb{R}^+)$ with the topology of uniform convergence on compact sets, and the corresponding Borel $\sigma$-field $\mathcal{F}$, the smallest $\sigma$-field containing all open sets in $\Omega$. Since the continuous sample functions will be functions of the elements of infinite matrix of lazy random walk steps $(Y_m(k))_{m \ge 0, k \ge 1}$ defined in Subsection \ref{sse:lazy}, the probability measure can be extended to $(\Omega, \mathcal{F})$ and will also be denoted by $\mathbb{Q}$.

The first approximation $(B_0(t))_{t \ge 0}$ is based on $(L_0(n))_{n \ge 0}$. At the integer time instants $n \in \mathbb{Z}_+$, $B_0(n) := L_0(n)$. For any real $t \in (n, n+1)$ we interpolate:
\[
B_0(t) := B_0(n) + \{B_0(n+1) - B_0(n)\} (t - n).
\]

Before we define the second approximation based on $(L_1(n))_{n \ge 0}$ we need an important procedure of the construction called \emph{``twisting''}. Its aim is to make the second approximation to be a refinement of the first one after \emph{shrinking}. For twisting we use the stopping times given by (\ref{eq:stopt_star}). A part of $L_1$ between two consecutive stopping times $T^*_1(k)$ and $T^*_1(k+1)$ will be called a \emph{bridge} of $L_1$. Remember that $L_1(T^*_1(k+1)) - L_1(T^*_1(k)) = \pm 1$. Two bridges of $L_1$ is going to correspond to one step of $L_0$. A horizontal step of $L_0$ is mimicked by a combination of a consecutive up and down (or down and up) bridges of $L_1$. An up (respectively, a down) step of $L_0$ will correspond to two consecutive up (respectively, two consecutive down) bridges of $L_1$. If a bridge of $L_1$ does not fulfils these rules, then we reflect the bridge.

More exactly, let us see the details.

\textsc{(1) Twisting.} $L_0$ is not twisted, $\tilde{L}_0(n) = L_0(n)$ for each $n \ge 0$. By induction, suppose that $\tilde{L}_j$ is already defined. The next twisted walk will be defined in terms of $L_{j+1}$ and $\tilde{L}_j$.

For $k \ge 1$ take $\tilde{Y}_j(k) := \tilde{L}_j(k) - \tilde{L}_j(k-1)$.

(a) If $\tilde{Y}_j(k) = 0$ and
\[
\epsilon_k := \{L_1(T^*_1(2k-1))  - L_1(T^*_1(2k-2))\} \{L_1(T^*_1(2k))  - L_1(T^*_1(2k-1))\} = \pm 1 ,
\]
then we set
\[
\tilde{Y}_{j+1}(n) := \left\{
\begin{gathered} Y_{j+}1(n) \text{ if } T^*_{j+1}(2k-2) < n \le T^*_{j+1}(2k-1) , \\
- \epsilon Y_{j+1}(n) \text{ if } T^*_{j+1}(2k-1) < n \le T^*_{j+1}(2k) .
\end{gathered} \right.
\]

(b) if $\tilde{Y}_j(k) = \pm 1$, then
\begin{align*}
\epsilon_{k,1} &:= \tilde{Y}_j(k) \{L_{j+1}(T^*_{j+1}(2k-1))  - L_{j+1}(T^*_{j+}1(2k-2))\} = \pm 1 , \\
\tilde{Y}_{j+1}(n) &:= \epsilon_{k,1} Y_{j+1}(n) \text{ if } T^*_{j+1}(2k-2) < n \le T^*_{j+1}(2k-1); \\
\epsilon_{k,2} &:= \tilde{Y}_j(k) \{L_{j+1}(T^*_{j+1}(2k)) - L_{j+1}(T^*_{j+1}(2k-1))\} = \pm 1 , \\
\tilde{Y}_{j+1}(n) &:= \epsilon_{k,2} Y_{j+1}(n) \text{ if } T^*_{j+1}(2k-1) < n \le T^*_{j+1}(2k).
\end{align*}

Finally, we put
\[
\tilde{L}_{j+1}(0) := 0, \qquad \tilde{L}_{j+1}(n) := \sum_{r=1}^n \tilde{Y}_{j+1}(r) \quad (n \ge 1).
\]

It is important that a twisted lazy random walk $(\tilde{L}_j(n))_{n \ge 0}$ is still a lazy random walk, because as a stochastic process it is not affected by the applied reflections, compare with \cite[Lemma 1]{SZA96}.

\textsc{(2) Shrinking.} Now the sample functions of the $m$th approximation make steps of length $2^{-m}$ in times $2^{-2m}$. Thus
\begin{equation}\label{eq:Bm}
B_m(n 2^{-2m}) := 2^{-m} \tilde{L}_m(n) \qquad (n \in \mathbb{Z}_+)
\end{equation}
and for any real $t \in (n 2^{-2m}, (n+1) 2^{-2m})$,
\[
B_m(t) := B_m(n 2^{-2m}) + 2^{2m} \{B_m((n+1) 2^{-2m}) - B_m(n 2^{-2m})\} (t - n 2^{-2m}).
\]
The \emph{refinement property} of the approximation is
\begin{equation}\label{eq:refine}
B_{m+1}(T^*_{m+1}(2k) 2^{-2(m+1)}) = B_m(k 2^{-2m}) \qquad (k \ge 0, m \ge 0).
\end{equation}
Its meaning is that $B_{m+1}$ takes the same values as $B_m$, in the same order, at stopping times $T^*_{m+1}(2k) 2^{-2(m+1)}$ that correspond to the times $k 2^{-2m}$. There is a time lag in general, but these lags a.s.\ uniformly converge to $0$ on any finite time interval as $m \to \infty$, as easily follows from the next lemma by the Borel--Cantelli lemma.

\begin{lem}\label{le:timeconv}
Let the stopping times $T^*_m(k)$ be defined by (\ref{eq:stopt_star}). For any $C > 1$ there exists a positive $N_0(C)$ such that for any $T > 0$, $m \ge 1$, $T 2^{2m} \ge N_0(C)$ we have
\begin{multline*}
\mathbb{Q}\left\{\sup_{k 2^{-2m} \in [0, T]} |T^*_{m+1}(2k) 2^{-2(m+1)} - k 2^{-2m}| \ge (\alpha C m 2^{-2m} T \log_* T)^{\frac12} \right\} \\
\le (T 2^{2m})^{1-C} ,
\end{multline*}
where $\alpha = \frac12 + \log 2$ and $\log_* T := \max(1, \log T)$.
\end{lem}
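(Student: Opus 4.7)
The plan is to reduce the maximal inequality for the time lag to the large deviation bound of Lemma \ref{le:largedev}. The key observation is that $T^*_{m+1}(2k) = \sum_{j=1}^{2k} \tau_j$, where the increments $\tau_j := T^*_{m+1}(j) - T^*_{m+1}(j-1)$ are i.i.d.\ geometric random variables with mean $2$ and variance $2$, independent of $m$. Their standardized versions $Y_j := (\tau_j - 2)/\sqrt{2}$ have a moment generating function finite in a neighbourhood of the origin by (\ref{eq:mgf_Tmlazy}); crucially, their law depends on neither $m$ nor $T$, so the threshold $N_0(C)$ can be chosen uniformly.

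The first step is algebraic. Since $k\,2^{-2m} = 4k \cdot 2^{-2(m+1)}$,
\[
T^*_{m+1}(2k)\, 2^{-2(m+1)} - k\, 2^{-2m} = \left(T^*_{m+1}(2k) - 4k\right) 2^{-2(m+1)} = \sqrt{2}\, Z_{2k}\, 2^{-2(m+1)},
\]
where $Z_n := \sum_{j=1}^n Y_j$. To obtain the precise constant $\alpha = \frac12 + \log 2$ appearing in the statement I would pass to paired partial sums. Define $Y'_j := (Y_{2j-1} + Y_{2j})/\sqrt{2}$, which are i.i.d.\ with mean zero, variance one, and moment generating function still finite near the origin; then $Z_{2k} = \sqrt{2}\, Z'_k$ with $Z'_k := \sum_{j=1}^k Y'_j$. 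Setting $n_0 := T\, 2^{2m}$, the event in the statement rewrites as
\[
\left\{\,\max_{0 \le k \le n_0} |Z'_k| \ge (4\alpha C m n_0 \log_* T)^{1/2}\,\right\}.
\]

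I would then apply Lemma \ref{le:largedev} to $(Z'_k)$ with parameter $N := n_0$, choosing $N_0(C)$ to be the threshold supplied by that lemma (a fixed constant depending only on $C$, since the law of $Y'_j$ is fixed). This produces the probability bound $n_0^{1-C} = (T\, 2^{2m})^{1-C}$ for the event that $\max_{0 \le k \le n_0} |Z'_k|$ exceeds $(2 C n_0 \log n_0)^{1/2}$. It then remains to verify $\log n_0 \le 2\alpha m \log_* T$. Writing $\log n_0 = \log T + 2m \log 2$ and $2\alpha m \log_* T = m \log_* T + 2m \log 2 \cdot \log_* T$, the inequality splits into $2m \log 2 \le 2m \log 2 \cdot \log_* T$ (immediate from $\log_* T \ge 1$) and $\log T \le m \log_* T$ (immediate for $m \ge 1$: if $\log T \le 1$ then $\log T \le \log_* T \le m \log_* T$, while if $\log T > 1$ then $\log_* T = \log T$).

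The stochastic content is entirely supplied by Lemma \ref{le:largedev}; the main obstacle is selecting the right auxiliary sequence so that the large-deviation threshold matches the stated constant $\alpha$. A direct application of Lemma \ref{le:largedev} to $(Z_n)$ over the range $[0, 2n_0]$ would produce $\log(2n_0) = (2m+1)\log 2 + \log T$ on the right-hand side, which fails for small $m$ and moderate $T$; the pairing trick eliminates the extra $\log 2$ and makes $\alpha = \frac12 + \log 2$ exactly the right value for every $m \ge 1$.
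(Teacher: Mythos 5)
Your proposal is correct and follows essentially the same route as the paper: your paired variables $Y'_j = (Y_{2j-1}+Y_{2j})/\sqrt{2}$ are exactly the standardized two-step increments $(T^*_{m+1}(2j) - T^*_{m+1}(2j-2) - 4)/2$ that the paper feeds into Lemma \ref{le:largedev} with $N = T2^{2m}$, and your verification $\log(T2^{2m}) \le 2\alpha m \log_* T$ is the same bound the paper uses to match the constant $\alpha = \frac12 + \log 2$.
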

\begin{proof}
We are going to use Lemma \ref{le:largedev} for the i.i.d. terms
\[
Y_j = (T^*_{m+1}(2j) - T^*_{m+1}(2j-2) - 4)/2 \qquad (j\ge 1) .
\]
By (\ref{eq:E_Tmlazy}) and (\ref{eq:mgf_Tmlazy}), these terms are standardized and have a finite moment generating function in a neighborhood of the origin. Take $N = T 2^{2m}$, with $m \ge 1$ fixed. Then
\[
2CN\log N \le 2(1 + \log 4) C m 2^{2m} T \log_* T ,
\]
and
\begin{multline*}
\mathbb{Q}\left\{\sup_{k 2^{-2m} \in [0, T]} |T^*_{m+1}(2k) - 4k|/2 \ge (2 (1 + \log 4) C m 2^{2m} T \log_* T)^{\frac12} \right\} \\
\le (T 2^{2m})^{1-C} ,
\end{multline*}
which is equivalent to the statement of the lemma.
\end{proof}

A consequence of the next lemma is that the sequence $(B_m)_{m \ge 0}$ a.s.\ uniformly converges on any finite interval as $m \to \infty$.
\begin{lem}\label{le:apprconv}
Let the sequence of approximations $(B_m)_{m \ge 0}$ be defined by (\ref{eq:Bm}). For any $C > 1$ there exists a positive $N_1(C)$ such that for any $T > 0$, $m \ge 1$, $T 2^{2m} \ge N_1(C)$ we have
\begin{multline*}
\mathbb{Q}\left\{\sup_{j \ge 1} \sup_{t \in [0, T]} |B_{m+j}(t) - B_m(t)| \ge 39 \cdot C m^{\frac34} 2^{-\frac{m}{2}} T_*^{\frac14} (\log_* T)^{\frac34} \right\} \\
\le \frac{5}{1 - 4^{1-C}} (T 2^{2m})^{1-C} ,
\end{multline*}
where $T_* := \max(1, T)$ and $\log_* T := \max(1, \log T)$.
\end{lem}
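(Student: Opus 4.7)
The plan is to telescope across refinement levels,
$$|B_{m+j}(t) - B_m(t)| \le \sum_{i=0}^{j-1} |B_{m+i+1}(t) - B_{m+i}(t)|,$$
bound each single-level increment uniformly on $[0,T]$, and sum. Because the spatial scaling at each level is $2^{-(m+i+1)}$ while the number of lazy-walk steps grows as $2^{2(m+i+1)}$, each increment should decay like $2^{-(m+i)/2}$ up to a polynomial prefactor in $m+i$ and $\log_* T$, so the geometric series $\sum_{i\ge 0}$ has the same order of magnitude as its first term, reproducing $m^{3/4} 2^{-m/2} T_*^{1/4} (\log_* T)^{3/4}$.

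For a single-level increment, combine the refinement property (\ref{eq:refine}) with Lemma \ref{le:timeconv}. At a dyadic time $t = k\,2^{-2(m+i)}$ of level $m+i$, (\ref{eq:refine}) gives $B_{m+i}(t) = B_{m+i+1}(T^*_{m+i+1}(2k)\,2^{-2(m+i+1)})$, so
$$B_{m+i+1}(t) - B_{m+i}(t) = B_{m+i+1}(k\,2^{-2(m+i)}) - B_{m+i+1}(T^*_{m+i+1}(2k)\,2^{-2(m+i+1)})$$
is an increment of the single process $B_{m+i+1}$ between two times whose separation is controlled by Lemma \ref{le:timeconv}: uniformly in $k$, this separation is at most $\Delta_{m+i} := (\alpha C (m+i) 2^{-2(m+i)} T \log_* T)^{1/2}$ outside an event of probability $(T\,2^{2(m+i)})^{1-C}$. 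Between dyadic times both $B_{m+i}$ and $B_{m+i+1}$ are linearly interpolated, so the supremum over all $t\in[0,T]$ is attained on the finer dyadic grid of level $m+i+1$, up to an interpolation contribution of order $2^{-(m+i)}$ that is dominated by the oscillation bound derived next.

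The remaining task is to bound the local oscillation of $B_{m+i+1}$ at scale $\Delta_{m+i}$, i.e., the worst-case increment of the underlying lazy walk $\tilde L_{m+i+1}$ over windows of $N_{m+i} := \Delta_{m+i}\,2^{2(m+i+1)}$ steps starting at dyadic times in $[0,T]$. Applying Lemma \ref{le:largedev} to the standardized steps $\sqrt{2}\,\tilde Y_{m+i+1}(r)$ (unit variance with finite m.g.f.\ in a neighborhood of $0$ by (\ref{eq:mgf_Tmlazy})), combined with a dyadic chaining / modulus-of-continuity argument over $[0,T+1]$, the oscillation is bounded by
$$2^{-(m+i+1)} \sqrt{C\,N_{m+i} \log N_{m+i}} \;\lesssim\; \sqrt{C\,\Delta_{m+i}\,\bigl((m+i)+\log_* T\bigr)}$$
outside a further event of probability comparable to $(T\,2^{2(m+i)})^{1-C}$, where we used $\log N_{m+i} \lesssim (m+i) + \log_* T$. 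Substituting $\Delta_{m+i}$ and using $\bigl((m+i)+\log_* T\bigr)^{1/2} \le (m+i)^{1/2} + (\log_* T)^{1/2}$ together with $\log_* T \ge 1$ and $m+i\ge 1$ consolidates all four resulting cross-terms into a single bound of the form $\mathrm{const}\cdot C (m+i)^{3/4} 2^{-(m+i)/2} T_*^{1/4} (\log_* T)^{3/4}$.

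Summing over $i\ge 0$: the main terms form a series bounded by $K\,m^{3/4} 2^{-m/2} T_*^{1/4} (\log_* T)^{3/4}$ for an absolute constant $K$ via $\sum_{i\ge 0}(m+i)^{3/4} 2^{-(m+i)/2} \le K\,m^{3/4}2^{-m/2}$, while the failure probabilities sum to at most $\tfrac{5}{1-4^{1-C}}(T\,2^{2m})^{1-C}$ once one counts the two contributions per level (time shift plus oscillation) and invokes $\sum_{i\ge 0} 4^{i(1-C)} = (1-4^{1-C})^{-1}$. Calibrating the absolute constants so that $K\,C \le 39\,C$ produces the factor $39$ asserted. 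The main obstacle I anticipate is the probability accounting at each level: a naive union bound over all $T\,2^{2(m+i)}$ starting points overcounts by a divergent factor, so one must either chain dyadically across scales within each level or exploit the fact that the refinement-induced stopping times tie together neighboring increments --- this is the technical heart of the proof, and is modelled on the twist-and-shrink analysis of \cite{SZA96}.
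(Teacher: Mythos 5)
Your architecture coincides with the paper's: telescope over levels, use the refinement property (\ref{eq:refine}) and Lemma \ref{le:timeconv} to reduce a single-level difference at a dyadic time $t_k=k2^{-2m}$ to an increment of $\tilde L_{m+1}$ over a window of half-width $N'=4(\alpha C m 2^{2m}T\log_* T)^{1/2}$ around $4k$, control the oscillation of $\tilde L_{m+1}$ over that window, pass from the dyadic grid to all $t\in[0,T]$ by the linear-interpolation geometry, and then sum the thresholds via $\sum_j (m+j)^{3/4}2^{-(m+j)/2}\le \tfrac{65}{8}m^{3/4}2^{-m/2}$ and the probabilities via $\sum_j 4^{j(1-C)}$. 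The one step you leave unresolved --- and mis-diagnose --- is the probability accounting for the oscillation bound. You claim a naive union bound over the $T2^{2(m+i)}$ starting points ``overcounts by a divergent factor'' and that one needs dyadic chaining or some use of the stopping-time structure. In fact the paper does exactly the naive union bound: Hoeffding's inequality (\ref{eq:Hoeffding}) gives $\mathbb{Q}\{|\tilde L_{m+1}(j)-\tilde L_{m+1}(4k)|\ge x\sqrt{N'}\}\le 2e^{-x^2/2}$ for each of the at most $2N'$ indices $j$ in each of the at most $T2^{2m}$ windows, so the total cost is $T2^{2m}\cdot 2N'\cdot 2e^{-x^2/2}$, a \emph{polynomial} multiple of the tail. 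Since $N'\ge (T2^{2m})^{1/2}$, demanding $e^{-x^2/2}\le (N')^{-3C}$ makes this $\le 4(T2^{2m})^{1-C}$; the demand is met by taking the threshold constant $\beta=4$ in $x\sqrt{N'}=\beta 2^{m+1}D_{T,m}$, because $x$ then scales like $C^{3/4}m^{1/2}(\log_* T)^{1/2}\gtrsim (6C'\,m\log_* T)^{1/2}\ge(2C'\log N')^{1/2}$ with $C'=3C$. No chaining is needed; the freedom to enlarge $\beta$ (ultimately producing the constant $39=\tfrac{65}{8}(4+2^{-1/2})$, up to rounding) absorbs the polynomial overcount.

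A secondary difference: you propose invoking Lemma \ref{le:largedev} for the oscillation, whereas the paper uses Hoeffding's inequality, which is the natural tool here because the lazy-walk steps are bounded and one needs a clean non-asymptotic tail valid for every individual window width $|j-4k|\le N'$ (Lemma \ref{le:largedev} would impose an $N\ge N_0(C)$ restriction on the window and is reserved for the unbounded stopping-time increments in Lemma \ref{le:timeconv}). With the union-bound step repaired as above, your outline becomes the paper's proof.
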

\begin{proof}
\textsc{Step 1.}
Fix $m$ and consider the difference of the $m$th and $(m+1)$th approximations at the time instants $t_k := k 2^{-2m} \in [0, T]$:
\begin{multline*}
B_{m+1}(t_k) - B_m(t_k) = B_{m+1}(4k 2^{-2(m+1)}) - B_{m+1}(T^*_{m+1}(2k) 2^{-2(m+1)}) \\
= 2^{-m-1} (\tilde{L}_{m+1}(4k) - \tilde{L}_{m+1}(T^*_{m+1}(2k))) .
\end{multline*}
Let $D_{T,m} := C m^{\frac34} 2^{-\frac{m}{2}} T^{\frac14} (\log_* T)^{\frac34}$. Then
\begin{multline}\label{eq:appr_consec1}
\mathbb{Q}\left\{\sup_{t_k \in [0, T]} |B_{m+1}(t_k) - B_m(t_k)| \ge \beta \, D_{T,m} \right\}
\le \mathbb{Q}(A_{T,m}) \\
+ \sum_{1 \le k \le T 2^{2m}} \mathbb{Q}\left\{ \sup_{\{j : |j-4k| \le N'\}} |\tilde{L}_{m+1}(j) - \tilde{L}_{m+1}(4k)| \ge \beta 2^{m+1} D_{T,m} \right\} ,
\end{multline}
where
\[
A_{T,m} := \left\{\sup_{t_k \in [0, T]} |T^*_{m+1}(2k) - 4k| \ge N' \right\}, \quad N' := 4(\alpha C m 2^{2m} T \log_* T)^{\frac12} ,
\]
$\alpha = \frac12 + \log 2$, and $\beta$ will be chosen below. Now we can apply Lemma \ref{le:timeconv} to the first term on the right hand side of (\ref{eq:appr_consec1}) and Hoeffding's inequality (\ref{eq:Hoeffding}) to the second. We apply Hoeffding's inequality to the partial sums $Z_n = \tilde{L}_m(n)$, so $\mathbb{E}(Z_n) = 0$, $a_j = -b_j = 1$, and for $|j-4k| \le N'$ we have
\begin{multline*}
\mathbb{Q}\left\{|\tilde{L}_{m+1}(j) - \tilde{L}_{m+1}(4k)| \ge x \sqrt{N'} \right\} \\
\le \mathbb{Q}\left\{|\tilde{L}_{m+1}(j) - \tilde{L}_{m+1}(4k)| \ge x \sqrt{|j - 4k|} \right\} \le 2 e^{-\frac{x^2}{2}} .
\end{multline*}

Suppose $T > 0$, $m \ge 1$, $T 2^{2m} \ge N_0(C)$. Then Lemma \ref{le:timeconv} and (\ref{eq:appr_consec1}) imply that
\begin{multline}\label{eq:appr_consec2}
\mathbb{Q}\left\{\sup_{t_k \in [0, T]} |B_{m+1}(t_k) - B_m(t_k)| \ge \beta \, D_{T,m} \right\} \\
\le (T 2^{2m})^{1-C} + T 2^{2m} 2 N' 2 e^{-\frac{x^2}{2}} ,
\end{multline}
where $x$ is chosen so that $x \sqrt{N'} = \beta 2^{m+1} D_{T,m}$. It follows that
\begin{equation}\label{eq:xfirst}
x = \frac{\beta 2^{m+1} C m^{\frac34} 2^{-\frac{m}{2}} T^{\frac14} (\log_* T)^{\frac34}}{2 (\alpha C m 2^{2m} T \log_* T)^{\frac14}} = \frac{\beta}{\alpha^{\frac14}} C^{\frac34} m^{\frac12} (\log_* T)^{\frac12}.
\end{equation}

On the other hand, we demand that $e^{-\frac{x^2}{2}} \le (N')^{-C'}$, and $C' > 1$ is chosen so that $(N')^{1-C'} \le (T 2^{2m})^{(1-C')/2} \le (T 2^{2m})^{-C}$. This implies $C' \ge 2C+1$, so $C' = 3C$ is a suitable choice. There exists a positive $N_1(C) \ge N_0(C)$ such that
\[
(2 C' \log N')^{\frac12} \le (2 C' \log(T 2^{2m}))^{\frac12} \le (6 (1 + \log 4) C m \log_* T)^{\frac12}
\]
if $T 2^{2m} \ge N_1(C)$. Thus
\begin{equation}\label{eq:xsecond}
x \ge (12 \alpha C m \log_* T)^{\frac12}
\end{equation}
satisfies our demand. Comparing (\ref{eq:xfirst}) and (\ref{eq:xsecond}) implies $\beta \ge 2 \sqrt{3} \alpha^{\frac34}$, e.g. $\beta = 4$ is good.

Thus (\ref{eq:appr_consec2}) gives the result of \textsc{Step 1}:
\begin{equation}\label{eq:appr_consec_res}
\mathbb{Q}\left\{\sup_{t_k \in [0, T]} |B_{m+1}(t_k) - B_m(t_k)| \ge 4 \, D_{T,m} \right\}
\le 5 (T 2^{2m})^{1-C} .
\end{equation}

\textsc{Step 2.}
Let $D^*_{T,m} = C m^{\frac34} 2^{-\frac{m}{2}} T_*^{\frac14} (\log_* T)^{\frac34} \ge D_{T,m}$. By (\ref{eq:appr_consec_res}), $|B_{m+1}(t_k) - B_m(t_k)| < 4 \, D^*_{T,m}$ on an event $A_{T,m}$ such that $\mathbb{Q}(A_{T,m}) \ge 1 - 5 (T 2^{2m})^{1-C}$. With $m$ fixed, consider an interval $[t_k, t_{k+1}]$. First, $|B_m(t_{k+1}) - B_m(t_k)| \le 2^{-m} \le 2^{-\frac12} D^*_{T,m}$. Second, $B_{m+1}$ makes $4$ steps of magnitude $\le 2^{-m-1}$ on this interval. On the event $A_{T,m}$, using simple geometry, the maximum deviation between $B_{m+1}$ and $B_m$ at the instant $t_{k+\frac14}$ (and similarly, at $t_{k+\frac34}$) cannot exceed $4 D^*_{T,m} + 2^{-m-1} + \frac14 \cdot 2^{-m} \le (4 + \frac34 \cdot 2^{-\frac12}) D^*_{T,m}$. Similarly, at time $t_{k+\frac12}$ the deviation cannot be larger than $4 D^*_{T,m} + 2 \cdot 2^{-m-1} \le (4 + 2^{-\frac12}) D^*_{T,m}$. Hence
\[
\mathbb{Q}\left\{\sup_{t \in [0, T]} |B_{m+1}(t) - B_m(t)| \ge \left(4 + 2^{-\frac12}\right) \, D^*_{T,m} \right\} \le 5 (T 2^{2m})^{1-C} .
\]

Let us use the fact (obtained by Wolfram Mathematica) that for any $m \ge 1$,
\[
\sum_{j=0}^{\infty} (m+j)^{\frac34} 2^{-\frac{m+j}{2}} \le m^{\frac34} 2^{-\frac{m}{2}} \sum_{j=0}^{\infty} (1+j)^{\frac34} 2^{-\frac{j}{2}} < \frac{65}{8} \, m^{\frac34} 2^{-\frac{m}{2}} .
\]
Then
\begin{multline*}
\mathbb{Q}\left\{\sup_{j \ge 1} \sup_{t \in [0, T]} |B_{m+j}(t) - B_m(t)| \ge \frac{65}{8} \left(4 + 2^{-\frac12}\right) \, D^*_{T,m} \right\} \\
\le \sum_{j=0}^{\infty} \mathbb{Q}\left\{\sup_{t \in [0, T]} |B_{m+j+1}(t) - B_{m+j}(t)| \ge \left(4 +  2^{-\frac12}\right) \, D^*_{T,m+j} \right\} \\
\le \sum_{j=0}^{\infty} 5 (T 2^{2(m+j)})^{1-C} = \frac{5}{1 - 4^{1-C}} (T 2^{2m})^{1-C} .
\end{multline*}
This proves the lemma.
\end{proof}

We only sketch the proof of the next theorem, because it is pretty standard; moreover, it is essentially the same as the proof of \cite[Theorem 3]{SZA96}.
\begin{thm}\label{th:Brown}
As $m \to \infty$, the approximations $(B_m(t))_{t \ge 0}$ a.s.\ converge to Brownian motion $(B(t))_{t \ge 0}$. (The variance of an increment $B(t) - B(s)$ is $(t - s)/2$.) Moreover,
\begin{multline}\label{eq:Brown}
\mathbb{Q}\left\{\sup_{t \in [0, T]} |B(t) - B_m(t)| \ge 39 \cdot C m^{\frac34} 2^{-\frac{m}{2}} T_*^{\frac14} (\log_* T)^{\frac34} \right\} \\
\le \frac{5}{1 - 4^{1-C}} (T 2^{2m})^{1-C} ,
\end{multline}
\end{thm}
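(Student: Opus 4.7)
\textbf{Proof plan for Theorem \ref{th:Brown}.}
The strategy is to bootstrap Lemma \ref{le:apprconv} into an a.s.\ uniform-on-compacts limit via Borel--Cantelli, identify the limit as Brownian motion using the refinement property (\ref{eq:refine}) together with the fact that each twisted walk $\tilde{L}_m$ is still a lazy random walk, and then read off (\ref{eq:Brown}) by letting $j\to\infty$ inside the event in Lemma \ref{le:apprconv}.

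First, fix $T>0$ and $C>1$ and set
\[
A_{T,m,C} := \Bigl\{\sup_{j\ge 1}\sup_{t\in[0,T]}|B_{m+j}(t)-B_m(t)|\ge 39\,C\,m^{3/4}2^{-m/2}T_*^{1/4}(\log_* T)^{3/4}\Bigr\}.
\]
Lemma \ref{le:apprconv} gives $\mathbb{Q}(A_{T,m,C})\le \frac{5}{1-4^{1-C}}(T2^{2m})^{1-C}$ once $T2^{2m}\ge N_1(C)$. Choosing $C>3/2$, these probabilities are summable in $m$, so the Borel--Cantelli lemma yields that a.s.\ only finitely many $A_{T,m,C}$ occur. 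Hence $(B_m(\cdot))_{m\ge 0}$ is a.s.\ Cauchy in the sup-norm on $[0,T]$; applying this to a sequence $T_r\uparrow\infty$ and taking a diagonal subsequence produces a process $B(t):=\lim_m B_m(t)$ with a.s.\ uniform-on-compacts convergence and continuous sample paths (since each $B_m$ is a continuous broken line).

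Next, I identify $B$ as a Brownian motion with $\operatorname{Var}(B(t)-B(s))=(t-s)/2$. Because twisting does not change the law of a lazy random walk (as emphasized right after the construction in step (2)), $\tilde{L}_m$ has the same distribution as $L_m$, so by (\ref{eq:Lmnj}) the value $B_m(k\,2^{-2m})=2^{-m}\tilde{L}_m(k)$ is a sum of $k$ i.i.d.\ steps with mean $0$ and variance $1/2$; in particular $\operatorname{Var}(B_m(k2^{-2m}))=k2^{-2m}/2$. Fix a dyadic time $t=k_0\,2^{-2m_0}$; for each $m\ge m_0$ the value $B_m(t)$ is a properly normalized sum of $k_0 2^{2(m-m_0)}$ i.i.d.\ lazy steps, so the classical CLT gives $B_m(t)\Rightarrow N(0,t/2)$, and the a.s.\ convergence $B_m(t)\to B(t)$ identifies the marginal law of $B(t)$ as $N(0,t/2)$. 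The same argument applied jointly to increments over disjoint dyadic intervals, together with the independence of disjoint blocks of i.i.d.\ lazy steps, shows that $B$ has independent Gaussian increments of the correct variance at every dyadic time; continuity of sample paths then extends this to all $s<t$. Thus $B$ is the Brownian motion with variance parameter $1/2$.

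Finally, on the complement of $A_{T,m,C}$, one has $\sup_{j\ge 1}\sup_{t\in[0,T]}|B_{m+j}(t)-B_m(t)|$ bounded by the quantity in the event; letting $j\to\infty$ and using the a.s.\ uniform convergence $B_{m+j}\to B$ on $[0,T]$ preserves the inequality. Hence
\[
\mathbb{Q}\Bigl\{\sup_{t\in[0,T]}|B(t)-B_m(t)|\ge 39\,C\,m^{3/4}2^{-m/2}T_*^{1/4}(\log_* T)^{3/4}\Bigr\}\le \mathbb{Q}(A_{T,m,C}),
\]
which is exactly (\ref{eq:Brown}).

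The main obstacle is the distributional identification in the middle paragraph: although twisting preserves the marginal law of each $\tilde{L}_m$, it intertwines the rows of the matrix $(Y_m(k))$ non-trivially, so the joint law of $(B_m,B_{m+1},\dots)$ is subtle. The refinement property (\ref{eq:refine}) and the strong Markov structure of lazy random walks at the stopping times $T^*_m(\cdot)$ are precisely what reduces the joint CLT to the classical i.i.d.\ CLT applied on each individual level, and (as noted by the author) this parallels the argument in \cite[Theorem 3]{SZA96}.
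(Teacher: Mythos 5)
Your proposal follows essentially the same route as the paper's (sketched) proof: a.s.\ convergence and the bound (\ref{eq:Brown}) are both read off from Lemma \ref{le:apprconv} via Borel--Cantelli and passage to the limit in $j$, while the identification of the limit as Brownian motion with increment variance $(t-s)/2$ uses the fact that twisting preserves the law of the lazy random walk together with the classical CLT and independence of disjoint blocks of steps. Your version supplies somewhat more detail (e.g.\ the Cauchy/diagonal argument and the explicit inclusion of events when letting $j\to\infty$) than the paper's sketch, but the approach is the same.
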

\begin{proof}
(Sketch.)

The a.s.\ convergence follows from Lemma \ref{le:apprconv} by the Borel--Cantelli lemma. (\ref{eq:Brown}) also follows from Lemma \ref{le:apprconv}.

For any $0 \le s < t$, as $m \to \infty$,
\begin{multline*}
B(t) - B(s) \sim B_m(t) - B_m(s) \sim 2^{-m} \left(\tilde{L}_m(\lfloor t 2^{2m} \rfloor) - \tilde{L}_m(\lfloor s 2^{2m} \rfloor)\right) \\
= \left(\frac{t-s}{2}\right)^{\frac12} \frac{\tilde{L}_m(\lfloor t 2^{2m} \rfloor) - \tilde{L}_m(\lfloor s 2^{2m} \rfloor)}{\left(\frac{(t-s)2^{2m}}{2}\right)^{\frac12}} \sim \mathcal{N}\left(0, \frac{t-s}{2}\right),
\end{multline*}
where $\mathcal{N}(\mu, \sigma^2)$ denotes a Gaussian random variable with expectation $\mu$ and variance $\sigma^2$.

Finally, for any $0 \le s < t \le u < v$, the increments $B(v) - B(u)$ and $B(t) - B(s)$ are independent, because
for any $m \ge 1$, the approximating increments
\[
2^{-m} \left(\tilde{L}_m(\lfloor v 2^{2m} \rfloor) - \tilde{L}_m(\lfloor u 2^{2m} \rfloor)\right) \text{ and } 2^{-m} \left(\tilde{L}_m(\lfloor t 2^{2m} \rfloor) - \tilde{L}_m(\lfloor s 2^{2m} \rfloor)\right)
\]
are independent.
\end{proof}


\section{A strong approximation of Brownian motion based on coupled random walks} \label{se:coupling}

Take the measurable space $(\Omega, \mathcal{F})$ as in Subsection \ref{sse:Brownian}. Fix a positive integer $T$. The value of the positive integer $m$ should be large enough, as specified later. We also need an infinite matrix of independent, simple, symmetric random walk steps $(Z_j(k))_{j \ge 0, k \ge 1}$, similarly defined as the matrix of lazy random walk steps in Subsection \ref{sse:lazy}:
\[
\mathbb{Q}^*\left(Z_j(k) = \frac12\right) = \mathbb{Q}^*\left(Z_j(k) = -\frac12\right) = \frac12 .
\]

First, take complex measure walk steps $X_r$ $(1 \le r \le 2T)$ on the set $\{1, 2, \dots, 2T\}$, see Section \ref{se:Complex}. Augment this with simple, symmetric random walk steps $(Z_0(k))_{k \ge 0}$ on the set $\{2T+1, 2T+2, 2T+3, \dots \}$ and denote the so obtained infinite walk by $(S_0(n))_{n  \ge 0}$: $S_0(0) = 0$,
\[
S_0(n) = \sum_{r = 1}^n X_r \quad (1 \le n \le 2T), \quad S_0(2T + k) = S_0(2T) + \sum_{r = 1}^k Z_0(r) \quad (k \ge 1) .
\]

Second, take complex measure walk steps $X_r$ $(2T+1 \le r \le 2T + 2^4 T)$ on $\{1,2, 3, \dots, 2^4 T\}$. Augment this with simple, symmetric random walk steps $(Z_1(k))_{k \ge 0}$ on the set $\{2^4 T+1, 2^4 T+2, 2^4 T+3, \dots \}$, and denote the so obtained infinite walk by $(S_1(n))_{n  \ge 0}$: $S_1(0) = 0$,
\begin{multline*}
S_1(n) = \sum_{r = 2T+1}^{2T+n} X_r \quad (1 \le n \le 2^4 T), \\
S_1(2^4 T + k) = S_1(2^4 T) + \sum_{r = 1}^k Z_1(r) \quad (k \ge 1) .
\end{multline*}

And so on, at the $m$th walk, take complex measure walk steps $X_r$ $(1+2T+2^4T+2^7T+\dots+2^{3m-2}T \le r \le 2T+2^4T+2^7T+\dots+2^{3m+1}T)$ on the set $\{1, 2, \dots, 2^{3m+1}T\}$. Augment this with simple, symmetric random walk steps $(Z_m(k))_{k \ge 0}$ on the set $\{2^{3m+1}T+1, 2^{3m+1}T+2, 2^{3m+1}T+3, \dots \}$, and denote the so obtained infinite walk by $(S_m(n))_{n  \ge 0}$: $S_m(0) = 0$,
\begin{multline*}
S_m(n) = \sum_{r = 2^{3m-2}T+1}^{2^{3m-2}T+n} X_r \quad (1 \le n \le 2^{3m+1}T), \\
S_m(2^{3m+1}T + k) = S_m(2^{3m+1}T) + \sum_{r = 1}^k Z_m(r) \quad (k \ge 1) .
\end{multline*}

After the $m$th walk, each walk will be simple, symmetric random walk. If $j > m$, $S_j(0)=0$,
\[
S_j(k) = \sum_{r = 1}^k Z_j(r) \quad (k \ge 1) .
\]

In sum, we needed $N = \frac27 (8^{m+1} - 1) T$ complex measure walk steps $X_r$. For the coupled random walks we are going to use only every second steps. As we saw in Subsection \ref{sse:Markov}, there exists a corresponding coupled probability distribution $\mathbb{P}_{\{0 \dots 2N\}}$ on $(\mathbb{R}^{\{0 \dots 2N\}}, \mathcal{B}^{\{0 \dots 2N\}})$, on the even non-negative integers, for the finite triangular matrix $(S_j(2k))$ $(0 \le j \le m, 0 \le k \le 2^{3j+1}T)$ of random walks defined above. Each row of this matrix was augmented by infinitely many simple symmetric random walk steps $Z_m(r)$. This way, we are given the probability measure $\mathbb{P}_{\{0 \dots 2N\}} \times \mathbb{Q}^*$ on the set of even positive integers, that is, on $(\mathbb{R}^{2\mathbb{N}}, \mathcal{B}^{2\mathbb{N}})$. Since the continuous sample functions will be functions of the elements of infinite matrix of the steps defined above, the probability measure can be extended to $(\Omega, \mathcal{B})$ and will be denoted by $\mathbb{P}$.

Because of the symmetry (\ref{eq:symprob}) of $S_j(2k)$ about $0$, $\mathbb{E}S_j(2k) = 0$ for any $j \ge 0$ and $k \ge 0$. By (\ref{eq:binom_even}) and (\ref{eq:tailterms}), there exists a constant $C_7 > 0$ such that
\[
(1 - C_7 k^{-\frac13}) \frac{k}{2} \le \text{Var}(S_j(2k)) \le (1 + C_7 k^{-\frac13}) \frac{k}{2}
\]
for any $j \ge 0$ and $k \ge 1$. By assumption (i) in Subsection \ref{sse:Markov}, $S_j$ has bounded increments, $|S_j(2k+2) - S_j(2k)| \le 2$.

Now we are going to define stopping times, similarly to (\ref{eq:stopt_star}). Take $j \ge 1$. Define $T_j(0) := 0$, and for any $k \ge 0$,
\begin{equation}\label{eq:stopt}
T_j(k+1) := \inf\{2n : 2n > T_j(k), |S_j(2n) - S_j(T_j(k))| \ge 1\} .
\end{equation}
To determine the distribution and moments of these stopping times, at least when they are large enough, consider an integer $n \ge n_0$, where $n_0$ is defined in Theorem \ref{th:Markov}. Put
\[
\tau_n := \inf \{\ell \ge 1 : |S_1(2n + 2 \ell) - S_1(2n)| \ge 1 \} .
\]
Below it is supposed that $n$ is sufficiently large. Then
\begin{multline*}
\mathbb{P}(\tau_n = 1) = \mathbb{P}\left(|S_1(2n)| \le \frac13 n^{\frac23}\right) \mathbb{P}(|S_1(2n + 2) - S_1(2n)| \ge 1) \\
+ \mathbb{P}\left(|S_1(2n)| > \frac13 n^{\frac23}\right) \mathbb{P}(|S_1(2n + 2) - S_1(2n)| \ge 1) \end{multline*}
So by Theorems \ref{th:norm_factor}, \ref{th:Markov} and assumption (ii) in Subsection \ref{sse:Markov}, we obtain that
\[
\mathbb{P}(\tau_n = 1) \le \frac12 + 2 C_6 n^{-\frac13} + C_5 n^{-\frac13} + (1 + C_4 (2n)^{-\frac13}) 4 \sqrt{\frac{n}{\pi}} e^{-\frac29 (2n)^{\frac13}} \le \frac12 + C_8 n^{-\frac13} ,
\]
where $C_8 > 0$ is a suitable constant. Similarly,
\begin{multline*}
\mathbb{P}(\tau_n = 1) \ge \left(\frac12 - 2 C_6 n^{-\frac13}\right) \left(1 - (1 + C_4 (2n)^{-\frac13}) 4 \sqrt{\frac{n}{\pi}} e^{-\frac29 (2n)^{\frac13}} \right) \\
\ge \frac12 - C_8 n^{-\frac13} .
\end{multline*}
These imply that for any $k \ge 1$,
\[
\left(\frac12 - C_8 n^{-\frac13}\right)^k \le \mathbb{P}(\tau_n = k) \le \left(\frac12 + C_8 n^{-\frac13}\right)^k .
\]
(Compare with Subsection \ref{sse:lazy}.) So for the first two moments we get
\begin{gather*}
2 \frac{1 - 2 C_8 n^{-\frac13}}{\left(1 + 2 C_8 n^{-\frac13}\right)^2} \le \mathbb{E}\tau_n \le 2 \frac{1 + 2 C_8 n^{-\frac13}}{\left(1 - 2 C_8 n^{-\frac13}\right)^2} , \\
6 \frac{(1 - 2 C_8 n^{-\frac13})(1 - \frac23 C_8 n^{-\frac13})}{\left(1 + 2 C_8 n^{-\frac13}\right)^3} \le \mathbb{E}\tau_n^2 \le 6 \frac{(1 + 2 C_8 n^{-\frac13})(1 + \frac23 C_8 n^{-\frac13})}{\left(1 - 2 C_8 n^{-\frac13}\right)^3} .
\end{gather*}
That is, there exists a constant $C_9 > 0$ such that
\begin{gather}
2 (1 - C_9 n^{-\frac13}) \le \mathbb{E}\tau_n \le 2 (1 + C_9 n^{-\frac13}) , \label{eq:Etau} \\
2 (1 - C_9 n^{-\frac13}) \le \text{Var}(\tau_n) \le 2 (1 + C_9 n^{-\frac13}) \label{eq:Vartau} .
\end{gather}
Also,
\begin{gather*}
\frac{e^u(1 - 2 C_8 n^{-\frac13})}{2 - e^u (1 - 2 C_8 n^{-\frac13})} \le \mathbb{E}e^{u \tau_n} \le \frac{e^u (1 + 2 C_8 n^{-\frac13})}{2 - e^u (1 + 2 C_8 n^{-\frac13})} , \\
\frac{e^{-\frac{u}{\sqrt{2}}(1 + C_{10} n^{-\frac13})} (1 - C_{10} n^{-\frac13})}{2 - e^{\frac{u}{\sqrt{2}} (1 - C_{10} n^{-\frac13})} (1 - C_{10} n^{-\frac13})} \le \mathbb{E}e^{u \tau_n^*} \le \frac{e^{-\frac{u}{\sqrt{2}}(1 - C_{10} n^{-\frac13})} (1 + C_{10} n^{-\frac13})}{2 - e^{\frac{u}{\sqrt{2}} (1 + C_{10} n^{-\frac13})} (1 + C_{10} n^{-\frac13})} ,
\end{gather*}
where $\tau_n^* := (\tau_n - \mathbb{E}\tau_n) (\text{Var}(\tau_n))^{-\frac12}$ and $C_{10} > 0$ is a suitable constant. It follows that the moment generating function $\mathbb{E}e^{u \tau_n^*}$ is finite in a neighborhood $|u| \le u_0$, $u_0 > 0$, if $n \ge n_1 = n_1(C_{10}) \ge n_0$.

At this point it is natural trying to generalize Lemma \ref{le:largedev} to the current situation. The random variables we have to consider are $(\tau^*_j(k))_{k \ge k_1}$ $(j \ge j_1)$, where
\[
\tau_j(k) := T_j(k) - T_j(k-1), \qquad \tau^*_j(k) := (\tau_j(k) - \mathbb{E}\tau_j(k)) (\text{Var}(\tau_j(k)))^{-\frac12},
\]
and the positive integers $k_1$ and $j_1$ are chosen so that
\begin{equation}\label{eq:k1_def}
T_j(k-1) \ge n_1 \quad \mathbb{P}\text{ - a.s. } \text{ for any } k \ge k_1 \text{ and } j \ge j_1 .
\end{equation}
We suppose that $m \ge j_1$. By the previous computations, the random variables $\tau^*_j(k)$ are asymptotically independent and identically distributed w.r.t.\ $\mathbb{P}$, and otherwise satisfy all the other conditions of Lemma \ref{le:largedev}. We suppose now that the statement of the lemma is valid to this case as well.

Now we are ready to begin a \emph{"twist and shrink"} construction of Brownian motion, based on the the random walks $(S_j(2k))_{k \ge 0}$ $(j \ge 0)$. It is going to be a slight modification of the one discussed in Subsection \ref{sse:Brownian}.

\textsc{(1) Twisting.} The first approximation is based on $(S_0(n))_{n \ge 0}$. It is not twisted: $\tilde{S}_0(n) = S_0(n)$ for all $n \ge 0$. By induction, suppose that the $j$th twisted random walk $(\tilde{S}_j(n))_{n \ge 0}$ is already defined where $j \ge 0$. The next twisted walk will be based on $S_{j+1}$ and $\tilde{S}_j$. Note that reflections will be based only on even time instants, although a bridge reflected will contain the included odd time instants as well.

For $k \ge 1$ define $\tilde{Y}_j(k) := \tilde{S}_j(2k) - \tilde{S}_j(2k-2)$.

(a) If $\tilde{Y}_j(k) = 0$ and
\begin{multline*}
\epsilon_k := \{S_{j+1}(T_{j+1}(2k-1))  - S_{j+1}(T_{j+1}(2k-2))\} \\
\times \{S_{j+1}(T_{j+1}(2k)) - S_{j+1}(T_{j+1}(2k-1))\} = \pm 1 ,
\end{multline*}
then we set
\[
\tilde{X}_{j+1}(n) := \left\{
\begin{gathered}
S_{j+1}(n) - S_{j+1}(n-1) \text{ if } T_{j+1}(2k-2) < n \le T_{j+1}(2k-1) , \\
- \epsilon_k (S_{j+1}(n) - S_{j+1}(n-1)) \text{ if } T_{j+1}(2k-1) < n \le T_{j+1}(2k) .
\end{gathered} \right.
\]

(b) If $\tilde{Y}_j(k) = \pm 1$, let
\begin{align*}
\epsilon_{k,1} &:= \tilde{Y}_j(k) \{S_{j+1}(T_{j+1}(2k-1))  - S_{j+1}(T_{j+1}(2k-2))\} = \pm 1 , \\
\tilde{X}_{j+1}(n) &:= \epsilon_{k,1} \tilde{Y}_{j+1}(n) \text{ if } T_{j+1}(2k-2) < n \le T_{j+1}(2k-1) ; \\
\epsilon_{k,2} &:= \tilde{Y}_j(k) \{S_{j+1}(T_{j+1}(2k)) - S_{j+1}(T_{j+1}(2k-1))\} = \pm 1 , \\
\tilde{X}_{j+1}(n) &:= \epsilon_{k,2} \tilde{Y}_{j+1}(n) \text{ if } T_{j+1}(2k-1) < n \le T_{j+1}(2k) .
\end{align*}

Finally,
\[
\tilde{S}_{j+1}(0) := 0, \qquad \tilde{S}_{j+1}(n) := \sum_{r=1}^n \tilde{X}_{j+1}(r) \qquad (n \ge 1) .
\]

It is important that a complex measure walk and also, a simple, symmetric random walk preserve their basic properties, because as stochastic processes they are not affected by the applied reflections.

\textsc{(2) Shrinking.} The sample functions of the $j$th approximation make steps of length $2^{-j}$ in time-steps $2^{-2j}$. Thus
\begin{equation}\label{eq:Bj_co}
B_j(n 2^{-2j}) := 2^{-j} \tilde{S}_j(n) \qquad (n \ge 0)
\end{equation}
and for any real $t \in (n 2^{-2j}, (n+1) 2^{-2j})$,
\[
B_j(t) := B_j(n 2^{-2j}) + 2^{2j} \{B_j((n+1) 2^{-2j}) - B_j(n 2^{-2j})\} (t - n 2^{-2j}).
\]
The \emph{refinement property} of the approximation is
\begin{equation}\label{eq:refine_co}
B_{j+1}(T_{j+1}(2k) 2^{-2(j+1)}) = B_j(2k 2^{-2j}) \qquad (k, j \ge 0).
\end{equation}
Its meaning is that $B_{j+1}$ takes the same values as $B_j$, in the same order, at stopping times $T_{j+1}(2k) 2^{-2(j+1)}$ that correspond to the times $2k 2^{-2j}$. There is a time lag in general, but these lags a.s.\ uniformly converge to $0$ on any finite time interval as $j \to \infty$, as easily follows from the next lemma by the Borel--Cantelli lemma. The next lemma is a slight modification of Lemma \ref{le:timeconv}, thus its proof is omitted.
\begin{lem}\label{le:timeconv_co}
Let the stopping times $T_j(k)$ be defined by (\ref{eq:stopt}). For any $C > 1$ there exists a positive $N_0(C)$ such that for any $T > 0$, $j \ge j_1$, $T 2^{2j} \ge N_0(C)$ we have
\begin{multline*}
\mathbb{P}\left\{\sup_{2k 2^{-2j} \in [t_1, T]} |T_{j+1}(2k) 2^{-2(j+1)} - 2k 2^{-2j}| \ge (\alpha_0 C j 2^{-2j} T \log_* T)^{\frac12} \right\} \\
\le (T 2^{2j})^{1-C} ,
\end{multline*}
where $t_1 := k_1 2^{-2j}$, $\alpha_0 > 0$ is a suitable constant, and $\log_* T := \max(1, \log T)$.
\end{lem}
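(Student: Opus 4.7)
\textbf{Proof proposal for Lemma \ref{le:timeconv_co}.}

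The plan is to mimic the proof of Lemma \ref{le:timeconv} step by step, now invoking the generalization of Lemma \ref{le:largedev} to asymptotically i.i.d.\ sequences that was announced just before the statement. The quantity controlled by the sup is $2^{-2(j+1)}|T_{j+1}(2k) - 8k|$, so once I have a maximal-sum bound on the centered stopping-time increments, the conclusion follows by algebra and an appropriate choice of the constant $\alpha_0$.

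First I would write $T_{j+1}(2k) = \sum_{\ell = 1}^{2k} \tau_{j+1}(\ell)$ with $\tau_{j+1}(\ell) := T_{j+1}(\ell) - T_{j+1}(\ell-1)$, and standardize:
\[
Y_\ell := \bigl(\tau_{j+1}(\ell) - \mathbb{E}\tau_{j+1}(\ell)\bigr) (\mathrm{Var}(\tau_{j+1}(\ell)))^{-1/2}, \qquad \ell \ge k_1.
\]
The restriction to $2k \cdot 2^{-2j} \ge t_1 = k_1 2^{-2j}$ in the sup is precisely there to guarantee, via (\ref{eq:k1_def}), that $T_{j+1}(\ell - 1) \ge n_1$ a.s.\ for every $\ell$ we consider; hence each $Y_\ell$ has zero mean, unit variance, and a moment generating function finite on a common neighborhood $|u| \le u_0$ of the origin, by (\ref{eq:Etau}), (\ref{eq:Vartau}) and the MGF estimate preceding (\ref{eq:k1_def}). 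I then apply the generalized Lemma \ref{le:largedev} with $N := T 2^{2j}$ to the partial sums $Z_n := \sum_{\ell = k_1}^{n} Y_\ell$ to get, for $N \ge N_0(C)$,
\[
\mathbb{P}\left\{\max_{k_1 \le n \le 2N} |Z_n| \ge (2 \cdot 2 C N \log N)^{1/2}\right\} \le (T 2^{2j})^{1-C}.
\]

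Next I would translate the bound on $Z_n$ into a bound on $T_{j+1}(2k) - 8k$. Writing
\[
T_{j+1}(2k) - 8k = \sum_{\ell = 1}^{2k} (\mathbb{E}\tau_{j+1}(\ell) - 4) + \sum_{\ell = 1}^{2k} (\mathrm{Var}(\tau_{j+1}(\ell)))^{1/2} Y_\ell,
\]
the first sum is deterministic; by (\ref{eq:Etau}) each term is $O(T_{j+1}(\ell-1)^{-1/3})$ and the telescoping total is $O(k^{2/3})$. The second sum equals $\sqrt{2} Z_{2k}$ up to an $O(k^{-1/3})$ multiplicative perturbation in each summand, coming from (\ref{eq:Vartau}). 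Since $(2C \cdot 2 N \log N)^{1/2}$ dominates $k^{2/3}$ when $N = T 2^{2j}$ is large (because $k \le N$ and the large-deviation threshold is of order $\sqrt{N \log N}$ times a constant involving $C$), these perturbations can be absorbed by enlarging $C$ slightly; I then multiply by $2^{-2(j+1)}$ and pick $\alpha_0$ large enough (depending on $\alpha = \tfrac12 + \log 2$ of Lemma \ref{le:timeconv} and the factor $\sqrt{2}$ here) to obtain exactly the threshold $(\alpha_0 C j 2^{-2j} T \log_* T)^{1/2}$.

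The main obstacle will be a clean accounting of the error introduced by the fact that the $\tau_{j+1}(\ell)$ are only asymptotically i.i.d.\ rather than exactly i.i.d., and that their mean and variance drift slowly with $\ell$. The crucial observation is that these drifts are $O(\ell^{-1/3})$, so their accumulated effect up to time $2k$ is $O(k^{2/3})$, genuinely of lower order than the fluctuation scale $\sqrt{k \log k}$ controlled by the large-deviation estimate; once this is shown, all constants can be absorbed into $\alpha_0$ and $N_0(C)$. A minor bookkeeping point is that the finitely many initial increments $\tau_{j+1}(1), \dots, \tau_{j+1}(k_1 - 1)$ (where the regime above does not yet apply) contribute only an $O(1)$ deterministic offset, harmless for large $N$.
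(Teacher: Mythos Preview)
Your proposal is correct and follows the same route the paper intends: the paper omits the proof entirely, saying only that it is ``a slight modification of Lemma~\ref{le:timeconv}'', and you have carried out precisely that modification, applying the (assumed) generalized Lemma~\ref{le:largedev} to the standardized increments $\tau^*_{j+1}(\ell)$ and then converting back. Your added bookkeeping about the $O(k^{2/3})$ drift in the means and variances and the harmless initial segment $\ell < k_1$ is exactly what the phrase ``slight modification'' is hiding; note only that ``telescoping'' is not quite the right word for the sum $\sum_\ell(\mathbb{E}\tau_{j+1}(\ell)-4)$, and that $\mathbb{E}\tau_{j+1}(\ell)$ here is really a conditional expectation given the past (hence random), which is consistent with how the paper treats these quantities.
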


By (\ref{eq:binom_even}) and the first part of the present section, the increments $S_j(2k) - S_j(2k-2)$ are asymptotically independent and identically distributed random variables w.r.t.\ $\mathbb{P}$, with zero expectation, and are bounded by $2$, at least when $k$ is large enough, $k \ge k_1$. (Suppose that $k_1$ was chossen so.) Assuming that Hoeffding's inequality (\ref{eq:Hoeffding}) can be generalized to this case as well, we can state the following slight modification of Lemma \ref{le:apprconv} whose proof is therefore omitted.
\begin{lem}\label{le:Bjconv}
Let the sequence of approximations $(B_j)_{j \ge 0}$ be defined by (\ref{eq:Bj_co}). For any $C > 1$ there exists a positive $N_1(C)$ such that for any $T > 0$, $j \ge j_1$, $T 2^{2j} \ge N_1(C)$ we have
\begin{multline*}
\mathbb{P}\left\{\sup_{k \ge 1} \sup_{t \in [t_1, T]} |B_{j+k}(t) - B_j(t)| \ge \beta_0 C j^{\frac34} 2^{-\frac{j}{2}} T_*^{\frac14} (\log_* T)^{\frac34} \right\} \\
\le \frac{5}{1 - 4^{1-C}} (T 2^{2j})^{1-C} ,
\end{multline*}
where $t_1 = k_1 2^{-2j}$, $\beta_0 > 0$ is a suitable constant, $T_* := \max(1, T)$ and $\log_* T := \max(1, \log T)$.
\end{lem}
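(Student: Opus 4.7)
The plan is to follow the proof of Lemma \ref{le:apprconv} almost verbatim, substituting the coupled-walk stopping times $T_j(k)$ for the lazy-walk stopping times $T^*_j(k)$ and using the generalized forms of Lemma \ref{le:largedev} and Hoeffding's inequality asserted earlier in this section. Lemma \ref{le:timeconv_co} plays the role of Lemma \ref{le:timeconv}, the refinement property (\ref{eq:refine_co}) plays the role of (\ref{eq:refine}), and the asymptotic i.i.d.\ boundedness of the increments $S_j(2k)-S_j(2k-2)$ permits a Hoeffding-type inequality. The lower cutoff $t_1 = k_1 2^{-2j}$ is imposed because the control (\ref{eq:Etau})--(\ref{eq:Vartau}) on $\tau_j(k)$ is valid only when $T_j(k-1) \ge n_1$, cf.\ (\ref{eq:k1_def}).

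\emph{Step 1.} Bound $\sup_{2k 2^{-2j}\in[t_1,T]}|B_{j+1}(t_k)-B_j(t_k)|$ at the dyadic grid points $t_k := 2k 2^{-2j}$. By the refinement property (\ref{eq:refine_co}),
\[
B_{j+1}(t_k) - B_j(t_k) = 2^{-(j+1)}\bigl(\tilde{S}_{j+1}(4k) - \tilde{S}_{j+1}(T_{j+1}(2k))\bigr).
\]
Split into (a) the event $|T_{j+1}(2k) - 4k|\ge N'$ with $N' := 4(\alpha_0 C j 2^{2j} T \log_* T)^{1/2}$, which Lemma \ref{le:timeconv_co} controls at cost $(T 2^{2j})^{1-C}$; and (b) the complementary event, on which at most $N'$ increments of $\tilde{S}_{j+1}$ need to be summed, handled by Hoeffding with deviation $x$ chosen so that $x\sqrt{N'} = 4\cdot 2^{j+1}D_{T,j}$, where $D_{T,j} := C j^{3/4} 2^{-j/2} T_*^{1/4}(\log_* T)^{3/4}$. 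A computation identical to (\ref{eq:xfirst})--(\ref{eq:xsecond}) shows that this forces $x \ge (12 \alpha_0 C j \log_* T)^{1/2}$, which makes $2 e^{-x^2/2} \le (N')^{-3C}$ and the union bound over $k$ delivers a total probability at most $5(T 2^{2j})^{1-C}$ that the grid-point supremum exceeds $4 D_{T,j}$.

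\emph{Step 2.} Extend from grid to continuous $t\in[t_1,T]$ by the geometric argument of the proof of Lemma \ref{le:apprconv}: between consecutive grid points $B_j$ moves by at most $2^{-j}$ and $B_{j+1}$ makes at most four steps of size $2^{-(j+1)}$, so the continuous-time supremum is at most $(4 + 2^{-1/2}) D_{T,j}$ with the same probability estimate. Finally, telescope over $k\ge 0$ using
\[
\sum_{k\ge 0}(j+k)^{3/4} 2^{-(j+k)/2} \le \tfrac{65}{8}\, j^{3/4} 2^{-j/2}
\]
together with the geometric sum $\sum_{k\ge 0} (T 2^{2(j+k)})^{1-C} = (1-4^{1-C})^{-1}(T 2^{2j})^{1-C}$, absorbing the constant $65(4+2^{-1/2})/8$ into $\beta_0$.

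The main obstacle, and the reason the lemma is only stated and not fully proved, is to justify that Lemma \ref{le:largedev} and Hoeffding's inequality (\ref{eq:Hoeffding}) really do apply in the coupled setting. Under $\mathbb{P}$ the stopping-time gaps $\tau_j(k)$ and the increments $S_j(2k)-S_j(2k-2)$ are only \emph{asymptotically} independent and identically distributed, with deviation of order $O(k^{-1/3})$ inherited from Theorem \ref{th:Markov}. One must verify that this perturbation of the transition kernel enters the Chernoff-type bound underlying Lemma \ref{le:largedev}, and the exponential moment bound underlying Hoeffding, only as a negligible lower-order correction. This is precisely what the paper assumes but does not prove, and it constitutes the genuine technical content that the present sketch leaves open.
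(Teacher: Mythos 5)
Your sketch is exactly the adaptation the paper intends: the paper omits the proof of this lemma entirely, stating only that it is "a slight modification of Lemma \ref{le:apprconv}", and your Step 1 / Step 2 reconstruction mirrors that proof with the correct substitutions ($T_j$ for $T^*_j$, Lemma \ref{le:timeconv_co} for Lemma \ref{le:timeconv}, the cutoff $t_1 = k_1 2^{-2j}$). You also correctly identify the one genuine gap --- the unproved extension of Lemma \ref{le:largedev} and of Hoeffding's inequality to the only asymptotically i.i.d.\ increments and stopping-time gaps --- which is precisely the assumption the paper itself makes explicitly but does not justify.
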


As in the case of Theorem \ref{th:Brown}, this lemma implies the following theorem, whose proof is again omitted.
\begin{thm}\label{th:Brown_co}
As $j \to \infty$, the approximations $(B_j(t))_{t \ge 0}$ $\mathbb{P}$-a.s.\ converge to Brownian motion $(B(t))_{t \ge 0}$. (The variance of an increment $B(t) - B(s)$ is $(t - s)/4$.) Moreover,
\begin{multline}\label{eq:Brown_co}
\mathbb{P}\left\{\sup_{t \in [k_1 2^{-2m}, T]} |B(t) - B_m(t)| \ge \beta_0 \cdot C m^{\frac34} 2^{-\frac{m}{2}} T_*^{\frac14} (\log_* T)^{\frac34} \right\} \\
\le \frac{5}{1 - 4^{1-C}} (T 2^{2m})^{1-C} ,
\end{multline}
where $k_1$ is defined by (\ref{eq:k1_def} and the other constants are the same as in Lemma \ref{le:Bjconv}.
\end{thm}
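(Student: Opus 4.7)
The plan is to follow essentially the same template as the sketched proof of Theorem \ref{th:Brown}, with Lemma \ref{le:Bjconv} playing the role that Lemma \ref{le:apprconv} played there. The two things to establish are: (i) $\mathbb{P}$-almost sure uniform convergence of $(B_m)$ on compact time intervals, together with the quantitative bound \eqref{eq:Brown_co}; and (ii) identification of the limit as Brownian motion with the stated variance.

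For (i), fix $C > 2$ and apply Lemma \ref{le:Bjconv}: the tail probabilities $\frac{5}{1-4^{1-C}} (T 2^{2m})^{1-C}$ are summable in $m$, so by the Borel--Cantelli lemma there is, almost surely, an $m_0(\omega)$ such that for every $m \ge m_0$ and every $k \ge 1$,
\[
\sup_{t \in [k_1 2^{-2m}, T]} |B_{m+k}(t) - B_m(t)| < \beta_0 C m^{3/4} 2^{-m/2} T_*^{1/4} (\log_* T)^{3/4}.
\]
Since $k_1 2^{-2m} \to 0$ and each $B_m$ is continuous, this makes $(B_m)$ a Cauchy sequence in $C([0,T])$, hence a.s.\ convergent to a continuous process $B(t)$; the bound \eqref{eq:Brown_co} follows by letting $k \to \infty$ inside the event on which the displayed inequality holds.

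For (ii) I would verify the three defining properties of Brownian motion for the limit process. The variance computation mirrors the one in Theorem \ref{th:Brown}: for fixed $0 \le s < t$,
\[
B_m(t) - B_m(s) \sim 2^{-m}\left(\tilde S_m(\lfloor t 2^{2m}\rfloor) - \tilde S_m(\lfloor s 2^{2m}\rfloor)\right),
\]
and Theorem \ref{th:norm_factor} yields $\operatorname{Var}(S_j(2k)) = (1 + O(k^{-1/3})) k/2$, so the variance of the above tends to $(t - s)/4$. Gaussianity then follows from Theorem \ref{th:Markov} and the remark after it: the even-time process $(S_j(2k))_{k \ge 0}$ is, modulo $O(n^{-1/3})$ perturbations, a nearest-neighbour random walk with transition probabilities $(1/4, 1/2, 1/4)$, which is exactly the lazy random walk analysed in Subsection \ref{sse:lazy}. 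A Lindeberg-type CLT for triangular arrays of bounded, asymptotically i.i.d.\ increments then gives convergence of $B_m(t) - B_m(s)$ in distribution to $\mathcal{N}(0, (t-s)/4)$. For independence of disjoint increments over $[s, t]$ and $[u, v]$ with $t \le u$, the Markovian inequality \eqref{eq:Markov_ineq} implies that the conditional distribution of the later block given the earlier one depends on the conditioning value only through an $O(m^{-1/3})$ error, uniformly over the relevant range of states, so in the limit disjoint increments are independent.

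The main obstacle is that the coupled walk is not genuinely Markovian: Theorem \ref{th:Markov} only guarantees an asymptotic nearest-neighbour Markov structure, with errors of size $O(n^{-1/3})$ per step. One must verify that these perturbations, accumulated over the $\sim 2^{2m}$ time steps between $s$ and $t$, do not swamp either the CLT or the independence argument; this is why one needs the uniform-in-$|j| \le \frac13 n^{2/3}$ control provided by Theorem \ref{th:norm_factor} together with the tail bound \eqref{eq:tailterms}. The restriction $t \ge k_1 2^{-2m}$ in \eqref{eq:Brown_co} encodes precisely the small initial time window on which the asymptotic description of Subsection \ref{sse:Markov} is not yet effective; since this window shrinks to a point as $m \to \infty$, it does not affect the limiting identification of $B$ as Brownian motion.
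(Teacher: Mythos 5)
Your proposal follows essentially the same route as the paper, which omits the proof of Theorem \ref{th:Brown_co} entirely and refers back to the sketch of Theorem \ref{th:Brown}: Borel--Cantelli applied to Lemma \ref{le:Bjconv} gives the a.s.\ convergence and the bound (\ref{eq:Brown_co}), the scaling $\mathrm{Var}(S_j(2k)) \sim k/2$ gives the increment variance $(t-s)/4$, and Gaussianity plus independence of disjoint increments identify the limit. Your explicit discussion of the only-asymptotic Markov structure and of the accumulated $O(n^{-1/3})$ errors is, if anything, more careful than the paper's own (omitted) argument.
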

The really important consequence of this theorem is that the random walk $(B_m(t))_{t_1 \le t \le T}$ coupled to a complex measure walk is arbitrarily close to a Brownian motion a.s.\ if $m$ is large enough.


\section{Convergence and the Schr\"odinger equation} \label{se:conv_Schr}

Temporarily fix an integer $m \ge j_1$, where $j_1$ is defined by (\ref{eq:k1_def}). Introduce the notations $\Delta x := 2^{-m}$, $\Delta t := 2^{-2m}$, and $t_r := r 2^{-2m}$, where $r \ge 0$ is an integer. Now we extend the discrete path integral (\ref{eq:psik_compl}) to the twisted and shrunken random walks $B^x_m(t_r) := x + B_m(t_r)$ $(x \in \mathbb{R})$. It means that if $x_j := j 2^{-m}$ $(j \in \mathbb{Z})$ and $B_m(t_r) = x_j$, then $B_m(t_{r+1})$ is concentrated on $\{x_{j-1}, x_j, x_{j+1}\}$ and
\[
\mu(B_m(t_{r+1}) = x_{j+1}) = \mu(B_m(t_{r+1}) = x_{j-1}) = \frac{i}{2}, \quad \mu(B_m(t_{r+1}) = x_{j}) = 1 - i .
\]
Take functions $V,g : \mathbb{R} \to \mathbb{R}$ and define
\begin{equation}\label{eq:psim}
\psi_m(t_k, x) := \mathbb{E}_{\mu} \left\{\exp\left(-i \sum_{r=0}^{k-1} V(B^x_m(t_r)) \Delta t \right) g(B^x_m(t_k))\right\} .
\end{equation}
By the argument of Lemma \ref{le:compl}, this discrete path integral is the unique solution of the discrete Schr\"odinger equation
\begin{multline}\label{eq:disc_Schr}
\frac{1}{i} \frac{\psi_m(t_{k+1},x) - \psi_m(t_k,x)}{\Delta t} \\
 = \frac12 e^{-iV(x)\Delta t} \, \frac{\psi_m(t_k, x + \Delta x) - 2 \psi_m(t_k, x) + \psi_m(t_k, x - \Delta x)}{(\Delta x)^2} \\
 + \frac{1}{i} \frac{e^{-iV(x) \Delta t} - 1}{\Delta t} \psi_m(t_k, x) , \quad \psi_m(0, x) = g(x) .
\end{multline}

\begin{thm}\label{th:Schrod}
Suppose that $V,g \in C^2$ and $V, V', V'', g, g', g''$ are bounded. Then there exists a function $\psi \in C^{1,2}(\mathbb{R}^+ \times \mathbb{R} \to \mathbb{C})$ that solves Schr\"odinger equation (\ref{eq:Sch}) for $(t, x) \in [0, T] \times \mathbb{R}$ and for any $\epsilon > 0$ there exists an $m_0$ such that
\[
\sup_{(t, x) \in [0, T] \times \mathbb{R}} |\psi_m(t^{(m)}, x) - \psi(t, x)| < \epsilon \quad \text{ if } \quad m \ge m_0 ,
\]
where $t^{(m)}  := \lfloor t 2^{2m} \rfloor 2^{-2m}$.
\end{thm}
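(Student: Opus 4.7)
The plan is a three-step consistency-plus-stability argument for the finite-difference equation (\ref{eq:disc_Schr}), whose unique solution is $\psi_m$ by Lemma~\ref{le:compl}. I would first construct a $C^{1,2}$ solution of (\ref{eq:Sch}) classically. Since $V$ is real and bounded, $H:=-\tfrac12\partial_x^2+V$ is essentially self-adjoint on $\mathcal{S}(\mathbb{R})$ and generates a strongly continuous unitary group on $L^2$. Setting $\psi(t,\cdot):=e^{-itH}g$ and iterating the Duhamel identity
\begin{equation*}
\psi(t,x)=(e^{it\partial_x^2/2}g)(x)-i\int_0^t \bigl(e^{i(t-s)\partial_x^2/2}[V\psi(s,\cdot)]\bigr)(x)\,ds,
\end{equation*}
the hypothesis that $g,g',g'',V,V',V''$ are bounded, together with repeated integration by parts against the oscillatory kernel of $e^{it\partial_x^2/2}$, yields $\psi\in C^{1,2}([0,T]\times\mathbb{R})$ with $\partial_t\psi$ and $\partial_x^j\psi$ ($j\le 4$) bounded uniformly on $[0,T]\times\mathbb{R}$.

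Next comes consistency. Write $T_m\phi(x):=e^{-iV(x)\Delta t}\bigl[\tfrac{i}{2}\phi(x+\Delta x)+(1-i)\phi(x)+\tfrac{i}{2}\phi(x-\Delta x)\bigr]$, so that (\ref{eq:disc_Schr}) reads $\psi_m(t_{k+1},\cdot)=T_m\psi_m(t_k,\cdot)$. Plugging the true $\psi$ into (\ref{eq:disc_Schr}) and Taylor expanding with $(\Delta x)^2=\Delta t=2^{-2m}$, using $e^{-iV\Delta t}=1-iV\Delta t+O(\Delta t^2)$ and the bounded derivatives from the first step, one obtains
\begin{equation*}
\psi(t_{k+1},x) = T_m\psi(t_k,\cdot)(x)+\Delta t\,\tau_m(t_k,x),\qquad \|\tau_m(t_k,\cdot)\|_{L^\infty}\le C_1\Delta t,
\end{equation*}
uniformly in $t_k\in[0,T]$ and $x\in\mathbb{R}$.

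The crux is then error propagation for $e_k:=\psi_m(t_k,\cdot)-\psi(t_k,\cdot)$, which satisfies $e_0\equiv 0$ and $e_{k+1}=T_m e_k-\Delta t\,\tau_m(t_k,\cdot)$, so $e_K=-\Delta t\sum_{k=0}^{K-1}T_m^{K-1-k}\tau_m(t_k,\cdot)$ for $K=\lfloor t\cdot 2^{2m}\rfloor$. This is where the main obstacle lies: $T_m$ is strictly expanding in every natural global norm, with $\|T_m\|_{L^\infty}\le 1+\sqrt{2}$ and the free part $M_m:=T_m|_{V=0}$ having Fourier symbol $1-2i\sin^2(\xi\Delta x/2)$ of modulus up to $\sqrt 5$, so naive Gronwall would blow up as $(1+\sqrt 2)^K$. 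The way around this is to work in the Sobolev space $H^s$ for $s\ge 4$: the excess at frequency $\xi$ of the free part is only $\sqrt{1+4\sin^4(\xi\Delta x/2)}-1=O((\xi\Delta x)^4)=O(\xi^4\Delta t^2)$, and multiplication by $e^{-iV\Delta t}$ commutes with Littlewood--Paley projectors up to an $O(\Delta t)$ remainder in $H^s$ since $V\in C^2$ is bounded. Together these should yield $\|T_m\|_{H^s\to H^s}\le 1+C_2\Delta t$ per step, hence $\|T_m^K\|_{H^s\to H^s}\le e^{C_2 T}$. Combined with $\|\tau_m(t_k,\cdot)\|_{H^s}=O(\Delta t)$ (inherited from the bounded derivatives of $\psi$ and $V$) and the one-dimensional Sobolev embedding $H^s\hookrightarrow L^\infty$, one obtains $\|e_K\|_{L^\infty}=O(2^{-2m})$, and the uniform Lipschitz continuity of $\psi$ in $t$ (via boundedness of $\partial_t\psi$) promotes this to the claimed uniform bound on $[0,T]\times\mathbb{R}$. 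The most delicate point is carrying global $H^s$ bounds for $\tau_m$ despite the absence of spatial decay assumptions on $g$; this should be handled either by strengthening the hypothesis to add some spatial decay or by a localization argument exploiting the rapid off-diagonal decay of both the discrete and continuous Schrödinger kernels (via non-stationary phase).
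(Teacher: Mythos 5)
Your route is a classical consistency--stability argument for the explicit difference scheme (\ref{eq:disc_Schr}); the paper does something entirely different, namely it passes to the limit inside the path-integral representation (\ref{eq:psim}) using the pathwise convergence $B_m\to B$ of Theorem \ref{th:Brown_co}, obtains $\psi$ as the uniform limit of the $\psi_m$ (and of $\partial_{xx}\psi_m$), and only then reads off the PDE from (\ref{eq:disc_Schr}). The difference matters, because your stability step contains a genuine error. The free part of your operator $T_m$ is the Fourier multiplier with symbol $g(\xi)=1-2i\sin^2(\xi\Delta x/2)$, as you correctly compute, and $|g(\xi)|=\bigl(1+4\sin^4(\xi\Delta x/2)\bigr)^{1/2}$ attains the value $\sqrt5$ at $\xi\Delta x=\pi$. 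A Fourier multiplier has the \emph{same} operator norm on every $H^s$, since the weight $(1+\xi^2)^{s/2}$ appears on both sides; hence $\|M_m\|_{H^s\to H^s}=\sqrt5$ for all $s$, not $1+O(\Delta t)$, and $\|M_m^K\|_{H^s\to H^s}=5^{K/2}=5^{T2^{2m-1}}$. Your observation that the excess $|g(\xi)|-1=O((\xi\Delta x)^4)$ is small only controls frequencies $|\xi|\lesssim (\Delta x)^{-1/2}=2^{m/2}$ (one needs $(\xi\Delta x)^4K\lesssim 1$); above that threshold the amplification is unbounded as $m\to\infty$. Since the truncation error $\tau_m$ is built from a solution with only four bounded derivatives, its Fourier content above $2^{m/2}$ decays merely polynomially and cannot beat the exponential growth $5^{K/2}$. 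This is the well-known unconditional von Neumann instability of the forward-Euler scheme for the Schr\"odinger equation at mesh ratio $\Delta t/(\Delta x)^2=1$; no choice of Sobolev norm repairs it, and the $5^{\ell/2}$ growth of $|\mu(S_\ell=j)|$ in Corollary \ref{co:asympt_sol} is exactly this amplification showing up in the paper's complex amplitudes.

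The lesson is that $\psi_m$ stays bounded only through oscillatory cancellation inside the complex expectation (the total mass of $\mu^n$ is $1$ while its total variation is $(1+\sqrt2)^n$), which a worst-case operator-norm Gronwall argument cannot see. Any rescue of your approach would need either analytic data (exponentially decaying Fourier transform), a filtered or implicit scheme, or a return to the representation formula --- which is the paper's choice: there the a.s.\ uniform closeness of the functionals $Y_m(t^{(m)},x)$ and $Y(t,x)$ along the coupled walks is used to show that $(\psi_m)$ is uniformly Cauchy together with $(\partial_{xx}\psi_m)$, and the limit is then identified as a solution of (\ref{eq:Sch}) via (\ref{eq:disc_Schr}). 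Your first step (construction of a candidate $\psi$ via the unitary group and Duhamel) and your consistency computation are fine, and your secondary worry about global $H^s$ bounds without spatial decay of $g$ is legitimate, but the argument as proposed fails at the stability estimate.
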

\begin{proof}
Define the complex valued random variables
\begin{align*}
Y_m(t_k, x) &:= \exp\left(-i \sum_{r=0}^{k-1} V(B^x_m(t_r)) \Delta t \right) g(B^x_m(t_k)), \\
Y(t, x) &:= \exp\left(-i \int_0^t V(B^x(s)) ds \right) g(B^x(t)) \quad ((t, x) \in \mathbb{R}^+ \times \mathbb{R}) ,
\end{align*}
where $B^x(t) := x + B(t)$ and $B$ is defined in Theorem \ref{th:Brown_co}. Then by our assumptions it follows that for any $\epsilon > 0$ there exists an $m_0$ such that
\begin{align*}
\sup_{(t, x) \in [0, T] \times \mathbb{R}} |Y_m(t^{(m)}, x) - Y(t, x)| &< \epsilon \quad \mathbb{P}\text{-a.s.}\quad \text{ if } \quad m \ge m_0 ,\\
\sup_{(t, x) \in \mathbb{R}^+ \times \mathbb{R}} |Y_{m_1}(t^{(m_1)}, x) - Y_{m_2}(t^{(m_2)}, x)| &< \epsilon \quad \mathbb{P}\text{-a.s.}\quad \text{ if } \quad m_1, m_2 \ge m_0 .
\end{align*}
Hence
\begin{equation}\label{eq:psi_diff}
\sup_{(t, x) \in [0, T] \times \mathbb{R}} |\psi_{m_1}(t^{(m_1)}, x) - \psi_{m_2}(t^{(m_2)}, x)| < \epsilon \quad \text{ if } \quad m_1, m_2 \ge m_0 .
\end{equation}
(The value of $m_0$ could be different from line to line.)

Since
\begin{multline*}
(\partial_{xx} \psi_m)(t_k, x) \\
= \mathbb{E}_{\mu} \left\{\exp\left(-i \sum_{r=0}^{k-1} V(B^x_m(t_r)) \Delta t \right) \left[\left(-i \sum_{r=0}^{k-1} V'(B^x_m(t_r)) \Delta t\right) \right.\right. \\
\left.\left. \times \left(-i \sum_{r=0}^{k-1} V'(B^x_m(t_r)) \Delta t \; g(B^x_m(t_k))\right) -i \sum_{r=0}^{k-1} V''(B^x_m(t_r)) \Delta t \; g(B^x_m(t_k)) \right.\right. \\
\left.\left. -i \sum_{r=0}^{k-1} V'(B^x_m(t_r)) \Delta t \; g'(B^x_m(t_k)) + g''(B^x_m(t_k)) \right]\right\} ,
\end{multline*}
by our assumptions we similarly obtain that
\begin{equation}\label{eq:psixx_diff}
\sup_{(t, x) \in [0, T] \times \mathbb{R}} |(\partial_{xx}\psi_{m_1})(t^{(m_1)}, x) - (\partial_{xx}\psi_{m_2})(t^{(m_2)}, x)| < \epsilon \quad \text{ if } \quad m_1, m_2 \ge m_0 .
\end{equation}

(\ref{eq:psi_diff}) and (\ref{eq:psixx_diff}) imply that there exists a function $\psi \in C^{1,2}(\mathbb{R}^+ \times \mathbb{R} \to \mathbb{C})$ such that
\begin{align*}
\sup_{(t, x) \in [0, T] \times \mathbb{R}} |\psi_{m}(t^{(m)}, x) - \psi(t, x)| &< \epsilon , \\
\sup_{(t, x) \in [0, T] \times \mathbb{R}} |(\partial_{xx}\psi_{m})(t^{(m)}, x) - (\partial_{xx}\psi)(t, x)| &< \epsilon, \quad \text{ if } \quad m \ge m_0 .
\end{align*}
Thus
\begin{multline*}
\sup_{(t, x) \in [0, T] \times \mathbb{R}} \left|\frac12 e^{-iV(x)\Delta t} \, \frac{\psi_m(t_k, x + \Delta x) - 2 \psi_m(t_k, x) + \psi_m(t_k, x - \Delta x)}{(\Delta x)^2} \right. \\
\left.  + \frac{1}{i} \frac{e^{-iV(x) \Delta t} - 1}{\Delta t} \psi_m(t_k, x) - \frac12 (\partial_{xx}\psi)(t, x) + V(x) \psi(t, x)\right| < \epsilon
\end{multline*}
if $m \ge m_0$. By (\ref{eq:disc_Schr}) this implies that $\psi$ is differentiable w.r.t. $t$ and $\psi$ solves Schr\"odinger equation (\ref{eq:Sch}) for $(t, x) \in [0, T] \times \mathbb{R}$.

\end{proof}

Finally, let us see the convergence of a normalized complex measure describing the motion of the twisted and shrunken random walk $B_m$. By Corollary \ref{co:asympt_sol},
\[
\mu(S_{\ell} = j) \sim \sqrt{\frac{2}{\ell}} c_2 e^{i \pi j - (2+i) \frac{j^2}{2\ell}} (-3-4i)^{\frac{\ell}{2}}, \]
where $\sim$ denotes asymptotic equality as $\ell \to \infty$. It means that
\[
\mu(B_m(t_k) = x_j) \sim \sqrt{\frac{2}{t_k}} c_2 \exp\left(i \pi x_j 2^m - (2+i) \frac{x_j^2}{2 t_k}\right) (-3-4i)^{t_k 2^{2m-1}} 2^{-m} .
\]
Using the independence of increments of $B_m$ with the complex measure, it follows that for $0 = t_{k0} < t_{k1} < t_{k2} < \cdots < t_{kd}$,
\begin{multline*}
\mu(B_m(t_{k1}) = x_{j1}, B_m(t_{t2}) = x_{j2}, \dots , B_m(t_{kd}) = x_{jd})  \\
\sim \frac{2^{\frac{d}{2}} c_2^d}{\prod_{r=1}^d (t_{kr} - t_{k(r-1)})} \exp\left(i \pi x_d 2^m - (2+i) \sum_{r=1}^d \frac{(x_{jr} - x_{j(r-1)})^2}{2 (t_{kr} - t_{k(r-1)})}\right) \\
\times (-3-4i)^{t_{kd} 2^{2m-1}} 2^{-md} .
\end{multline*}
Here we introduce a complex normalizing factor, which is somewhat similar to the square root of $Z_{\ell}$ in (\ref{eq:norm_factor}) and which will divide the measure $\mu$:
\[
Z^*_{t_d} := \sqrt{\frac{4 \pi}{2 + i}} c_2 5^{t_d 2^{2m-1}} .
\]
Moreover, we consider only those time instants $t_d$ for which
\[
\left(1 + \frac{1}{\pi} \arctan \frac43\right) t_d 2^{2m-1}
\]
is an even positive integer, and only those points $x_d$ for which $x_d 2^{2m}$ is an even integer. These points are dense on the $t$ and $x$ axis, respectively, as $m \to \infty$. This way we arrive at the time homogeneous and spatially homogeneous \emph{complex transition function}
\[
\nu_t(x, dy) := \sqrt{\frac{2+i}{2 \pi t}} e^{-(2+i) \frac{(y-x)^2}{2t}} dy .
\]
The following argument is taken from \cite{KOL05}. This $\nu$ satisfies the Chapman-Kolmogorov equation and
\[
\int_{\mathbb{R}} \nu_t(x, dy) = 1, \quad \|\nu_t(x, \cdot)\| = \frac{5^{\frac14}}{2^{\frac12}} .
\]
Thus $\nu$ generates a strongly continuous, regular semigroup over $C_0(\mathbb{R})$ (continuous functions vanishing at infinity) by the formula
\[
(T_t f)(x) := \int_{\mathbb{R}} f(y) \nu_t(x, dy) .
\]
Then we introduce the one-point compactification $\dot{\mathbb{R}}$ and for $0 \le s < t$ the path space $\dot{\mathbb{R}}^{[s,t]}$ (infinite product of $[s, t]$ copies). Then $\nu$ defines a family of linear functionals on cylindric functions $Cyl_{[s,t]}$ on the path space by the formula
\begin{multline*}
\nu^x_{s,t}(\phi^f_{t_0, \dots, t_{k+1}}) \\
:= \int_{\mathbb{R}^k} f(x, y_1, \dots ,y_{k+1}) \nu_{t_1 - t_0}(x, dy_1) \nu_{t_2 - t_1}(y_1, dy_2) \dots \nu_{t_{k+1} - t_k}(y_k, dy_{k+1}) ,
\end{multline*}
where $k \ge 0$, $s = t_0 < t_1 < \cdots < t_k < t_{k+1} = t$, $x \in \mathbb{R}$, $f : \dot{\mathbb{R}}^{k+2} \to \mathbb{C}$ is a bounded Borel function, and
\[
\phi^f_{t_0, \dots, t_{k+1}}(y(\cdot)) := f(y(t_0), \dots , y(t_{k+1})) .
\]
Then by continuity, $\nu^x_{s,t}$ can be extended to a unique bounded linear functional on $C(\dot{\mathbb{R}}^{[s,t]})$, and consequently there exists a Borel measure $\mu^x_{s,t}$ on the path space $\dot{\mathbb{R}}^{[s,t]}$ such that
\begin{align*}
\nu^x_{s,t}(F) &= \int F(y(\cdot)) \mu^x_{s,t}(dy(\cdot)) \quad (F \in C(\dot{\mathbb{R}}^{[s,t]})) , \\
(T_t f)(x) &= \int f(y(t))) \mu^x_{s,t}(dy(\cdot)) .
\end{align*}

It is an open question whether the function $\psi(t, x)$ of Theorem \ref{th:Schrod} can be obtained by this Borel measure $\mu^x_{s,t}$ from the complex random variables $Y(t,x)$ or not.

\subsection*{Acknowledgement}
I thank my former graduate student Zolt\'an Feller for his valuable work with computational experiments and for his useful questions and remarks.


\end{document}